\DeclareMathOperator*{\slim}{s-lim}
\newcommand{\N}{{\mathbb{N}}} 
\newcommand{\R}{{\mathbb{R}}} 
\newcommand{\C}{{\mathbb{C}}} 
\newcommand{\vB}{{\mathcal B}}
\newcommand{\vD}{{\mathcal D}}
\newcommand{\vE}{{\mathcal E}} 
\newcommand{\vF}{{\mathcal F}}
\newcommand{\vH}{{\mathcal H}}
\newcommand{\vL}{{\mathcal L}}
\newcommand{\vO}{{\mathcal O}}
\newcommand{\scD}{{\mathscr D}}
\newcommand{\scF}{{\mathscr F}}
\theoremstyle{plain}
\newtheorem{thm}{Theorem}[section]
\newtheorem{proposition}[thm]{Proposition}
\newtheorem{lemma}[thm]{Lemma} 
\newtheorem{corollary}[thm]{Corollary}
\theoremstyle{definition}
\newtheorem{cond}[thm]{Condition}
\newtheorem{remark}[thm]{Remark}
\newtheorem*{remarks*}{Remarks}
\newtheorem*{remark*}{Remark}
\numberwithin{equation}{section}
\title{Stationary scattering theory for repulsive Hamiltonians}
\author{Kyohei Itakura}
\thanks{Research Organization of Science and Technology, 
Ritsumeikan University\ 1-1-1 Noji-higashi, Kusatsu, shiga, 525-8577, Japan.\ 
E-mail: kitakura@gst.ritsumei.ac.jp}
\date{}
\begin{document}

\begin{abstract}
In the present paper 
we discuss stationary scattering theory for repulsive Hamiltonians.
We show the existence and completeness of stationary wave operators 
and unitarity of the scattering matrix. 
Moreover we completely characterize asymptotic behaviors of generalized eigenfunctions 
with minimal growth in terms of the scattering matrix. 
In our argument the radiation condition bounds for limiting resolvents play major roles. 
In fact, it is used to construct the stationary wave operators. 
\end{abstract}

\allowdisplaybreaks
\maketitle

\noindent
{\it Keywords:}
repulsive Hamiltonians,
stationary scattering theory, 
scattering matrix.

\tableofcontents


\section{Introduction}

The purpose of this paper is to establish stationary scattering theory for 
repulsive Hamiltonians given by 
\begin{equation*}
H_\alpha = \tfrac12p^2-\tfrac12|x|^\alpha+q(x) \quad\text{on}\ L^2(\R^d),
\end{equation*}
where $0<\alpha<2$, $p=-i\partial$ and $q$ is a perturbation. 
We assume that $q$ is a real-valued function belonging to $L^2_{\rm loc}(\R^d)$ 
and decays at infinity at a rate dependent on the parameter $\alpha$.
We will give a more precise condition on $q$ in Section~\ref{sec:Setting}.

It is well-known that the spectrum of $H_\alpha$ with short-range perturbation $q$ 
is purely absolutely continuous.
Moreover by \cite{BCHM} it was proved that the wave operators 
\begin{equation*}
W^\pm=\slim_{t\to\pm\infty}e^{itH_\alpha}e^{-itH_{\alpha,0}}
\end{equation*}
exist and are complete. 
Here we denote the \emph{free} repulsive Hamiltonian as $H_{\alpha,0}$, that is
\begin{equation*}
H_{\alpha,0} := \tfrac12p^2-\tfrac12|x|^\alpha.
\end{equation*}
However stationary scattering theory is not yet established 
for the repulsive Hamiltonians 
even for short-range perturbations, as far as the author knows.
In the paper, we deal with several topics on stationary scattering theory. 
The first one is existence and completeness of the stationary wave operators $\vF^\pm$. 
To construct the stationary wave operators 
we use the radiation condition bounds stated as Corollary~\ref{cor:rc-real} below  
and employ the schemes of \cite{GY, Iso, IS2}. 
The second one is unitarity of the scattering matrix $S(\cdot)$. 
The last one is asymptotic behaviors of \emph{minimal} generalized eigenfunctions 
(see Theorem~\ref{thm:characterize-B^*-ef}). 
Here we say minimal in the sense that the growth order at infinity is minimal. 
We obtain a characterization of them by outgoing/incoming spherical waves.  
We note these topics are not dealt with in \cite{BCHM}. 
In this sense our results are new, and this is a novelty of the paper. 

In this paper we discuss only the case of $0<\alpha<2$, 
although the case of $\alpha=2$ is included in \cite{BCHM}. 
When $\alpha=2$ 
the classical particles scatter with exponential order, 
although when $0<\alpha<2$ they scatter with polynomial order. 
Then our \emph{escape function}, which plays an important role in the study 
of repulsive Hamiltonians (cf. \eqref{eq:escape-function}), 
is defined by using a logarithmic function, see \cite{Ita1, Ita2}. 
Thus when $\alpha=2$,  
we need a more stronger result than Corollary~\ref{cor:rc-real} 
to construct the stationary wave operators.

In Section~\ref{sec:Setting} we state our setting and results. 
We introduce our basic setting, for example, definition of escape function 
and an assumption on $q$, 
and spectral theory, which is a refinement of the results of our previous papers, 
and state our main results on stationary scattering theory.
We are going to give the proofs in later sections. 
In Sections~\ref{sec:Wave operator} we discuss properties of stationary wave operators, 
and in Section~\ref{sec:Wave matrix and scattering matrix} 
we investigate a characterization of minimal generalized eigenfunctions.


\section{Setting and results}\label{sec:Setting}

\subsection{Basic setting}\label{sec:Basic setting}

We choose a smooth cut-off function $\chi$ which satisfies 
\begin{equation}\label{eq:chidef}
\chi(s)=
\begin{cases}
1 \quad \text{for }\ s \le 1, \\
0 \quad \text{for }\ s \ge 2,
\end{cases}
\quad
\tfrac{\mathrm d}{\mathrm ds}\chi=\chi'\le 0.
\end{equation}
Throughout the paper, we fix the function $\chi$.
By using the function $\chi$ we introduce the function $r\in C^\infty(\R^d)$, 
which is a modification of  $|x|$ on a neighborhood of the origin, by
\begin{equation*}
r=r(x) = \chi(|x|) + \bigl( 1-\chi(|x|) \bigr)|x|.
\end{equation*}
Now our escape function $f\in C^\infty(\R^d)$ is given as follows: 
\begin{equation}\label{eq:escape-function}
f(x) = 
\dfrac{r^{1-\alpha/2}-1}{1-\alpha/2} + 1. 
\end{equation}
Such a choice of $f$ is based on the scattering order of classical particles 
subject to the repulsive electric field (see \cite{Ita1,Ita2}). 
We note that $f\ge1$ on $\R^d$.

Throughout the paper we assume the following condition.
\begin{cond}\label{cond:short-range}
The perturbation $q$ is a real-valued function and belongs to $C^1(\R^d)$.
Moreover there exist $\rho, C_k>0$ for $k=0, 1$ such that 
\begin{equation*}
|\partial^k q| \le C_kf^{-1-k-\rho}.
\end{equation*}
\end{cond}
Under Condition~\ref{cond:short-range} 
it follows by the Faris--Lavine theorem (cf. \cite{RS}) that
$H_\alpha$ is essentially self-adjoint on $C_0^\infty(\R^d)$.
We denote its self-adjoint extension by the same letter for simplicity.

Next we introduce the Agmon--H\"ormander spaces associated with the function $f$.
We let $F(S)$ be the sharp characteristic function 
of a general subset $S\subseteq \R^d$,
and set
\begin{equation*}
F_n = F\bigl(\bigl\{ x\in \R^d\,\big|\,2^n\le f(x)<2^{n+1}\bigr\} \bigr)
\quad \text{for }n\in\N_0.
\end{equation*}
Then define the Agmon--H\"ormander spaces $\vB$, $\vB^*$ and $\vB^*_0$ as 
\begin{align*}
\begin{split}
\vB &=
\Bigl\{\psi\in L^2_{\mathrm{loc}}(\R^d)\,\Big|\, 
\|\psi\|_\vB:=\sum_{n\in\N_0}2^{n/2}\|F_n\psi\|_{L^2}<\infty\Bigr\},
\\
\vB^* &=
\Bigl\{\psi\in L^2_{\mathrm{loc}}(\R^d)\,\Big|\, 
\|\psi\|_{\vB^*}\!:=\sup_{n\in\N_0}2^{-n/2}\|F_n\psi\|_{L^2}<\infty\Bigr\},
\\
\vB^*_0 &=
\Bigl\{\psi\in \vB^*\,\Big|\, \lim_{n\to\infty}2^{-n/2}\|F_n\psi\|_{L^2}=0 \Bigr\}.
\end{split}
\end{align*}
Note that $\vB$ is a Banach space with respect to the norm $\|\cdot\|_\vB$, 
and $\vB^*$ and $\vB_0^*$ are Banach spaces 
with respect to the same norm $\|\cdot\|_{\vB^*}$.
Note also that, 
if we introduce the \emph{$f$-weighted $L^2$-spaces of order $s\in\R$} as 
$$L^2_s=f^{-s}L^2,$$
for any $s>1/2$ the following inclusion relations hold:
\begin{align*}
L^2_s \subsetneq \vB \subsetneq L^2_{1/2} \subsetneq L^2 
\subsetneq L^2_{-1/2} \subsetneq \vB_0^* \subsetneq \vB^* \subsetneq L^2_{-s}.
\end{align*}

We introduce differential operators $\partial^r$ and $\partial^f$ as
\begin{equation*}
\partial^r=(\partial r)\partial, \quad
\partial^f=(\partial f)\partial=r^{-\alpha/2}(\partial r)\partial,
\end{equation*}
respectively, and then we define a `conjugate operator' $A$ as
\begin{equation}\label{eq:defA}
A=\mathop{\mathrm{Re}}p^f=\tfrac12\left((p^f)^*+p^f \right), \quad p^f=-i\partial^f.
\end{equation}
We note that (cf. \cite{Ita1}) 
$A$ is self-adjoint operator with domain $\{\psi\in L^2\,|\,A\psi\in L^2\}$, 
and has expressions
$$
A=p^f-\tfrac{i}2(\Delta f)=(p^f)^*+\tfrac{i}2(\Delta f).
$$
We also note $A$ is different from the standard conjugate operator, cf. \cite{BCHM, Mo}. 
In fact, the commutator $[H, iA]$ has only weak positivity decaying at infinity, 
see \cite{Ita1}. 
We denote the resolvent of $H_\alpha$ for $z\in \C\setminus\R$ by $R(z)$, i.e., 
\begin{equation*}
R(z)=(H_\alpha-z)^{-1}.
\end{equation*}

Let us introduce the function $\theta\in C^\infty(\R^d)$ by 
\begin{equation}\label{eq:theta-eikonal}
\theta(\lambda, x)=r^{1+\alpha/2}/(1+\alpha/2)+\lambda f.
\end{equation}
Note that the function $\theta$ is an approximate solution to the eikonal equation
\begin{equation*}
\tfrac12|\tfrac{\partial\theta}{\partial x}(\lambda, x)|^2-\tfrac12|x|^\alpha+q-\lambda=0, 
\end{equation*}
in the sense that for $2/3<\alpha<2$ the quantity of the left-hand side tends to $0$ 
faster than $f^{-1}$ as $f\to\infty$.
More precisely, the function $\theta$ satisfies 
\begin{equation}\label{eq:2003151426}
\tfrac12|\tfrac{\partial\theta}{\partial x}(\lambda, x)|^2-\tfrac12|x|^\alpha+q-\lambda 
= \vO(f^{-1-\min\{\rho, (3\alpha/2-1)/(1-\alpha/2)\}}).
\end{equation}

\begin{remark}
(1) When $\alpha=1,2$, 
the functions $\theta_1, \theta_2\in C^\infty(\R^d)$ satisfying 
\begin{align*}
\theta_1(\lambda, x) &= \tfrac23(r+2\lambda)^{3/2} \quad \text{for } r>1-2\lambda, 
\\
\theta_2(\lambda, x) &= \tfrac12r(r^2+2\lambda)^{1/2}+\lambda\log\{\tfrac{r}2+\tfrac12(r^2+2\lambda)^{1/2}\} \quad \text{for } r^2>1-2\lambda,  
\end{align*} 
respectively, 
solve the eikonal equation of the \emph{free} case for sufficiently large $r$. 
In particular, their leading terms coincide with $\theta(\lambda, x)$.

(2) We constructed $\theta$ by the following simple approximation. 
\begin{align*}
\theta(\lambda, x) 
\sim 
\int(r^\alpha+2\lambda)^{1/2}(\partial r) \,{\rm d}x 
\sim 
\int \bigl(r^{\alpha/2}+\lambda r^{-\alpha/2}+\vO(r^{-3\alpha/2})\bigr)(\partial r) \,{\rm d}x.
\end{align*}
Thus by adding some lower order terms to $\theta(\lambda, x)$, 
we can improve the order of the right-hand side of \eqref{eq:2003151426}.
\end{remark}

In the following we assume that $2/3<\alpha<2$. 
However our results hold for all $\alpha\in (0, 2)$ by retaking $\theta$ appropriately, 
as stated in the above remark.


\subsection{Results of our previous papers}\label{sec:Results of previous papers}

Before stating our main results, let us recall several results of \cite{Ita1, Ita2}. 
Because we use the radiation condition bounds for limiting resolvents 
of the forms of \cite{Ita2} and 
Sommerfeld's uniqueness theorem to construct 
the stationary wave operators and the scattering matrix, respectively. 
However, as for the radiation condition bounds we need slightly stronger estimates 
than those of \cite{Ita2} even for the short--range case. 
Thus we need to refine the results except for Rellich's theorem 
and limiting absorption principle bounds. 

In this section we state improved results. 
We prove only Theorem~\ref{thm:rc-complex} stated below, since
if we get Theorem~\ref{thm:rc-complex} other result can be proved 
by quite similar way to \cite{AIIS, Ita2}. 
The proof of Theorem~\ref{thm:rc-complex} is given 
in Appendix~\ref{Appen:proof-of-thm}. 

The first result is the absence of $\vB_0^*$-eigenfunctions, 
which is called Rellich's theorem. 
Since the condition on $q$ of this paper is stronger than that of \cite{Ita1}, 
we have the following theorem. 
\begin{thm}\label{thm:rellich}
Let $\lambda\in\R$. Suppose a function $\phi\in\vB_0^*$ satisfies 
\begin{equation*}
(H_\alpha-\lambda)\phi=0
\end{equation*}
in the distributional sense. 
Then $\phi=0$ on $\R^d$.
\end{thm}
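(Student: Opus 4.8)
The plan is to follow the classical Rellich--Vekua strategy adapted to the repulsive setting, combining a Carleman-type/ODE argument in the radial variable with unique continuation. First I would reduce to a statement about the behavior of $\phi$ near infinity: since $\phi\in\vB_0^*$ solves $(H_\alpha-\lambda)\phi=0$ in the distributional sense, elliptic regularity gives $\phi\in H^2_{\mathrm{loc}}(\R^d)$, so it suffices to show that $\phi$ has compact support, after which the unique continuation property for the Schr\"odinger operator $\tfrac12 p^2$ with locally bounded potential (here $-\tfrac12|x|^\alpha+q-\lambda\in L^2_{\mathrm{loc}}$, indeed locally bounded away from the origin and handled near $0$ by the $L^2_{\mathrm{loc}}$ hypothesis) forces $\phi\equiv 0$ on all of $\R^d$.

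The core step is the decay estimate showing $\phi$ must actually vanish for large $f$. Here I would use the commutator identity with the conjugate operator $A$ from \eqref{eq:defA} and the escape function $f$: one tests the equation $(H_\alpha-\lambda)\phi=0$ against functions of the form $g(f)A\phi$ or $g(f)\phi$ with carefully chosen weights $g$, and integrates by parts. Because $f$ grows like $r^{1-\alpha/2}$ and the commutator $[H_\alpha,iA]$ carries the leading positive term coming from the repulsive potential $-\tfrac12|x|^\alpha$ (which is the mechanism making $A$ a good conjugate operator despite only weak positivity, cf.\ \cite{Ita1}), one obtains a differential inequality in $n$ for the dyadic norms $\|F_n\phi\|_{L^2}$ and $\|F_n A\phi\|_{L^2}$. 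The hypothesis $\phi\in\vB_0^*$, i.e.\ $2^{-n/2}\|F_n\phi\|_{L^2}\to 0$, serves as the boundary condition that feeds into this inequality and upgrades the $\vB_0^*$ decay to super-polynomial (in fact eventually vanishing) decay. I would structure this exactly as in \cite{Ita1}, but with the stronger Condition~\ref{cond:short-range} on $q$ (namely $|\partial^k q|\le C_k f^{-1-k-\rho}$) the error terms from $q$ and its derivative are now summably small against the main positive term, so no extra smallness of $\rho$ or restriction on the energy window is needed, and the conclusion holds for every $\lambda\in\R$.

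The main obstacle I expect is the weakness of the positivity of $[H_\alpha,iA]$: unlike the standard Mourre setting, the commutator positivity here decays at infinity (at the rate dictated by $f$), so the differential inequality for $\|F_n\phi\|$ is delicate and one cannot simply run a Gronwall argument — one must track the competition between the (decaying) gain from the commutator, the loss from the first-order term $A\phi$, and the contributions of $\Delta f$ and the potential error $\vO(f^{-1-\rho})$. Carrying this out requires choosing the weight $g$ and an auxiliary regularization (a second cutoff at scale $2^N$ with $N\to\infty$, plus an $\varepsilon$-regularization $f^{\varepsilon}$ to justify the integrations by parts, removed at the end) so that all boundary terms vanish in the limit; this is the technical heart and is where I would invoke the arguments of \cite{Ita1,Ita2} essentially verbatim. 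Once compact support of $\phi$ is established, the finish via unique continuation is routine, so I would keep that part brief.
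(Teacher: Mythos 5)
The paper does not actually prove Theorem~\ref{thm:rellich}: it invokes the argument of \cite{Ita1} verbatim, noting only that Condition~\ref{cond:short-range} is stronger than the hypothesis there, so the conclusion holds for every $\lambda\in\R$ with no restriction on the energy. Your outline --- commutator/Carleman estimates built from the conjugate operator $A$ and weights in the escape function $f$, using the $\vB_0^*$ hypothesis as the boundary input to upgrade the a priori bound to super-exponential (and ultimately vanishing) decay near infinity, followed by unique continuation from compact support --- is precisely the strategy of \cite{Ita1}, so your proposal takes essentially the same route as the paper.
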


We set 
\begin{equation}\label{eq:f-ell}
\ell_{jk}=|\partial f|^2\delta_{jk}-(\partial_jf)(\partial_kf),
\end{equation}
where $\delta_{jk}$ is Kronecker's delta.
For any compact interval $I\subseteq\R$ we introduce
\begin{equation*}
I_\pm=\{z=\lambda\pm i\Gamma\,|\,\lambda\in I,\ \Gamma\in(0,1)\},
\end{equation*}
respectively.
We also use the notation $\langle T\rangle_\psi=\langle \psi, T\psi\rangle$ 
for a general linear operator $T$.
The following limiting absorption principle bounds (LAP bounds) 
for the resolvent $R(z)$ also hold in the setting of this paper.
\begin{thm}\label{thm:lap-bound}
There exists $C>0$ such that for any $\psi\in\vB$ and $z\in I_\pm$
\begin{equation*}
\|R(z)\psi\|_{\vB^*} 
+ \|p^fR(z)\psi\|_{\vB^*} 
+ \langle p_jf^{-1}\ell_{jk}p_k\rangle_{R(z)\psi}^{1/2} 
+ \|r^{-\alpha} p^2R(z)\psi\|_{\vB^*} 
\le 
C\|\psi\|_\vB.
\end{equation*}
\end{thm}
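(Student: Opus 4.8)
The plan is to prove the uniform bound by a Mourre-type commutator argument adapted to the conjugate operator $A$ of \eqref{eq:defA}, exactly as was done in \cite{Ita1, Ita2}, but tracking the extra derivative terms. Write $u=R(z)\psi$ for $z=\lambda\pm i\Gamma\in I_\pm$. The starting point is the identity $(H_\alpha-z)u=\psi$, which we pair against well-chosen first-order differential operators built from the escape function $f$. Concretely, I would introduce the ``radial'' vector field $p^f=-i\partial^f=-ir^{-\alpha/2}(\partial r)\partial$ and use the commutator identity $[H_\alpha, iA]$, which by the computations of \cite{Ita1} has the form $[H_\alpha,iA]= p_j f^{-1}\ell_{jk}p_k + (\text{radial main term}) + (\text{lower order})$, where $\ell_{jk}$ is as in \eqref{eq:f-ell}; the point is that this commutator is weakly positive modulo terms decaying at infinity, controlled via Condition~\ref{cond:short-range} and the fact that $f\ge 1$.

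\smallskip
The key steps, in order, are the following. First, establish the a priori membership $u\in L^2_{-s}$ for $s>1/2$ together with $p^f u, r^{-\alpha}p^2 u\in L^2_{-s}$ locally uniformly in $z$; this is standard elliptic regularity combined with the resolvent bound $\|u\|_{L^2}\le \Gamma^{-1}\|\psi\|_{L^2}$, and lets all subsequent integrations by parts be justified. Second, derive the basic quadratic estimate: multiply $(H_\alpha-z)u=\psi$ by the localized conjugate operator $g(f)A$ (with $g$ a slowly varying weight interpolating between the $\vB$ and $\vB^*$ scales, as in the Agmon--H\"ormander / Mourre scheme), take imaginary parts, and absorb the commutator $[H_\alpha, g(f)A]$; the weak positivity of the main term yields, after a Schwarz estimate on the right-hand side, control of $\langle p_j f^{-1}\ell_{jk}p_k\rangle_u$ and of the radial-derivative quantity by $C\|\psi\|_\vB\,\|u\|_{\vB^*}$ plus lower-order errors. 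Third, run the standard ``boundedness from boundedness of the commutator'' bootstrap (dividing the $\vB^*$ scale dyadically using the $F_n$) to upgrade this to the clean bound $\|u\|_{\vB^*}\le C\|\psi\|_\vB$, the first term on the left. Fourth, feed this back: the estimate on $\langle p_j f^{-1}\ell_{jk}p_k\rangle_u$ is then $\le C\|\psi\|_\vB^2$ directly, and since $|\partial f|^2=r^{-\alpha}$ is bounded below on the support of large $f$, combining the radial bound $\|p^f u\|_{\vB^*}$ with the tangential bound from $\ell_{jk}$ gives control of the full gradient, hence via the equation $p^2 u = 2(|x|^\alpha/2 - q + z)u$ one reads off $\|r^{-\alpha}p^2 u\|_{\vB^*}\le C\|\psi\|_\vB$.

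\smallskip
The main obstacle is the third step: the conjugate operator $A$ here is \emph{not} the standard generator of dilations, and $[H_\alpha,iA]$ is only weakly (not strictly) positive, with positivity that decays like $f^{-1}$ at infinity, so the usual Mourre estimate does not apply off the shelf. One must exploit the genuinely positive ``tangential'' piece $p_j f^{-1}\ell_{jk}p_k$ together with the special algebraic structure of the repulsive potential $-\tfrac12|x|^\alpha$ — this is where the choice \eqref{eq:escape-function} of $f$ as the classical escape rate is essential — and carefully handle the second-order term $r^{-\alpha}p^2 u$, which is why the fourth summand appears in the statement at all and must be carried through every commutator computation. Since the present hypotheses on $q$ (Condition~\ref{cond:short-range}) are strictly stronger than in \cite{Ita1}, all the error terms generated by commuting through $q$ and $\partial q$ are manifestly $\vB$-to-$\vB^*$ bounded, so no new difficulty arises there; as the excerpt indicates, the proof is ``quite similar'' to \cite{AIIS, Ita2} and I would largely cite those sources for the routine dyadic bookkeeping once the commutator identity and weak positivity are in place.
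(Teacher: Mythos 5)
The paper offers no proof of Theorem~\ref{thm:lap-bound} at all: it is imported wholesale from \cite{Ita1, Ita2}, the only remark being that Condition~\ref{cond:short-range} is stronger than the hypotheses there, so the bounds carry over (the author explicitly re-proves only Theorem~\ref{thm:rc-complex}, everything else being ``quite similar to \cite{AIIS, Ita2}''). Your decision to defer the dyadic bookkeeping to those references is therefore consistent with the paper, and your commutator framework --- pairing $(H_\alpha-z)u=\psi$ against weighted versions of $A=\mathop{\mathrm{Re}}p^f$, exploiting the tangential positivity of $p_jf^{-1}\ell_{jk}p_k$, and reading off the fourth summand from the equation --- is the right one. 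Two small slips: for $u=R(z)\psi$ the relevant identity is $p^2u=2\psi+(|x|^\alpha-2q+2z)u$, not the homogeneous equation you wrote (harmless, since $\vB\subseteq\vB^*$), and $|\partial f|^2=r^{-\alpha}$ is \emph{not} bounded below at infinity, though this does not matter because the fourth term needs only the equation, not a full gradient bound.

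The genuine gap is at your step 3, exactly where you flag ``the main obstacle'' and then wave it away. Because $[H_\alpha,iA]$ is only weakly positive, with positivity decaying like $f^{-1}$, the quadratic estimate you can extract has the shape $\|u\|_{\vB^*}^2\le C\|\psi\|_\vB\|u\|_{\vB^*}+(\text{errors})$, where the errors involve $u$ on bounded regions and near the degenerate directions and cannot be absorbed into the left-hand side; no dyadic bootstrap closes this by itself. The way the cited proofs (\cite{Ita2, AIIS, IS1}) actually close the argument is by contradiction: assuming the uniform bound fails, normalize $\|R(z_n)\psi_n\|_{\vB^*}=1$ with $\|\psi_n\|_\vB\to0$, use the commutator estimate to extract a weak limit $\phi\in\vB^*$ with $(H_\alpha-\lambda)\phi=0$ and, crucially, $\phi\in\vB_0^*$, and then invoke Rellich's theorem (Theorem~\ref{thm:rellich}) to conclude $\phi=0$, contradicting the normalization. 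Your proposal never mentions Rellich's theorem, which is precisely the indispensable input substituting for the strict Mourre positivity you correctly observe is unavailable --- and is why it is stated immediately before the LAP bounds in Section~\ref{sec:Results of previous papers}.
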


Using the function $\chi$ of \eqref{eq:chidef},
we define \emph{smooth cut-off functions} 
$\chi_m,\bar\chi_m,\chi_{m,n}\in C^\infty(\R^d)$ for $m,n\in\N_0$ as 
\begin{equation}\label{eq:chimn}
\chi_m=\chi(f/2^m), \quad \bar \chi_m=1-\chi_m, \quad \chi_{m,n}=\bar\chi_m\chi_n.
\end{equation}
We choose and fix large $m\in\N$ so that 
on $\mathop{\mathrm{supp}}\bar\chi_m$
\begin{equation*}
2\mathop{\mathrm{Re}}z-2q_0+r^\alpha>1, \quad r=|x|,
\end{equation*}
where $z\in I_\pm$ and 
\begin{equation*}
q_0 = q 
+ \tfrac18r^\alpha(\Delta f)^2 
+ \tfrac{\alpha}4r^{\alpha/2-1}(\Delta f) 
+ \tfrac14r^\alpha(\partial^f\Delta f) 
- \tfrac\alpha4r^{-2}.
\end{equation*}
We set an asymptotic complex phase $a$ by 
\begin{equation}\label{eq:improved-phase}
a = a_z 
= \bar\chi_m\Bigl[ r^{-\alpha/2}\sqrt{2(z-q_0)+r^\alpha} 
\pm \tfrac{i\alpha}2r^{-\alpha/2-1} 
\mp \tfrac{i\alpha}2\tfrac{z-q_0}{2(z-q_0)+r^\alpha}r^{-\alpha/2-1} \Bigr] 
\end{equation}
for $z\in I_\pm$.
Here we choose the branch of square root as $\mathop{\mathrm{Re}}\sqrt{s}>0$ 
for $s \in \C\setminus(-\infty,0]$.
Let
\begin{equation*}
\beta_c=\min\{\rho+1/(1-\alpha/2), 1+\alpha/(1-\alpha/2)\}.
\end{equation*}
Then we have refined radiation condition bounds for complex spectral parameters. 
\begin{thm}\label{thm:rc-complex}
For all $\beta\in[0,\beta_c)$, there exists $C>0$ such that 
for any $\psi\in f^{-\beta}\vB$ and $z\in I_\pm$
\begin{equation}\label{eq:2001091149}
\|f^\beta(A\mp a)R(z)\psi\|_{\vB^*} 
+ \langle p_jf^{2\beta-1}\ell_{jk}p_k\rangle_{R(z)\psi}^{1/2} 
\le 
C\|f^\beta\psi\|_\vB,
\end{equation}
respectively.
\end{thm}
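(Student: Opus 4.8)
The plan is to run a Mourre-type commutator argument with the operator $A\mp a$ playing the role of the "radiation observable", using the already established LAP bounds of Theorem~\ref{thm:lap-bound} as input, and bootstrapping in the weight exponent $\beta$. I would fix $z\in I_+$ (the $I_-$ case being symmetric), write $u=R(z)\psi$ with $\psi\in f^{-\beta}\vB$, and set $B=A-a$. The starting point is the identity $(H_\alpha-z)u=\psi$, from which one derives a commutator identity for the quadratic form $\langle f^{2\beta}B\,u, (H_\alpha-z)u\rangle$ and its adjoint; subtracting and taking imaginary/real parts produces, schematically,
\begin{equation*}
2\,\mathrm{Re}\,\langle f^{2\beta}Bu,\psi\rangle
= \langle u, [H_\alpha-z\,,\, i f^{2\beta}B^*B + \cdots]\,u\rangle + (\text{lower order}),
\end{equation*}
so that the left side is controlled by $\|f^\beta\psi\|_\vB\|f^\beta Bu\|_{\vB^*}$, while the commutator on the right, after using that $a$ is an approximate square root of $2(z-q_0)+r^\alpha$ (which is exactly what makes $B$ nearly factor the "outgoing" part of $H_\alpha-z$), yields the positive terms $\|f^\beta Bu\|_{\vB^*}^2$ and $\langle p_j f^{2\beta-1}\ell_{jk}p_k\rangle_u$ up to errors. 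The key computation is that $\frac12|a|^2 \approx \mathrm{Re}\,z - q_0 + \tfrac12 r^\alpha$ modulo terms of order $f^{-\beta_c}$, which is precisely where the threshold $\beta_c=\min\{\rho+1/(1-\alpha/2),\,1+\alpha/(1-\alpha/2)\}$ enters: $\rho$ from Condition~\ref{cond:short-range} on $q$, and $\alpha/(1-\alpha/2)$ from the eikonal-type remainder in \eqref{eq:2003151426}.

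I would organize the argument as an induction on $\beta$ over the range $[0,\beta_c)$ in steps of size $\delta<1$ (determined by the decay rates): the case $\beta=0$ is essentially Theorem~\ref{thm:lap-bound} together with the first-order radiation bound from \cite{Ita2}, and at each stage one assumes the bound with exponent $\beta-\delta$ and upgrades to $\beta$. Concretely, one inserts the regularizing factors $\chi_n$ and a smoothed weight $f^{2\beta}\chi_n$, computes $[H_\alpha, i f^{2\beta}\chi_n B^*B]$, discards the genuinely positive principal term, and bounds every error term either by the inductive hypothesis (terms with one power of $B$ and weight $f^{\beta-\delta}$, which by Cauchy--Schwarz absorb into the positive terms) or by the LAP bounds (terms with no $B$). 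A crucial structural point is that commuting $A$ (hence $p^f$, hence $r^{-\alpha/2}p^r$) past $f^{2\beta}$ costs a factor $f^{2\beta-1}\cdot f^{-\alpha/2}\cdot(\text{stuff})$, and the tangential Laplacian $\ell_{jk}$ appears naturally from $|p^r|^2 - |p|^2$-type rearrangements; these are the terms that survive with a good sign on the right-hand side. After establishing the estimate for smoothed/truncated weights uniformly in $n$, I would let $n\to\infty$ by monotone convergence to remove the cutoff.

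The main obstacle, as usual in radiation-condition arguments, is the \emph{bookkeeping of the error terms and the sign of the principal commutator}: one must verify that $[H_\alpha, if^{2\beta}B^*B]$ decomposes as (nonnegative principal part) $+$ (errors that are $o(1)$ relative to the output norms), and this requires the phase $a$ of \eqref{eq:improved-phase} to be accurate enough — including the two correction terms $\pm\tfrac{i\alpha}2 r^{-\alpha/2-1}$ and $\mp\tfrac{i\alpha}2\tfrac{z-q_0}{2(z-q_0)+r^\alpha}r^{-\alpha/2-1}$, which are there precisely to kill the subleading imaginary errors coming from $\Delta f$ and from differentiating the square root. Checking that these cancellations push the error order all the way down to $f^{-1-\beta_c+2\beta}$ (so that, for $\beta<\beta_c$, it is integrable against the $\vB$--$\vB^*$ pairing) is the heart of the matter; the positivity of the tangential term $p_j f^{2\beta-1}\ell_{jk}p_k$ is automatic since $(\ell_{jk})$ is a nonnegative matrix, and the radial positive term $\|f^\beta Bu\|_{\vB^*}^2$ comes from the Mourre-type lower bound on the weighted commutator. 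I would defer the full error analysis to Appendix~\ref{Appen:proof-of-thm}, where it can be carried out in the same spirit as \cite{AIIS, Ita2} but with the extra weight $f^\beta$ tracked throughout.
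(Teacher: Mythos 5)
Your overall strategy --- a weighted virial/commutator estimate built around the radiation observable $A\mp a$, with regularized weights, a limiting argument to remove the cutoff, the base case $\beta=0$ reduced to Theorem~\ref{thm:lap-bound}, and the threshold $\beta_c$ traced to the accuracy of the phase $a$ and the decay $\rho$ of $q$ --- is in the same family as the paper's proof, and several of your structural remarks (the role of the two imaginary correction terms in \eqref{eq:improved-phase}, the automatic positivity of the tangential term) are on target. However, your central identity is mis-stated in a way that matters. You propose to compute $\langle u,[H_\alpha-z,\,if^{2\beta}B^{*}B+\cdots]u\rangle$ with $B=A\mp a$ and equate it with $2\,\mathrm{Re}\,\langle f^{2\beta}Bu,\psi\rangle$; but expanding that commutator against $(H_\alpha-z)u=\psi$ produces the pairing $\mathrm{Im}\,\langle f^{2\beta}B^{*}Bu,\psi\rangle$, which forces a derivative of order $B$ onto $\psi$ (or a second factor of $B$ onto $u$ paired against $\psi$), and neither is controlled by $\|f^{\beta}\psi\|_{\vB}$. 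The paper avoids this entirely: it uses the \emph{first-order} form $\mathrm{Im}\bigl((A\mp a)^{*}\Theta(H_\alpha-z)\bigr)$, whose pairing with $u=R(z)\psi$ gives exactly $\mathrm{Im}\,\langle \Theta(A\mp a)u,\psi\rangle\le\|\Theta^{1/2}(A\mp a)u\|_{\vB^*}\|\theta^{\beta}\psi\|_{\vB}$, and extracts the quadratic-in-$B$ positivity not from a commutator with $B^{*}B$ but from the exact factorization
\begin{equation*}
H_\alpha-z=\tfrac12(A\pm a)r^{\alpha}(A\mp a)+\tfrac12 p_jr^{\alpha}\ell_{jk}p_k+q_2
\quad\text{on }\mathop{\mathrm{supp}}\bar\chi_{m+1},
\end{equation*}
with $q_2=\vO(r^{-1}f^{-1-\min\{\rho,(3\alpha/2)/(1-\alpha/2)\}})$ (Lemma~\ref{lem:1912161547}). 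Differentiating the bounded weight $\Theta=\bar\chi_{m+2}\theta^{2\beta}$ and using $\pm\mathop{\mathrm{Im}}a\ge\tfrac{\alpha}{2}r^{-\alpha/2-1}-Cr^{-3\alpha/2-1}$ then yields the positive terms $(A\mp a)^{*}\theta'\theta^{2\beta-1}(A\mp a)$ and $p_jf^{-1}\Theta\ell_{jk}p_k$ (Lemma~\ref{lem:rc-bound-key-est}). This factorization is the engine of the proof; you gesture at it (``$B$ nearly factors the outgoing part'') but the identity you actually write down does not use it and would fail at the first Cauchy--Schwarz step.

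Two further points. First, the induction on $\beta$ in steps of $\delta$ is unnecessary: the paper's estimate is a single-shot bound valid for every $\beta\in(0,\beta_c)$, with the absorption of $\|f^{\beta}(A\mp a)R(z)\psi\|_{\vB^*}$ from the right-hand side justified by the a priori qualitative finiteness of this quantity for each fixed $z$ with $\mathop{\mathrm{Im}}z\ne0$ (obtained by commuting $R(z)$ with powers of $f$), not by an inductive hypothesis at level $\beta-\delta$. If you insist on the bootstrap you must still supply this a priori finiteness at level $\beta$, so the induction buys you nothing. Second, the passage from the bounded weight to the genuine weight $f^{2\beta-1}$ is done by restricting to dyadic shells $\{2^{\nu}\le f<2^{\nu+1}\}$, taking the supremum in $\nu$ for the $\vB^{*}$-norm, and monotone convergence for the $\ell_{jk}$-term; your ``let $n\to\infty$ by monotone convergence'' covers only the second half of this step.
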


Let us state several applications of Theorem~\ref{thm:lap-bound} 
and Theorem~\ref{thm:rc-complex}. 
The first one is the limiting absorption principle. 
\begin{corollary}\label{cor:lap-refined}
For any $s>1/2$ and $\omega\in(0,\beta_c)\cap(0,\min\{s-1/2,1\}]$ there exists $C>0$ 
such that for any $z,z'\in I_+$ or $z,z'\in I_-$
\begin{equation*}
\begin{split}
\|R(z)-R(z')\|_{\vL(\vH_s,\vH_{-s})} &\le C|z-z'|^\omega, \\
\|r^{-\alpha/2}p\bigl\{R(z)-R(z')\bigr\}\|_{\vL(\vH_s,\vH_{-s})} &\le C|z-z'|^\omega.
\end{split}
\end{equation*}
In particular, for any $\lambda\in\R$, there exist uniform limits 
\begin{equation*}
\lim_{(0,1)\ni \Gamma\searrow 0}R(\lambda\pm i\Gamma), 
\quad 
\lim_{(0,1)\ni \Gamma\searrow 0}r^{-\alpha/2}pR(\lambda\pm i\Gamma),
\end{equation*}
in the norm topology of $\vL(\vH_s,\vH_{-s})$.
We denote these limits by 
$R(\lambda\pm i0), r^{-\alpha/2}pR(\lambda\pm i0)$, respectively.
These limiting resolvents belong to $\vL(\vB,\vB^*)$.
\end{corollary}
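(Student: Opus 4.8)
The plan is to deduce Corollary~\ref{cor:lap-refined} from the limiting absorption principle bounds of Theorem~\ref{thm:lap-bound} and the refined radiation condition bounds of Theorem~\ref{thm:rc-complex}, along the lines of \cite{AIIS,Ita2}. By the embeddings $\vH_s=L^2_s\subseteq\vB$ and $\vB^*\subseteq L^2_{-s}=\vH_{-s}$ for $s>1/2$, Theorem~\ref{thm:lap-bound} (its $p^f$- and angular terms together controlling $\|r^{-\alpha/2}p R(z)\psi\|_{\vB^*}$) gives the uniform bound
\begin{equation*}
\sup_{z\in I_\pm}\Bigl(\|R(z)\|_{\vL(\vH_s,\vH_{-s})}+\|r^{-\alpha/2}p R(z)\|_{\vL(\vH_s,\vH_{-s})}\Bigr)\le C_s.
\end{equation*}
This alone yields only continuity of $z\mapsto R(z)$: writing $z=\lambda+i\Gamma$ one has $\partial_zR(z)=R(z)^2$, but the crude bound $\|R(z)^2\|_{\vL(\vH_s,\vH_{-s})}\le C\Gamma^{-1}$ integrates merely to a logarithmic modulus of continuity. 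The point of Theorem~\ref{thm:rc-complex} is to supply the missing gain.

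\smallskip
\noindent\emph{Key estimate.} For $\omega\in(0,\beta_c)\cap(0,\min\{s-1/2,1\}]$ there is $C>0$, independent of $z=\lambda+i\Gamma\in I_\pm$, with
\begin{equation*}
\|R(z)^2\|_{\vL(\vH_s,\vH_{-s})}+\|r^{-\alpha/2}p R(z)^2\|_{\vL(\vH_s,\vH_{-s})}\le C\,\Gamma^{\omega-1}.
\end{equation*}
To prove this I would write $R(z)^2\psi=R(z)\bigl(R(z)\psi\bigr)$ and test it against $\phi$, using that for $z\in I_+$ the inner factor $u:=R(z)\psi$ is \emph{outgoing} while the adjoint of the outer factor produces an \emph{incoming} object, in the quantitative form of Theorem~\ref{thm:rc-complex}: for $\beta<\beta_c$ with $\beta<s-1/2$ (so that $f^{\beta}\psi\in\vB$, with $\|f^\beta\psi\|_\vB\le C\|\psi\|_{\vH_s}$) one has $\|f^{\beta}(A-a_z)u\|_{\vB^*}\le C\|\psi\|_{\vH_s}$, and symmetrically for the incoming resolvent. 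Moving the self-adjoint conjugate operator $A$ across the pairing and exploiting the explicit relation between $a_z$ and $\overline{a_{\bar z}}$ isolates the oscillatory cancellation between the phases $e^{\pm i\theta}$; the extra $f^{-\beta}$-decay carried by $(A\mp a_z)R(z)$ converts this cancellation into a gain of $\Gamma^{\beta}$ over the naive $\Gamma^{-1}$, which produces the stated exponent. The ceiling $\omega\le1$ is the usual one for Hölder continuity obtained this way, and $\omega\le s-1/2$ is exactly the weight that may be fed into Theorem~\ref{thm:rc-complex}. This is the technical core, and I expect it to be the main obstacle; the computation runs parallel to the propagation estimate with $A$ in \cite{Ita2} (cf. also Appendix~\ref{Appen:proof-of-thm}).

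\smallskip
Granting the Key estimate, the remaining steps are soft. Since $\partial_\Gamma R(\lambda+i\Gamma)=iR(\lambda+i\Gamma)^2$ and $\partial_\lambda R(\lambda+i\Gamma)=R(\lambda+i\Gamma)^2$, integration along a vertical segment gives $\|R(\lambda+i\Gamma)-R(\lambda+i\Gamma')\|_{\vL(\vH_s,\vH_{-s})}\le C\int_{\Gamma'}^{\Gamma}t^{\omega-1}\,\mathrm{d}t\le C|\Gamma-\Gamma'|^{\omega}$; for $z,z'\in I_+$ with $|z-z'|<1$ one joins them through the two points at the common height $h=\max\{\operatorname{Im}z,\operatorname{Im}z',|\operatorname{Re}z-\operatorname{Re}z'|\}\in(0,1)$, estimating the horizontal part (where the height is $\ge|\operatorname{Re}z-\operatorname{Re}z'|$) by $|\operatorname{Re}z-\operatorname{Re}z'|\,Ch^{\omega-1}\le C|z-z'|^{\omega}$ and the two short vertical parts as above, while for $|z-z'|\ge1$ the uniform bound already suffices. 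The identical argument applies to $r^{-\alpha/2}p R(z)$. Hence, as $(0,1)\ni\Gamma\searrow0$ the families $R(\lambda\pm i\Gamma)$ and $r^{-\alpha/2}p R(\lambda\pm i\Gamma)$ are Cauchy, uniformly for $\lambda$ in any compact interval, in the norm of $\vL(\vH_s,\vH_{-s})$; their limits $R(\lambda\pm i0)$, $r^{-\alpha/2}p R(\lambda\pm i0)$ therefore exist there for every $\lambda\in\R$, and the displayed Hölder estimates pass to the limit.

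\smallskip
Finally, these limits lie in $\vL(\vB,\vB^*)$: for $\psi\in\vH_s$, $R(\lambda+i\Gamma)\psi\to R(\lambda+i0)\psi$ in $\vH_{-s}$, hence in $L^2_{\mathrm{loc}}$, so each $\|F_n\,\cdot\,\|_{L^2}$ is continuous and $\|\,\cdot\,\|_{\vB^*}=\sup_n2^{-n/2}\|F_n\,\cdot\,\|_{L^2}$ is lower semicontinuous along this convergence; together with $\|R(\lambda+i\Gamma)\psi\|_{\vB^*}\le C\|\psi\|_{\vB}$ from Theorem~\ref{thm:lap-bound} this gives $\|R(\lambda+i0)\psi\|_{\vB^*}\le C\|\psi\|_{\vB}$. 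Since $C_0^\infty(\R^d)\subseteq\vH_s$ is dense in $\vB$, the operator $R(\lambda+i0)$ extends uniquely to an element of $\vL(\vB,\vB^*)$, consistently with its $\vH_s\to\vH_{-s}$ action because $\vB^*\subseteq\vH_{-s}$; the same reasoning, using the $p^f$- and angular bounds, places $r^{-\alpha/2}p R(\lambda+i0)$ in $\vL(\vB,\vB^*)$ as well. The cases $I_-$ are symmetric.
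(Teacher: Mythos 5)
Your proposal is correct and follows essentially the route the paper intends: the paper itself does not write out a proof of this corollary, stating only that it follows from Theorem~\ref{thm:lap-bound} and Theorem~\ref{thm:rc-complex} in the manner of \cite{AIIS, Ita2}, and those references proceed exactly as you do --- a bound $\|R(z)^2\|_{\vL(\vH_s,\vH_{-s})}+\|r^{-\alpha/2}pR(z)^2\|_{\vL(\vH_s,\vH_{-s})}\le C\Gamma^{\omega-1}$ extracted from pairing the outgoing and incoming radiation-condition bounds, followed by integration of $\partial_zR(z)=R(z)^2$ along paths in $I_\pm$ and a density/lower-semicontinuity argument for membership in $\vL(\vB,\vB^*)$. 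The one caveat is that your ``Key estimate,'' which you correctly identify as the technical core, is sketched rather than carried out (including the endpoint case $\omega=\min\{s-1/2,1\}$ and the variant with $r^{-\alpha/2}p$ attached), but the mechanism you describe is the right one.
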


The second one is the radiation condition bounds for real spectral parameters, 
which follows from Theorem~\ref{thm:rc-complex} and Corollary~\ref{cor:lap-refined}.
We set 
\begin{equation*}
a_\pm := \lim_{I_\pm\ni z\to \lambda\pm i0} a_z, \quad \lambda\in I.
\end{equation*}

\begin{corollary}\label{cor:rc-real}
Let $\lambda\in I$. 
Then for all $\beta\in[0,\beta_c)$, there exists $C>0$ such that 
for any $\psi\in f^{-\beta}\vB$ 
\begin{equation*}
\|f^\beta(A\mp a_\pm)R(\lambda\pm i0)\psi\|_{\vB^*} 
+ \langle p_jf^{2\beta-1}\ell_{jk}p_k\rangle_{R(\lambda\pm i0)\psi}^{1/2} 
\le 
C\|f^\beta\psi\|_\vB,
\end{equation*}
respectively.
\end{corollary}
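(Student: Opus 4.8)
The plan is the standard limiting absorption argument: one lets $z\to\lambda\pm i0$ in the complex-parameter bound of Theorem~\ref{thm:rc-complex}, using the norm convergences furnished by Corollary~\ref{cor:lap-refined} to control the passage to the limit. It is convenient that all the convergences below turn out to be \emph{strong} in $L^2_{\mathrm{loc}}(\R^d)$, which makes the transfer of the inequality transparent. I would first reduce to $\psi$ in a dense subspace, taking $C_0^\infty(\R^d)$: it is dense in $f^{-\beta}\vB$, since for $\psi\in f^{-\beta}\vB$ one has $\chi_N\psi\to\psi$ in $f^{-\beta}\vB$ as $N\to\infty$ (the difference is supported where $f\ge 2^N$, so its $f^{-\beta}\vB$-norm is a tail of $\|f^\beta\psi\|_\vB$), while each $\chi_N\psi$ is a compactly supported $L^2$-function, on whose fixed support the $f^{-\beta}\vB$-norm is equivalent to the $L^2$-norm and which is therefore approximated in $f^{-\beta}\vB$ by its mollifications.

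So fix $\psi\in C_0^\infty(\R^d)$ and a sequence $z_n\in I_\pm$ with $z_n\to\lambda$. Since $C_0^\infty(\R^d)\subseteq L^2_s$ for all $s$, Corollary~\ref{cor:lap-refined} gives, for every $s>1/2$,
\begin{equation*}
R(z_n)\psi\to R(\lambda\pm i0)\psi\quad\text{and}\quad r^{-\alpha/2}pR(z_n)\psi\to r^{-\alpha/2}pR(\lambda\pm i0)\psi\quad\text{in }L^2_{-s};
\end{equation*}
in particular $R(z_n)\psi\to R(\lambda\pm i0)\psi$ and $pR(z_n)\psi\to pR(\lambda\pm i0)\psi$ strongly in $L^2_{\mathrm{loc}}(\R^d)$. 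Two elementary remarks then dispose of the operator $f^\beta(A\mp a_{z_n})$. First, from the explicit formula \eqref{eq:improved-phase} and the ellipticity $2\mathop{\mathrm{Re}}z-2q_0+r^\alpha>1$ on $\mathop{\mathrm{supp}}\bar\chi_m$, the phases $a_{z_n}$ are uniformly bounded on $\R^d$ and $a_{z_n}\to a_\pm$ uniformly (indeed $|a_{z_n}-a_\pm|\le C|z_n-\lambda|$), whence $a_{z_n}R(z_n)\psi\to a_\pm R(\lambda\pm i0)\psi$ in $L^2_{\mathrm{loc}}$. Second, writing $A=p^f-\tfrac{i}{2}(\Delta f)=(\partial r)\bigl(r^{-\alpha/2}p\bigr)-\tfrac{i}{2}(\Delta f)$ with the bounded coefficients $\partial r,\Delta f$, the convergence of $r^{-\alpha/2}pR(z_n)\psi$ gives $AR(z_n)\psi\to AR(\lambda\pm i0)\psi$ in $L^2_{\mathrm{loc}}$, where $AR(\lambda\pm i0)\psi:=(\partial r)\bigl(r^{-\alpha/2}pR(\lambda\pm i0)\psi\bigr)-\tfrac{i}{2}(\Delta f)R(\lambda\pm i0)\psi$ is well defined (again by Corollary~\ref{cor:lap-refined}, applied to $r^{-\alpha/2}pR(\lambda\pm i0)$). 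Altogether $f^\beta(A\mp a_{z_n})R(z_n)\psi\to f^\beta(A\mp a_\pm)R(\lambda\pm i0)\psi$ strongly in $L^2_{\mathrm{loc}}(\R^d)$.

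Now transfer the uniform bound of Theorem~\ref{thm:rc-complex},
\begin{equation*}
\|f^\beta(A\mp a_{z_n})R(z_n)\psi\|_{\vB^*}+\langle p_jf^{2\beta-1}\ell_{jk}p_k\rangle_{R(z_n)\psi}^{1/2}\le C\|f^\beta\psi\|_\vB,
\end{equation*}
to the limit. For each $k\in\N_0$ the shell $\{2^k\le f<2^{k+1}\}$ is bounded, so $F_k\bigl(f^\beta(A\mp a_{z_n})R(z_n)\psi\bigr)\to F_k\bigl(f^\beta(A\mp a_\pm)R(\lambda\pm i0)\psi\bigr)$ in $L^2$; passing to the limit in $2^{-k/2}\|F_k\bigl(f^\beta(A\mp a_{z_n})R(z_n)\psi\bigr)\|_{L^2}\le C\|f^\beta\psi\|_\vB$ and then taking the supremum over $k$ gives the first half of the bound (in particular $f^\beta(A\mp a_\pm)R(\lambda\pm i0)\psi\in\vB^*$). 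For the second half, $\ell_{jk}$ is pointwise positive semidefinite (it is $|\partial f|^2$ times the projection orthogonal to $\partial f$) and $f^{2\beta-1}>0$, so the integrand of $\langle p_jf^{2\beta-1}\ell_{jk}p_k\rangle_{\,\cdot\,}$ is nonnegative, and for every compact $K$, using the strong $L^2(K)$-convergence of $pR(z_n)\psi$ and the boundedness of $f^{2\beta-1}\ell_{jk}$ on $K$,
\begin{multline*}
\int_K f^{2\beta-1}\ell_{jk}(p_kR(\lambda\pm i0)\psi)\overline{(p_jR(\lambda\pm i0)\psi)}\,\mathrm dx
=\lim_{n\to\infty}\int_K f^{2\beta-1}\ell_{jk}(p_kR(z_n)\psi)\overline{(p_jR(z_n)\psi)}\,\mathrm dx\\
\le\liminf_{n\to\infty}\langle p_jf^{2\beta-1}\ell_{jk}p_k\rangle_{R(z_n)\psi}\le C^2\|f^\beta\psi\|_\vB^2;
\end{multline*}
letting $K\uparrow\R^d$ finishes the bound for $\psi\in C_0^\infty(\R^d)$. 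Finally, for general $\psi\in f^{-\beta}\vB$ choose $\psi_n\in C_0^\infty(\R^d)$ with $f^\beta\psi_n\to f^\beta\psi$ in $\vB$; since $f^{-\beta}\vB\hookrightarrow\vB$ and, by Corollary~\ref{cor:lap-refined}, $R(\lambda\pm i0)$ and $r^{-\alpha/2}pR(\lambda\pm i0)$ lie in $\vL(\vB,\vB^*)$, we get $R(\lambda\pm i0)\psi_n\to R(\lambda\pm i0)\psi$ and $r^{-\alpha/2}pR(\lambda\pm i0)\psi_n\to r^{-\alpha/2}pR(\lambda\pm i0)\psi$ in $\vB^*$, hence $f^\beta(A\mp a_\pm)R(\lambda\pm i0)\psi_n\to f^\beta(A\mp a_\pm)R(\lambda\pm i0)\psi$ strongly in $L^2_{\mathrm{loc}}(\R^d)$; rerunning the two steps above with $n\to\infty$, and using $\|f^\beta\psi_n\|_\vB\to\|f^\beta\psi\|_\vB$, transfers the bound to $\psi$.

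I do not expect a genuine obstacle: granted Theorem~\ref{thm:rc-complex} and Corollary~\ref{cor:lap-refined}, this is routine. The points requiring care are the bookkeeping of topologies --- checking that every factor of $f^\beta(A\mp a_z)$ (the weight $f^\beta$, the first-order part $(\partial r)r^{-\alpha/2}p$, and the bounded zeroth-order parts $\Delta f$ and $a_z$) carries enough boundedness and local control for the whole expression to converge strongly in $L^2_{\mathrm{loc}}$ --- and the degeneracy of $\ell_{jk}$ on $\{\partial r=0\}$, which is harmless since only its positivity is used; one should also verify, once, the elementary bound $|a_{z_n}-a_\pm|\le C|z_n-\lambda|$ from \eqref{eq:improved-phase} and the lower bound on $2\mathop{\mathrm{Re}}z-2q_0+r^\alpha$.
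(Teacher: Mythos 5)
Your argument is correct and is exactly the route the paper intends: the paper does not write out a proof of Corollary~\ref{cor:rc-real} but states that it ``follows from Theorem~\ref{thm:rc-complex} and Corollary~\ref{cor:lap-refined}'' in the manner of \cite{AIIS, Ita2}, which is precisely the limiting argument you carry out (uniform bound for $z\in I_\pm$, strong $L^2_{\mathrm{loc}}$ convergence of $R(z)\psi$, $r^{-\alpha/2}pR(z)\psi$ and $a_z$ as $z\to\lambda\pm i0$, a shell-by-shell Fatou argument for the $\vB^*$-norm and the quadratic form, and density of $C_0^\infty$ in $f^{-\beta}\vB$). No gaps; the bookkeeping you flag (boundedness of $\partial r$, $\Delta f$, $a_z$ and positivity of $\ell_{jk}$) is all that is needed.
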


The last one is Sommerfeld's uniqueness theorem.
\begin{corollary}\label{cor:uniqueness-result-refine}
Let $\lambda\in\R, \phi\in f^\beta\vB^*$ 
and $\psi\in f^{-\beta}\vB$ with $\beta\in [0,\beta_c)$.
Then $\phi=R(\lambda\pm i0)\psi$ hold if and only if both of the following conditions hold:
\begin{itemize}
\item[(i)] $(H_\alpha-\lambda)\phi=\psi$ in the distributional sense.
\item[(ii)] $(A\mp a_\pm)\phi\in f^{-\beta}\vB_0^*$, 
\end{itemize}
respectively. 
\end{corollary}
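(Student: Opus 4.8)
The plan is to prove Sommerfeld's uniqueness theorem (Corollary~\ref{cor:uniqueness-result-refine}) as a standard consequence of the radiation condition bounds together with Rellich's theorem, following the scheme of \cite{AIIS, Ita2}. The ``only if'' direction is immediate: if $\phi=R(\lambda\pm i0)\psi$, then (i) holds by the definition of the limiting resolvent and by passing to the limit in $(H_\alpha-z)R(z)\psi=\psi$, while (ii) is precisely the content of Corollary~\ref{cor:rc-real}, together with the elementary observation that the $\vB^*$-bound in that corollary can be upgraded to a $\vB_0^*$-statement by a density argument (approximate $\psi\in f^{-\beta}\vB$ by elements of, say, $f^{-\beta'}\vB$ with $\beta'>\beta$, for which $f^\beta(A\mp a_\pm)R(\lambda\pm i0)$ maps into $\vB^*$ with a gain, hence into $\vB_0^*$, and use that $f^\beta\vB_0^*$ is closed in $f^\beta\vB^*$).

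For the ``if'' direction, suppose $\phi\in f^\beta\vB^*$ satisfies (i) and (ii). First I would set $\tilde\phi := \phi - R(\lambda\pm i0)\psi$. By the ``only if'' direction already established, $\tilde\phi\in f^\beta\vB^*$ satisfies $(H_\alpha-\lambda)\tilde\phi=0$ in the distributional sense and $(A\mp a_\pm)\tilde\phi\in f^{-\beta}\vB_0^*$. The goal is then to show $\tilde\phi=0$, and by Rellich's theorem (Theorem~\ref{thm:rellich}) it suffices to prove $\tilde\phi\in\vB_0^*$. This is the crux: one must deduce decay of $\tilde\phi$ itself from decay of $(A\mp a_\pm)\tilde\phi$, i.e. from the radiation condition. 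The mechanism is a commutator/positivity argument: one computes, for a suitable regularization, the quantity $\mathrm{Im}\langle \tilde\phi, (A\mp a_\pm)\tilde\phi\rangle$ against weighted cutoffs $\bar\chi_m\chi_n$ of \eqref{eq:chimn}, uses $(H_\alpha-\lambda)\tilde\phi=0$ to control the resulting commutator $[H_\alpha,iA]$-type terms, and exploits that $\mathrm{Re}\, a_\pm$ is (to leading order) $r^{-\alpha/2}\sqrt{2(\lambda-q_0)+r^\alpha}>0$ on $\mathrm{supp}\,\bar\chi_m$, so that the ``elliptic'' part gives a positive contribution $\gtrsim \|F_n\tilde\phi\|_{L^2}^2$ bounding $\tilde\phi$ in terms of the already-controlled radiation quantity. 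Letting $n\to\infty$ and using $(A\mp a_\pm)\tilde\phi\in f^{-\beta}\vB_0^*\subseteq\vB_0^*$ forces $2^{-n}\|F_n\tilde\phi\|_{L^2}^2\to0$.

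The main obstacle is exactly this last step: obtaining the quantitative positivity that converts the radiation condition into $\vB_0^*$-membership. Two technical points require care. First, $q_0$ differs from $q$ by terms of size $r^\alpha(\Delta f)^2$ etc.\ that are bounded (by the definition of $f$, $\Delta f = \vO(r^{-\alpha/2-1})$, so $r^\alpha(\Delta f)^2=\vO(r^{-2})$, and similarly for the other terms), hence $a_\pm$ is a genuine bounded perturbation of the naive phase and the branch choice $\mathrm{Re}\sqrt{\,\cdot\,}>0$ keeps $\mathrm{Re}\,a_\pm$ positive and bounded below by a positive multiple of $r^{-\alpha/2}\cdot r^{\alpha/2}=1$ on $\mathrm{supp}\,\bar\chi_m$ — so the leading elliptic term does not degenerate. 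Second, the error terms generated by commuting $H_\alpha$ past the cutoffs and past $A\mp a_\pm$ must be absorbed; here one uses the LAP bounds (Theorem~\ref{thm:lap-bound}) applied to $\tilde\phi$ — strictly, one first needs $\tilde\phi\in\vB^*$ with $p^f\tilde\phi\in\vB^*$ and $r^{-\alpha}p^2\tilde\phi\in\vB^*$, which follows from (i), elliptic regularity and the a~priori bound $\tilde\phi\in f^\beta\vB^*$ — together with the eikonal estimate \eqref{eq:2003151426} to see that $A\mp a_\pm$ is, modulo $\vO(f^{-1-\epsilon})$ and lower-order differential operators, the correct ``outgoing/incoming'' derivative. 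Since the analogous computation is carried out in detail in \cite{AIIS, Ita2} in closely parallel settings, I would present the argument in the reduced form above and refer to those works for the routine commutator bookkeeping, indicating only the modifications forced by the logarithmic escape function $f$ of \eqref{eq:escape-function} and the weight $f^{2\beta-1}$ appearing in Corollary~\ref{cor:rc-real}.
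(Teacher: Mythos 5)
Your proposal is correct and follows essentially the route the paper itself intends: the paper does not prove Corollary~\ref{cor:uniqueness-result-refine} but explicitly defers it to the scheme of \cite{AIIS, Ita2}, and your argument (the ``only if'' part from Corollary~\ref{cor:rc-real} plus a density upgrade from $\vB^*$ to $\vB_0^*$, and the ``if'' part by setting $\tilde\phi=\phi-R(\lambda\pm i0)\psi$, using the positivity of $\mathop{\mathrm{Re}}a_\pm$ in a $\langle A\chi_n'\rangle_{\tilde\phi}$-type commutator identity to conclude $\tilde\phi\in\vB_0^*$, then invoking Theorem~\ref{thm:rellich}) is exactly that standard argument. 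The technical points you flag — elliptic regularity to justify the form identity, and the pairing of $(A\mp a_\pm)\tilde\phi\in f^{-\beta}\vB_0^*$ against $\tilde\phi\in f^{\beta}\vB^*$ so the weights cancel — are the right ones, and are handled in the cited references.
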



\subsection{Main results}\label{sec:Main results}

We introduce the operators $\scF^\pm(\lambda, f)$ 
which map from $C_0^\infty(\R^d)$ to $L^2(\mathbb S^{d-1})$ by 
\begin{equation*}
\begin{split}
&\quad\hspace{-3mm}
\bigl( \scF^\pm(\lambda, f)\psi\bigr)(\omega) 
\\&= 
\tfrac1{\sqrt{2\pi}}
\exp\Bigl\{\pm\tfrac{\pi i}4\Bigl(\tfrac{d-\alpha/2-3}{1+\alpha/2}\Bigr)\!\Bigr\}
r^{(d+\alpha/2-1)/2}e^{\mp i\theta(\lambda, \cdot\omega)}
\bigl(R(\lambda\pm i0)\psi\bigr)(\pm\cdot\omega), 
\end{split}
\end{equation*}
respectively, 
where $\psi\in C_0^\infty(\R^d)$ and $\omega\in \mathbb S^{d-1}$.
\begin{thm}\label{thm:existence-of-wave-matrix}
$\scF^\pm(\lambda, f)$ are bounded operators 
from $C_0^\infty(\R^d)\ (\subseteq\vB)$ to $L^2(\mathbb S^{d-1})$, 
and for any $\psi\in C_0^\infty(\R^d)$ there exist limits 
\begin{equation}\label{eq:2003161503}
\lim_{f\to\infty}\scF^\pm(\lambda, f)\psi 
\equiv 
\scF^\pm(\lambda)\psi \quad \text{in }\ L^2(\mathbb S^{d-1}). 
\end{equation}
Moreover it holds that 
\begin{equation}\label{eq:2003161512}
\tfrac1{2\pi i}\langle R(\lambda+i0)\psi - R(\lambda-i0)\psi, \psi\rangle 
= 
\|\scF^\pm(\lambda)\psi\|^2_{L^2(\mathbb S^{d-1})}.
\end{equation}
\end{thm}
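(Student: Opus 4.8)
The plan is to treat $\scF^{+}$ and $\scF^{-}$ in parallel. Write $u=u_{\pm}=R(\lambda\pm i0)\psi$ for $\psi\in C_{0}^{\infty}(\R^{d})$, and let $R_{0}$ bound the radius of $\operatorname{supp}\psi$; since $(H_{\alpha}-\lambda)u=\psi$ with $\psi\in C_{0}^{\infty}$, interior elliptic regularity makes $u$ of class $H^{2}_{\mathrm{loc}}$ (indeed $C^{1}$) on $\{|x|>R_{0}\}$, so for $r>R_{0}$ the amplitude
\[
v_{\pm}(r,\omega):=\tfrac{c_{\pm}}{\sqrt{2\pi}}\,r^{(d+\alpha/2-1)/2}\,e^{\mp i\theta(\lambda,r\omega)}\,u_{\pm}(\pm r\omega),\qquad c_{\pm}:=\exp\Bigl\{\pm\tfrac{\pi i}{4}\tfrac{d-\alpha/2-3}{1+\alpha/2}\Bigr\},
\]
is a genuine $L^{2}(\mathbb S^{d-1})$-valued function of $r$, and $\scF^{\pm}(\lambda,f)\psi=v_{\pm}(r,\cdot)$ whenever $f=f(r)$; note $|c_{\pm}|=1$. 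Boundedness of $\scF^{\pm}(\lambda,f)$ on $C_{0}^{\infty}\subseteq\vB$, and — what matters for \eqref{eq:2003161503} — a bound uniform in $f$, will come from Theorem~\ref{thm:lap-bound} and Corollary~\ref{cor:rc-real}, which control $u$, $p^{f}u$, the tangential quantities $\langle p_{j}f^{-1}\ell_{jk}p_{k}\rangle^{1/2}_{u}$ and $r^{-\alpha}p^{2}u$ in $\vB^{*}$; against the dyadic shell structure of $\vB^{*}$ these translate into $\sup_{r>R_{0}}\|v_{\pm}(r,\cdot)\|_{L^{2}(\mathbb S^{d-1})}\le C\|\psi\|_{\vB}$ once the radial analysis below is in place.

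The heart of \eqref{eq:2003161503} is a transport equation for the amplitude. Using $\partial_{r}[u_{\pm}(\pm r\omega)]=(\partial^{r}u_{\pm})(\pm r\omega)$, $\partial^{r}=r^{\alpha/2}\partial^{f}=ir^{\alpha/2}p^{f}$, $p^{f}=A+\tfrac{i}{2}(\Delta f)$, $r^{\alpha/2}(\Delta f)=(d-\alpha/2-1)r^{-1}$, and the radiation condition $Au_{\pm}=\pm a_{\pm}u_{\pm}+h_{\pm}$ with $h_{\pm}:=(A\mp a_{\pm})u_{\pm}$, a direct computation gives
\[
\partial_{r}v_{\pm}=m_{\pm}(r,\omega)\,v_{\pm}+k_{\pm}(r,\omega),\qquad k_{\pm}(r,\omega):=\tfrac{\pm i\,c_{\pm}}{\sqrt{2\pi}}\,r^{\alpha/2+(d+\alpha/2-1)/2}e^{\mp i\theta}\,h_{\pm}(\pm r\omega),
\]
where — owing precisely to the choices of $\theta$ and $a_{\pm}$ and the approximate eikonal relation \eqref{eq:2003151426} (here $r^{\alpha/2}a_{\pm}=\sqrt{r^{\alpha}+2(\lambda-q_{0})}+\vO(r^{-1})$ while $\partial_{r}\theta=r^{\alpha/2}+\lambda r^{-\alpha/2}$) — all $r^{\alpha/2}$-sized terms in the coefficient cancel, leaving $m_{\pm}(r,\omega)=\vO(|q|\,r^{-\alpha/2}+r^{-3\alpha/2}+r^{-1-\alpha})$; the standing assumptions $2/3<\alpha<2$ and $\rho>0$ are exactly what make $r\mapsto\sup_{\omega}|m_{\pm}(r,\omega)|$ integrable on $[R_{0},\infty)$. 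Moreover $\psi\in C_{0}^{\infty}\subseteq f^{-\beta}\vB$ for every $\beta\in[0,\beta_{c})$, so Corollary~\ref{cor:uniqueness-result-refine} yields $h_{\pm}\in f^{-\beta}\vB^{*}_{0}$ for all such $\beta$. The main obstacle is to show that the source $k_{\pm}$ is small enough that $v_{\pm}(r,\cdot)$ is Cauchy in $L^{2}(\mathbb S^{d-1})$ as $r\to\infty$: a bare estimate of $\int\|k_{\pm}(r,\cdot)\|_{L^{2}(\mathbb S^{d-1})}\,dr$ by Cauchy--Schwarz over dyadic shells turns out to be exactly borderline, so the fine structure must be exploited.

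To close this I would write $h_{\pm}=-ir^{-\alpha/2}\partial^{r}u_{\pm}-\tfrac{i}{2}(\Delta f)u_{\pm}\mp a_{\pm}u_{\pm}$ and integrate by parts in $r$ inside $\int_{R_{1}}^{R_{2}}k_{\pm}\,dr$ to move the radial derivative off $u_{\pm}$ onto the slowly varying weight (producing only boundary terms $\propto v_{\pm}(R_{i},\cdot)$, to be absorbed in a bootstrap), re-express the remaining angular second-order content by means of the bound $\|r^{-\alpha}p^{2}R(\lambda\pm i0)\psi\|_{\vB^{*}}\le C\|\psi\|_{\vB}$ of Theorem~\ref{thm:lap-bound} and the tangential radiation bound $\langle p_{j}f^{2\beta-1}\ell_{jk}p_{k}\rangle^{1/2}_{R(\lambda\pm i0)\psi}\le C\|f^{\beta}\psi\|_{\vB}$ of Corollary~\ref{cor:rc-real}, and combine these with the $\vB^{*}_{0}$-decay of $h_{\pm}$ and the dyadic shells, choosing $\beta$ close to $\beta_{c}$. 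The outcome is that $r\mapsto v_{\pm}(r,\cdot)$ has bounded variation with values in $L^{2}(\mathbb S^{d-1})$ and is Cauchy there, so the limit \eqref{eq:2003161503} exists (alternatively one may argue by weak-$*$ compactness of $\{v_{\pm}(r,\cdot)\}$ together with uniqueness of weak limits and convergence of the norms, which follows from the flux identity below). The delicate bookkeeping across dyadic shells, where the estimates are genuinely on the margin, is where I expect the bulk of the work.

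For \eqref{eq:2003161512} (with the convention $\langle f,g\rangle=\int_{\R^{d}}f\bar g$, under which the left-hand side is manifestly nonnegative), note first that $\tfrac{1}{2\pi i}\langle R(\lambda+i0)\psi-R(\lambda-i0)\psi,\psi\rangle=\tfrac{1}{\pi}\operatorname{Im}\langle u_{+},\psi\rangle=-\tfrac{1}{\pi}\operatorname{Im}\langle u_{-},\psi\rangle$. Apply Green's formula to $(H_{\alpha}-\lambda)u_{\pm}=\psi$ on $\{|x|<R\}$ with $R>R_{0}$: the potential and spectral terms are real and drop out, leaving $\operatorname{Im}\langle u_{\pm},\psi\rangle=\tfrac{1}{2}\operatorname{Im}\int_{|x|=R}\overline{u_{\pm}}\,\partial_{r}u_{\pm}\,dS$, which is therefore independent of $R>R_{0}$. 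Inserting $\partial^{r}u_{\pm}=\pm ir^{\alpha/2}a_{\pm}u_{\pm}-\tfrac{d-\alpha/2-1}{2r}u_{\pm}+ir^{\alpha/2}h_{\pm}$ and using $r^{\alpha/2}\operatorname{Re}(a_{\pm})=\sqrt{r^{\alpha}+2(\lambda-q_{0})}=r^{\alpha/2}(1+\vO(r^{-\alpha}))$ gives $\operatorname{Im}(\overline{u_{\pm}}\,\partial_{r}u_{\pm})=\pm\sqrt{r^{\alpha}+2(\lambda-q_{0})}\,|u_{\pm}|^{2}+r^{\alpha/2}\operatorname{Re}(\overline{u_{\pm}}h_{\pm})$, whence the flux integral equals $\pm2\pi\|v_{\pm}(R,\cdot)\|^{2}_{L^{2}(\mathbb S^{d-1})}\bigl(1+\vO(R^{-\alpha})\bigr)$ plus a cross term bounded by $CR^{(d+\alpha/2-1)/2}\|v_{\pm}(R,\cdot)\|_{L^{2}(\mathbb S^{d-1})}\|h_{\pm}(\pm R\,\cdot)\|_{L^{2}(\mathbb S^{d-1})}$. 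Since $h_{\pm}\in\vB^{*}_{0}$ and $u_{\pm}$ is smooth, the cross term tends to $0$ along a sequence $R_{j}\to\infty$ chosen in the $j$-th dyadic $f$-shell so as to realize the shell average, while $\|v_{\pm}(R_{j},\cdot)\|^{2}\to\|\scF^{\pm}(\lambda)\psi\|^{2}_{L^{2}(\mathbb S^{d-1})}$ by \eqref{eq:2003161503}; letting $R=R_{j}\to\infty$ yields $\operatorname{Im}\langle u_{\pm},\psi\rangle=\pm\pi\|\scF^{\pm}(\lambda)\psi\|^{2}_{L^{2}(\mathbb S^{d-1})}$. Combining the two signs with $\operatorname{Im}\langle u_{-},\psi\rangle=-\operatorname{Im}\langle u_{+},\psi\rangle$ gives $\|\scF^{+}(\lambda)\psi\|=\|\scF^{-}(\lambda)\psi\|$ and establishes \eqref{eq:2003161512}.
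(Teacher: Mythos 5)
Your overall architecture (a radial transport equation for the amplitude $v_\pm$ plus a flux identity from Green's formula) is reasonable, and the flux computation for \eqref{eq:2003161512} is essentially the paper's Lemma~\ref{lem:2003031452}. But there is a genuine gap at the central step, the existence of the strong limit \eqref{eq:2003161503}. You reduce it to showing that the source $k_\pm\propto r^{\alpha/2+(d+\alpha/2-1)/2}e^{\mp i\theta}(A\mp a_\pm)u_\pm$ is integrable in $r$ with values in $L^2(\mathbb S^{d-1})$, i.e.\ that $v_\pm(r,\cdot)$ has bounded variation. You yourself note that the Cauchy--Schwarz estimate over dyadic shells is ``exactly borderline'', and the proposed repair (integrate by parts in $r$, bootstrap, invoke the $f^{2\beta-1}$-weighted tangential bound) is not carried out and cannot be expected to yield an $L^1$-in-$r$ bound: Corollary~\ref{cor:rc-real} only gives $(A\mp a_\pm)u_\pm\in f^{-\beta}\vB_0^*$, an $L^2$-type dyadic bound with no summability across shells as $\beta\to\beta_c$, and the paper never proves (nor needs) bounded variation. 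The paper circumvents exactly this obstruction by a different decomposition: it first proves \emph{weak} convergence of $\scF^\pm(\lambda,f)\psi$ by pairing against a fixed $v\in C^\infty(\mathbb S^{d-1})$ and applying Green's formula to $u=r^{-(d+\alpha/2-1)/2}e^{i\theta}v$, so that the troublesome angular second-order terms land on the smooth test function $v$ (Lemma~\ref{lem:2003051708}, identity \eqref{eq:2003301812}); it then upgrades to strong convergence via the quantitative Cauchy estimate of Proposition~\ref{prop:2003201558} applied to $v=\scF(\lambda,f_m)\psi$ along a subsequence realizing shell averages (Lemma~\ref{lem:2023181624}), combined with convergence of the norms coming from the flux identity. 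Your fallback remark (``weak-$*$ compactness plus uniqueness of weak limits plus convergence of the norms'') points in that direction, but supplies neither the identification of the weak limit independently of the subsequence nor the equality $\lim_f\|\scF(\lambda,f)\psi\|=\|\scF(\lambda)\psi\|$ (weak convergence only gives $\le$), and these are precisely the nontrivial steps.

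There is also a circularity in your treatment of \eqref{eq:2003161512}: you use $\|v_\pm(R_j,\cdot)\|^2\to\|\scF^\pm(\lambda)\psi\|^2$ and justify it ``by \eqref{eq:2003161503}'', i.e.\ by the strong convergence whose proof is the missing piece. In the paper the logical order is reversed: the flux identity (Lemma~\ref{lem:2003031452}, proved via the Wronskian ODE argument of Lemma~\ref{lem:2003171406}, which shows that $\int_{f=\tilde f}|\phi|^2\,{\rm d}S$ converges as $\tilde f\to\infty$ along the \emph{full} limit, not merely a subsequence) is an input used to establish \eqref{eq:2003161503}, not a consequence of it.
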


We note, by \eqref{eq:2003161512}, 
the operators $\scF^\pm(\lambda)$ are extended to bounded operators 
from $\vB$ into $L^2(\mathbb S^{d-1})$, 
and satisfy $\|\scF^+(\lambda)\psi\|=\|\scF^-(\lambda)\psi\|$ for any $\psi\in\vB$.
We also note that $\scF^\pm(\lambda)$ are continuous in $\lambda\in\R$. 
This follows from the continuity of $R(\lambda\pm i0)$ in $\lambda\in\R$, 
\eqref{eq:2003161512} and \eqref{eq:2003301812} stated below.

We introduce the spaces
\begin{equation*}
\vH=L^2(\R^d),\quad 
\widetilde\vH=L^2\bigl(\R, {\rm d}\lambda; L^2(\mathbb S^{d-1})\bigr), 
\end{equation*}
and define the operators $\vF^\pm:\vB\to C\bigl(\R; L^2(\mathbb S^{d-1})\bigr)$ 
as 
\begin{equation}\label{eq:2005270050}
\bigl(\vF^\pm\psi\bigr)(\lambda) = \scF^\pm(\lambda)\psi,\quad \psi\in\vB, 
\end{equation}
respectively. 
\begin{proposition}\label{prop:2005302258}
The operators $\vF^\pm$ defined as mappings 
$\vB\to C\bigl(\R; L^2(\mathbb S^{d-1})\bigr)$ by \eqref{eq:2005270050} 
extend uniquely to isometries $\vH\to\widetilde\vH$. 
These operators satisfy $\vF^\pm H_\alpha\subseteq M_\lambda\vF^\pm$. 
\end{proposition}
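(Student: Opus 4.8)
The plan is to derive both statements from Theorem~\ref{thm:existence-of-wave-matrix} --- in particular from the identity \eqref{eq:2003161512} --- together with Stone's formula, Rellich's theorem (Theorem~\ref{thm:rellich}) and the limiting absorption principle (Theorem~\ref{thm:lap-bound} and Corollary~\ref{cor:lap-refined}).

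\smallskip

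\emph{Isometry.} First I would establish the isometry on $\vB$ and then extend by density, since $C_0^\infty(\R^d)\subseteq\vB$ is dense in $\vH$. For $\psi\in\vB$ and a bounded interval $(a,b)\subseteq\R$, Stone's formula for the self-adjoint operator $H_\alpha$ gives
\[
\tfrac12\bigl\langle\bigl(E_{[a,b]}+E_{(a,b)}\bigr)\psi,\psi\bigr\rangle
=\tfrac1{2\pi i}\int_a^b\bigl\langle\bigl(R(\lambda+i0)-R(\lambda-i0)\bigr)\psi,\psi\bigr\rangle\,{\rm d}\lambda,
\]
where $E$ denotes the spectral measure of $H_\alpha$; the boundary values exist by Corollary~\ref{cor:lap-refined}, and the passage of $\slim_{\Gamma\searrow0}$ through $\int_a^b$ is justified by the uniform bound $\|R(z)\psi\|_{\vB^*}\le C\|\psi\|_\vB$ of Theorem~\ref{thm:lap-bound}. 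By Theorem~\ref{thm:rellich} (applied with $L^2\subseteq\vB_0^*$) the operator $H_\alpha$ has no eigenvalues, and by the limiting absorption principle its spectrum is purely absolutely continuous; hence $E_{[a,b]}=E_{(a,b)}$ and $\lambda\mapsto\langle E_\lambda\psi,\psi\rangle$ is absolutely continuous, so combining with \eqref{eq:2003161512} and letting $(a,b)\uparrow\R$ yields
\[
\|\psi\|_\vH^2=\int_\R\bigl\|\scF^\pm(\lambda)\psi\bigr\|_{L^2(\mathbb S^{d-1})}^2\,{\rm d}\lambda=\bigl\|\vF^\pm\psi\bigr\|_{\widetilde\vH}^2 .
\]
In particular $\vF^\pm\psi\in\widetilde\vH$, the measurability of $\lambda\mapsto\scF^\pm(\lambda)\psi$ following from the continuity noted after Theorem~\ref{thm:existence-of-wave-matrix}. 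Thus $\vF^\pm$ is isometric on $\vB$ and extends uniquely to an isometry $\vH\to\widetilde\vH$.

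\smallskip

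\emph{Intertwining.} Since $C_0^\infty(\R^d)$ is a core for $H_\alpha$, it suffices to prove $\scF^\pm(\lambda)H_\alpha\psi=\lambda\,\scF^\pm(\lambda)\psi$ for $\psi\in C_0^\infty(\R^d)$ and then close the relation. Fix such a $\psi$ and $\mu\in\C\setminus\R$; the resolvent identity together with Corollary~\ref{cor:lap-refined} gives, in $\vB^*$,
\[
R(\lambda\pm i0)(H_\alpha-\mu)\psi=\psi+(\lambda-\mu)R(\lambda\pm i0)\psi .
\]
Applying the ``radiation trace'' map $\phi\mapsto\lim_{f\to\infty}(\cdots)$ underlying the definition of $\scF^\pm(\lambda)$, and using that the compactly supported term $\psi$ contributes $0$ in the limit $f\to\infty$, one obtains $\scF^\pm(\lambda)(H_\alpha-\mu)\psi=(\lambda-\mu)\scF^\pm(\lambda)\psi$, i.e. $\scF^\pm(\lambda)H_\alpha\psi=\lambda\,\scF^\pm(\lambda)\psi$, which is precisely $(\vF^\pm H_\alpha\psi)(\lambda)=(M_\lambda\vF^\pm\psi)(\lambda)$. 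For general $\psi\in D(H_\alpha)$ choose $\psi_n\in C_0^\infty(\R^d)$ with $\psi_n\to\psi$ and $H_\alpha\psi_n\to H_\alpha\psi$ in $\vH$; then $\vF^\pm\psi_n\to\vF^\pm\psi$ and $M_\lambda\vF^\pm\psi_n=\vF^\pm H_\alpha\psi_n\to\vF^\pm H_\alpha\psi$ in $\widetilde\vH$, and the closedness of the multiplication operator $M_\lambda$ gives $\vF^\pm\psi\in D(M_\lambda)$ with $M_\lambda\vF^\pm\psi=\vF^\pm H_\alpha\psi$; hence $\vF^\pm H_\alpha\subseteq M_\lambda\vF^\pm$.

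\smallskip

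The step I expect to need the most care is the vanishing of the compactly supported term in the intertwining computation, namely that $r^{(d+\alpha/2-1)/2}e^{\mp i\theta(\lambda,\cdot)}\psi(\pm\cdot)\to0$ in $L^2(\mathbb S^{d-1})$ along the limiting procedure of \eqref{eq:2003161503}; this is immediate once one unwinds the definition of $\scF^\pm(\lambda,f)$, since $\psi$ vanishes outside a bounded set, but it is the one point where that definition must be used rather than quoting Theorem~\ref{thm:existence-of-wave-matrix} as a black box. The remaining ingredients --- validity of Stone's formula with boundary values, absence of singular continuous spectrum, and passing the limit $f\to\infty$ through the resolvent identity --- are routine consequences of Theorems~\ref{thm:rellich} and~\ref{thm:lap-bound} and Corollary~\ref{cor:lap-refined}.
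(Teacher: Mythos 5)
Your isometry argument coincides with the paper's (combine \eqref{eq:2003161512} with Stone's formula; absolute continuity of the spectrum disposes of the endpoint terms), so that half is fine. The intertwining half, however, has a gap at precisely the point you did \emph{not} flag. You apply ``the radiation trace map $\phi\mapsto\lim_{f\to\infty}(\cdots)$ underlying the definition of $\scF^\pm(\lambda)$'' to the identity $R(\lambda\pm i0)(H_\alpha-\mu)\psi=\psi+(\lambda-\mu)R(\lambda\pm i0)\psi$. But for $\psi\in C_0^\infty(\R^d)$ the function $(H_\alpha-\mu)\psi$ is in general \emph{not} in $C_0^\infty(\R^d)$ (neither $|x|^\alpha$ nor $q$ is smooth), and on $\vB$ the operator $\scF^\pm(\lambda)$ is defined only by continuous extension via \eqref{eq:2003161512}, not by a trace limit. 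Your pointwise computation does show that $\lim_{f\to\infty}\scF^\pm(\lambda,f)(H_\alpha-\mu)\psi$ exists and equals $(\lambda-\mu)\scF^\pm(\lambda)\psi$; what is missing is the identification of that limit with the density-extension value $\scF^\pm(\lambda)(H_\alpha-\mu)\psi$. This is not automatic: the single-sphere quantity $\|\scF^\pm(\lambda,f)\chi\|_{L^2(\mathbb S^{d-1})}$ is \emph{not} bounded by $C\|\chi\|_\vB$ uniformly in $f$ (the $\vB^*$-norm only controls dyadic averages of spherical traces), so one cannot interchange $\lim_{f\to\infty}$ with an approximation $\chi_n\to(H_\alpha-\mu)\psi$ in $\vB$.

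This is exactly the issue the paper's Proposition~\ref{prop:2005302142} is designed to resolve: the Ces\`aro-averaged representation \eqref{eq:2005302149} is shown to hold for \emph{all} $\psi\in\vB$, thanks to the uniform bound \eqref{eq:2006010039} on $R^{-1}\int_R^{2R}\scF^\pm(\lambda,f)\psi\,{\rm d}f$. The paper then runs essentially your resolvent-identity computation in averaged form, applied to $R(i)\psi$ for $\psi\in\vB$ (the leftover term $-(\lambda-i)^{-1}R(i)\psi$ lies in $L^2\subseteq\vB_0^*$, so its dyadic averages vanish by Cauchy--Schwarz), obtaining $\vF^\pm(H_\alpha-i)^{-1}=(M_\lambda-i)^{-1}\vF^\pm$ on $\vB$, which is equivalent to $\vF^\pm H_\alpha\subseteq M_\lambda\vF^\pm$. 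Your final closing-by-cores step is fine. To repair your argument, either invoke \eqref{eq:2005302149} and replace $\lim_{f\to\infty}$ throughout by the average $R^{-1}\int_R^{2R}\cdot\,{\rm d}f$, or extend Theorem~\ref{thm:existence-of-wave-matrix} to compactly supported $\vB$-data and verify consistency with the extension by continuity; as written, the step ``one obtains $\scF^\pm(\lambda)(H_\alpha-\mu)\psi=(\lambda-\mu)\scF^\pm(\lambda)\psi$'' is not justified.
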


We call the operator $\vF^\pm: \vH\to\widetilde\vH$ stationary wave operator. 
Existence of the stationary wave operators follows 
from Proposition~\ref{prop:2005302258}. 
Since $\mathop{\mathrm{Ran}}\scF^\pm(\lambda)$ are dense 
in $L^2(\mathbb S^{d-1})$ (see \eqref{eq:2004021457} below), 
by the density argument we can see 
that $\vF^\pm$ are surjection, see e.g. \cite{ACH}. 
Therefore we obtain the completeness of the stationary wave operators. 
\begin{thm}\label{thm:wave-op-unitary}
The operators $\vF^\pm$ are unitary, and satisfy 
\begin{equation*}
\vF^\pm H_\alpha= M_\lambda\vF^\pm.
\end{equation*}
\end{thm}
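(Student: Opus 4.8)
The plan is to obtain the theorem from Proposition~\ref{prop:2005302258} together with a surjectivity argument. Proposition~\ref{prop:2005302258} already provides that $\vF^\pm\colon\vH\to\widetilde\vH$ are isometries, hence injective with closed range, and that $\vF^\pm H_\alpha\subseteq M_\lambda\vF^\pm$. Thus two things remain: (i) $\mathop{\mathrm{Ran}}\vF^\pm=\widetilde\vH$, which upgrades each isometry to a unitary; and (ii) once $\vF^\pm$ is unitary, the operator inclusion becomes an identity. Point (ii) is immediate: $\vF^\pm H_\alpha(\vF^\pm)^{-1}$ is self-adjoint, being unitarily equivalent to the self-adjoint operator $H_\alpha$, and it is contained in the self-adjoint operator $M_\lambda$; since a self-adjoint operator admits no proper symmetric extension, the two coincide, i.e.\ $\vF^\pm H_\alpha=M_\lambda\vF^\pm$. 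So everything reduces to the surjectivity (i).

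For (i), from $\vF^\pm H_\alpha\subseteq M_\lambda\vF^\pm$ one first deduces $\vF^\pm R(z)=(M_\lambda-z)^{-1}\vF^\pm$ for every $z\in\C\setminus\R$, and hence, by the bounded Borel functional calculus, $\vF^\pm\varphi(H_\alpha)=M_\varphi\vF^\pm$ for all bounded Borel $\varphi\colon\R\to\C$, where $M_\varphi$ is multiplication by $\varphi(\lambda)$ on $\widetilde\vH$. Therefore the orthogonal projection onto the closed subspace $\mathop{\mathrm{Ran}}\vF^\pm$ commutes with every $M_\varphi$, so it is decomposable and $\mathop{\mathrm{Ran}}\vF^\pm=\int_\R^{\oplus}K^\pm(\lambda)\,{\rm d}\lambda$ for measurable fields $\lambda\mapsto K^\pm(\lambda)$ of closed subspaces of $L^2(\mathbb S^{d-1})$. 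It then suffices to show $K^\pm(\lambda)=L^2(\mathbb S^{d-1})$ for a.e.\ $\lambda$. Choose a countable set $\{\psi_j\}$ dense in $\vB$ (which is separable). By \eqref{eq:2005270050}, $(\vF^\pm\psi_j)(\lambda)=\scF^\pm(\lambda)\psi_j$, and since $\vF^\pm\psi_j\in\mathop{\mathrm{Ran}}\vF^\pm$ we get $\scF^\pm(\lambda)\psi_j\in K^\pm(\lambda)$ for $\lambda$ outside a null set $N_j$. For $\lambda$ outside the null set $N:=\bigcup_jN_j$, the closed subspace $K^\pm(\lambda)$ contains $\scF^\pm(\lambda)\psi_j$ for every $j$; as $\scF^\pm(\lambda)\colon\vB\to L^2(\mathbb S^{d-1})$ is bounded and $\{\psi_j\}$ is dense in $\vB$, it follows that $K^\pm(\lambda)\supseteq\overline{\mathop{\mathrm{Ran}}\scF^\pm(\lambda)}=L^2(\mathbb S^{d-1})$ by \eqref{eq:2004021457}. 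Hence $\mathop{\mathrm{Ran}}\vF^\pm=\widetilde\vH$. This is the density argument of \cite{ACH}.

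The crux is therefore the fibrewise density input \eqref{eq:2004021457}, namely that $\mathop{\mathrm{Ran}}\scF^\pm(\lambda)$ is dense in $L^2(\mathbb S^{d-1})$; everything else above is soft functional analysis. I expect this to be proved constructively: given $u\in C^\infty(\mathbb S^{d-1})$, build a localized outgoing (resp.\ incoming) WKB state $\Phi$ from the approximate eikonal phase $\theta(\lambda,\cdot)$ of \eqref{eq:theta-eikonal} carrying the angular profile $u$, set $\psi=(H_\alpha-\lambda)\Phi\in\vB$, identify $R(\lambda\pm i0)\psi=\Phi$ up to a remainder controlled by the radiation condition bounds of Corollary~\ref{cor:rc-real} via Sommerfeld's uniqueness theorem (Corollary~\ref{cor:uniqueness-result-refine}), and read off the leading spherical profile through $\scF^\pm(\lambda)$, recovering $u$ up to a computable unitary normalization --- the same mechanism underlying Theorem~\ref{thm:existence-of-wave-matrix} and formula \eqref{eq:2003161512}. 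The measure-theoretic bookkeeping of the exceptional sets is then routine, thanks to the separability of $\vB$.
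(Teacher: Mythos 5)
Your proposal is correct and follows essentially the same route as the paper: Theorem~\ref{thm:wave-op-unitary} is obtained there from Proposition~\ref{prop:2005302258} together with the density of $\mathop{\mathrm{Ran}}\scF^\pm(\lambda)$ in $L^2(\mathbb S^{d-1})$ (which the paper derives exactly as you anticipate, via \eqref{eq:2004021457} and Sommerfeld's uniqueness theorem) and ``the density argument'' of \cite{ACH}. You have merely spelled out in detail the direct-integral/measurable-field bookkeeping that the paper delegates to that reference, and the self-adjointness argument upgrading the inclusion to an equality, both of which are sound.
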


Let us introduce the functions $\phi_\lambda^\pm[v]$ 
for $v\in L^2(\mathbb S^{d-1})$ by 
\begin{equation}\label{eq:2004021344}
\phi_\lambda^\pm[v](f, \omega) 
= \tfrac1{\sqrt{2\pi}}
\exp\Bigl\{\mp \tfrac{\pi i}4\Bigl(\tfrac{d+\alpha/2-1}{1+\alpha/2}\Bigr)\!\Bigr\}
r^{-(d+\alpha/2-1)/2}e^{\pm i\theta(\lambda, x)}v(\pm\omega), 
\end{equation}
respectively. 
We may call these functions outgoing/incoming approximate generalized eigenfunctions. 
In fact for $v\in C^\infty(\mathbb S^{d-1})$ we can see that 
\begin{equation}\label{eq:2004021416}
\psi_\lambda^\pm[v]:=(H_\alpha-\lambda)\phi_\lambda^\pm[v]\in\vB, 
\end{equation}
cf. \eqref{eq:2003191524}.
The adjoints of $\scF^\pm(\lambda)$: 
\begin{equation*}
\scF^\pm(\lambda)^*\in\vL(L^2(\mathbb S^{d-1}), \vB^*), 
\end{equation*}
which are called the \emph{stationary wave matrices}, 
are characterized by $\phi_\lambda^\pm$ and $\psi_\lambda^\pm$ as follows. 
\begin{proposition}\label{prop:2004031943}
Let $v\in C^\infty(\mathbb S^{d-1})$, 
and let $\phi_\lambda^\pm[v]$ and $\psi_\lambda^\pm[v]$ be given by 
\eqref{eq:2004021344} and \eqref{eq:2004021416}, respectively. 
Then 
\begin{equation}\label{eq:2004032011}
\scF^\pm(\lambda)^*v 
= \phi_\lambda^\pm[v] - R(\lambda\mp i0)\psi_\lambda^\pm[v] \ (\in\vB^*), 
\end{equation}
respectively.
\end{proposition}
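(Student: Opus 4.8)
The plan is to identify the right-hand side of \eqref{eq:2004032011} as the unique element of $\vB^*$ produced by Sommerfeld's uniqueness theorem (Corollary~\ref{cor:uniqueness-result-refine}) and then compute its pairing against an arbitrary $\psi \in C_0^\infty(\R^d)$ to recognize $\scF^\pm(\lambda)^* v$. First I would fix the sign $+$ (the $-$ case being symmetric) and set $u = \phi_\lambda^+[v] - R(\lambda - i0)\psi_\lambda^+[v]$. By \eqref{eq:2004021416} we have $\psi_\lambda^+[v] \in \vB$, so $R(\lambda - i0)\psi_\lambda^+[v] \in \vB^*$ by Corollary~\ref{cor:lap-refined}; since $\phi_\lambda^+[v] \in \vB^*$ directly from its explicit form \eqref{eq:2004021344} (its $F_n$-norms grow like $2^{n/2}$ because $r^{-(d+\alpha/2-1)/2}$ times the surface measure Jacobian $r^{(d-1)}$ and the width in $f$ both contribute), we get $u \in \vB^*$.

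Next I would verify that $u$ satisfies a Sommerfeld-type radiation condition so that $R(\lambda + i0)$ can be removed. Compute $(H_\alpha - \lambda)u = (H_\alpha-\lambda)\phi_\lambda^+[v] - \psi_\lambda^+[v] = 0$ in the distributional sense, using \eqref{eq:2004021416}. For the radiation condition at $-\infty$ (i.e. the incoming condition attached to $R(\lambda-i0)$), I would check that $(A + a_-)u \in \vB_0^*$: the term $(A+a_-)R(\lambda-i0)\psi_\lambda^+[v]$ lies in $\vB_0^*$ by Corollary~\ref{cor:rc-real} with $\beta = 0$ (the outgoing resolvent $R(\lambda-i0)$ satisfies the $(A+a_-)$ bound), while $(A+a_-)\phi_\lambda^+[v] \in \vB_0^*$ follows from a direct stationary-phase-type computation: $A$ acting on $\phi_\lambda^+[v]$ reproduces, to leading order, multiplication by the radial derivative of the phase $\theta$, which is exactly $a_-$ by construction of $a_z$ in \eqref{eq:improved-phase} and the eikonal property \eqref{eq:2003151426}; the error decays strictly faster than $f^0$ in $\vB^*$. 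Hence by Corollary~\ref{cor:uniqueness-result-refine} (with $\beta=0$) we could in principle try to write $u$ as a limiting resolvent, but $u$ is \emph{not} of that form — rather, the point is the dual characterization.

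The cleanest route, which I would actually follow, is to test both sides against $\psi \in C_0^\infty(\R^d)$ and use the definition of $\scF^+(\lambda)$. By definition $\langle \scF^+(\lambda)^* v, \psi\rangle_{\vB^*,\vB} = \langle v, \scF^+(\lambda)\psi\rangle_{L^2(\mathbb S^{d-1})}$, and $\scF^+(\lambda)\psi = \lim_{f\to\infty}\scF^+(\lambda,f)\psi$ is, by the explicit formula, the leading angular profile of $R(\lambda+i0)\psi$ along outgoing spherical waves. On the other side, $\langle u, \psi\rangle = \langle \phi_\lambda^+[v], \psi\rangle - \langle R(\lambda-i0)\psi_\lambda^+[v], \psi\rangle$; in the second term I would use the resolvent identity $\langle R(\lambda-i0)\psi_\lambda^+[v], \psi\rangle = \langle \psi_\lambda^+[v], R(\lambda+i0)\psi\rangle = \langle (H_\alpha-\lambda)\phi_\lambda^+[v], R(\lambda+i0)\psi\rangle$ and then integrate by parts (justified by the LAP bounds of Theorem~\ref{thm:lap-bound}, which control $p^f R(\lambda+i0)\psi$ and $r^{-\alpha}p^2 R(\lambda+i0)\psi$ in $\vB^*$, so all boundary terms at infinity survive in a controlled way). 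The integration by parts moves $H_\alpha - \lambda$ onto $R(\lambda+i0)\psi$, which gives $\psi$, cancelling $\langle \phi_\lambda^+[v], \psi\rangle$ up to a boundary term at infinity over large spheres $\{r = \text{const}\}$; that surviving boundary term is precisely an oscillatory integral pairing $v(\omega)$ against the outgoing profile of $R(\lambda+i0)\psi$, i.e. $\langle v, \scF^+(\lambda)\psi\rangle$ after inserting the normalizing constants in \eqref{eq:2004021344} and in the definition of $\scF^+(\lambda,f)$. Matching the phase factors $\exp\{\mp\frac{\pi i}{4}(\cdots)\}$ and the powers $r^{\pm(d+\alpha/2-1)/2}$ is a bookkeeping check.

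The main obstacle I anticipate is the boundary-term extraction: rigorously passing from $\langle R(\lambda-i0)\psi_\lambda^+[v],\psi\rangle$ to an oscillatory surface integral requires a Green's-formula argument on the region $\{f < R\}$ with $R\to\infty$, controlling the volume remainders via the eikonal error \eqref{eq:2003151426} and the radiation-condition bound of Corollary~\ref{cor:rc-real}, and showing the surface term converges to exactly $\langle v, \scF^+(\lambda,R)\psi\rangle$ up to $o(1)$. This is where the specific form of $\theta$, the constants in \eqref{eq:2004021344}, and the $\beta>0$ gain in Corollary~\ref{cor:rc-real} (needed to kill the cross terms) all have to fit together; everything else is either a direct membership check in $\vB^*$/$\vB_0^*$ or an application of an already-stated corollary.
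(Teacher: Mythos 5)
Your operative argument (the third and fourth paragraphs) is essentially the paper's own route: test against $\psi\in C_0^\infty(\R^d)$, use $R(\lambda-i0)^*=R(\lambda+i0)$ to rewrite $\langle R(\lambda-i0)\psi_\lambda^+[v],\psi\rangle=\langle\psi_\lambda^+[v],R(\lambda+i0)\psi\rangle$, and identify the surviving boundary term of a Green's-formula computation on $\{f<\tilde f\}$ with $\langle v,\scF^+(\lambda,\tilde f)\psi\rangle$ as $\tilde f\to\infty$. The ``main obstacle'' you anticipate has, however, already been dispatched earlier in the paper: the identity you are reconstructing is precisely \eqref{eq:2003301812}, established in the proof of Lemma~\ref{lem:2003051708} (existence of the weak limit), so the paper's proof of the proposition is a one-line appeal to that equation after inserting the normalizing constants; you need not redo the boundary analysis.

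One correction to your (abandoned) second paragraph: the claim that $(A+a_-)\phi_\lambda^+[v]\in\vB_0^*$ is false. Since $p^f e^{i\theta}=(\partial^f\theta)e^{i\theta}$ with $\partial^f\theta=1+\lambda r^{-\alpha}\approx a_0$, the correct statement is $(A-a_+)\phi_\lambda^+[v]\in\vB_0^*$, whereas $(A+a_-)\phi_\lambda^+[v]\approx 2a_0\,\phi_\lambda^+[v]\notin\vB_0^*$. Indeed, if your claim held, then $u=\phi_\lambda^+[v]-R(\lambda-i0)\psi_\lambda^+[v]$ would satisfy both $(H_\alpha-\lambda)u=0$ and the incoming radiation condition, and Corollary~\ref{cor:uniqueness-result-refine} would force $u=0$, i.e.\ $\scF^+(\lambda)^*v=0$, which is absurd. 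This error is immaterial to your final argument, which you correctly base on duality rather than on Sommerfeld uniqueness, but the sign bookkeeping matters later (it is exactly the relation $(A\mp a_\pm)\phi_\lambda^\pm[\xi]\in\vB_0^*$ that is used in the proof of Theorem~\ref{thm:characterize-B^*-ef}).
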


$\scF^\pm(\lambda)^*$ are also called \emph{eigenoperators}.
In fact, by Proposition~\ref{prop:2004031943} and the density argument 
we can see that 
\begin{equation*}
(H_\alpha-\lambda)\scF^\pm(\lambda)^*v=0\quad 
\text{for any}\ v\in L^2(\mathbb S^{d-1}).
\end{equation*}


By Sommerfeld's uniqueness theorem 
stated as Corollary~\ref{cor:uniqueness-result-refine}, 
we have 
\begin{equation*}
\phi_\lambda^\pm[v] - R(\lambda\pm i0)\psi_\lambda^\pm[v]
= 0 \quad \text{for }\ v\in C^\infty(\mathbb S^{d-1}).
\end{equation*}
We can deduce from this equality that 
\begin{equation}\label{eq:2004021457}
v = \pm 2\pi i\scF^\pm(\lambda)\psi_\lambda^\pm[v] 
\quad \text{for }\ v\in C^\infty(\mathbb S^{d-1}), 
\end{equation}
and then we have 
\begin{equation*}
C^\infty(\mathbb S^{d-1}) \subseteq \mathop{\mathrm{Ran}}\scF^\pm(\lambda) 
\subseteq L^2(\mathbb S^{d-1}).
\end{equation*}
Therefore we can define the \emph{scattering matrix} $S(\lambda)$ 
as satisfying for $\psi\in\vB$
\begin{equation}\label{eq:2008111454}
\scF^+(\lambda)\psi = S(\lambda)\scF^-(\lambda)\psi.
\end{equation}
Then by Theorem~\ref{thm:existence-of-wave-matrix} 
we can see that the scattering matrix is extended to an unitary operator. 
\begin{proposition}\label{prop:2008101448}
$S(\lambda)$ defined by \eqref{eq:2008111454} is extended to a unitary operator 
on $L^2(\mathbb S^{d-1})$ and is strongly continuous in $\lambda\in\R$. 
\end{proposition}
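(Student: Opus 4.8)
The plan is to deduce unitarity of $S(\lambda)$ from the isometry relation already contained in Theorem~\ref{thm:existence-of-wave-matrix}, and to reduce strong continuity of $\lambda\mapsto S(\lambda)$ to the continuity of $\lambda\mapsto\psi_\lambda^-[v]$ as a $\vB$-valued map; the latter I would treat by exploiting the explicit factorisation $\phi_\lambda^-[v]=e^{-i\lambda f}\phi_0^-[v]$ together with the eikonal cancellation built into $\theta$.

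\emph{Unitarity.} By \eqref{eq:2003161512} (and the remark following Theorem~\ref{thm:existence-of-wave-matrix}) one has $\|\scF^+(\lambda)\psi\|=\|\scF^-(\lambda)\psi\|$ for all $\psi\in\vB$, hence $\ker\scF^-(\lambda)\subseteq\ker\scF^+(\lambda)$ on $\vB$, so $S(\lambda)$ is well defined on $\mathop{\mathrm{Ran}}\scF^-(\lambda)$ by \eqref{eq:2008111454} and is isometric there, since $\|S(\lambda)\scF^-(\lambda)\psi\|=\|\scF^+(\lambda)\psi\|=\|\scF^-(\lambda)\psi\|$. As $C^\infty(\mathbb S^{d-1})\subseteq\mathop{\mathrm{Ran}}\scF^-(\lambda)$ is dense in $L^2(\mathbb S^{d-1})$ (cf. \eqref{eq:2004021457}), $S(\lambda)$ extends uniquely to an isometry of $L^2(\mathbb S^{d-1})$; its range is closed and contains the dense subspace $\mathop{\mathrm{Ran}}\scF^+(\lambda)$, hence equals $L^2(\mathbb S^{d-1})$, and $S(\lambda)$ is unitary.

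\emph{Reduction of strong continuity.} Fix a compact interval $I\subseteq\R$. The pointwise continuity of $\lambda\mapsto\scF^+(\lambda)\psi$ (recorded after Theorem~\ref{thm:existence-of-wave-matrix}) and the uniform boundedness principle give $C_I:=\sup_{\lambda\in I}\|\scF^+(\lambda)\|_{\vL(\vB,L^2(\mathbb S^{d-1}))}<\infty$. For $v\in C^\infty(\mathbb S^{d-1})$, \eqref{eq:2004021457} reads $\scF^-(\lambda)\psi_\lambda^-[v]=-\tfrac{1}{2\pi i}v$, so by \eqref{eq:2008111454}
\begin{equation*}
S(\lambda)v=-2\pi i\,\scF^+(\lambda)\psi_\lambda^-[v].
\end{equation*}
Granting that $\lambda\mapsto\psi_\lambda^-[v]\in\vB$ is continuous (see below), for $\lambda,\lambda_0\in I$
\begin{equation*}
\tfrac{1}{2\pi}\bigl\|S(\lambda)v-S(\lambda_0)v\bigr\|\le C_I\,\bigl\|\psi_\lambda^-[v]-\psi_{\lambda_0}^-[v]\bigr\|_{\vB}+\bigl\|\bigl(\scF^+(\lambda)-\scF^+(\lambda_0)\bigr)\psi_{\lambda_0}^-[v]\bigr\|\longrightarrow0
\end{equation*}
as $\lambda\to\lambda_0$, so $\lambda\mapsto S(\lambda)v$ is continuous on $C^\infty(\mathbb S^{d-1})$. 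Since the $S(\lambda)$ are uniformly bounded (being unitary) and $C^\infty(\mathbb S^{d-1})$ is dense in $L^2(\mathbb S^{d-1})$, a standard $3\varepsilon$ argument promotes this to continuity of $\lambda\mapsto S(\lambda)v$ for every $v\in L^2(\mathbb S^{d-1})$.

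\emph{Continuity of $\lambda\mapsto\psi_\lambda^-[v]$ in $\vB$.} Let $v\in C^\infty(\mathbb S^{d-1})$. By \eqref{eq:theta-eikonal}, $\theta(\lambda,x)=\theta(0,x)+\lambda f$, so \eqref{eq:2004021344} gives $\phi_\lambda^-[v]=e^{-i\lambda f}\phi_0^-[v]$, where $\phi_0^-[v]$ is the value at $\lambda=0$. Conjugating (cf. \eqref{eq:defA}), $e^{i\lambda f}(H_\alpha-\lambda)e^{-i\lambda f}=H_\alpha-\lambda A+\tfrac{\lambda^2}{2}|\partial f|^2-\lambda$, so that
\begin{equation*}
\psi_\lambda^-[v]=e^{-i\lambda f}\Bigl(\psi_0^-[v]-\lambda\,(A+1)\phi_0^-[v]+\tfrac{\lambda^2}{2}\,|\partial f|^2\,\phi_0^-[v]\Bigr),\qquad\psi_0^-[v]=H_\alpha\phi_0^-[v].
\end{equation*}
Here $\psi_0^-[v]\in\vB$ is the case $\lambda=0$ of \eqref{eq:2004021416}. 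Although $\phi_0^-[v]\notin\vB$, the decisive point — a first-order consequence of the eikonal approximation \eqref{eq:2003151426} — is that a direct computation, using $\partial r=x/|x|$ on $\{|x|\ge2\}$ and $\partial r=0$ on $\{|x|\le1\}$, gives $(A+1)\phi_0^-[v]=i\tfrac{\alpha}{2}r^{-1-\alpha/2}\phi_0^-[v]$ on $\{|x|\ge2\}$ (and $(A+1)\phi_0^-[v]=\phi_0^-[v]$ on the bounded set $\{|x|\le1\}$), so that this coefficient belongs to $\vB$; similarly $|\partial f|^2\phi_0^-[v]=r^{-\alpha}\phi_0^-[v]$ on $\{|x|\ge2\}$, which belongs to $\vB$ because $2/3<\alpha<2$ (using $\|F_n\phi_0^-[v]\|_{L^2}\lesssim2^{n/2}$). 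Hence $\lambda\mapsto\psi_0^-[v]-\lambda(A+1)\phi_0^-[v]+\tfrac{\lambda^2}{2}|\partial f|^2\phi_0^-[v]$ is a $\vB$-valued polynomial, in particular continuous. Finally, multiplication by $e^{-i\lambda f}$ is an isometry of $\vB$ for each $\lambda$, and $\lambda\mapsto e^{-i\lambda f}g$ is continuous from $\R$ into $\vB$ for each $g\in\vB$, by dominated convergence over the shells $F_n$ (bounding $|e^{-i\lambda f}-e^{-i\lambda_0 f}|\le\min\{2,|\lambda-\lambda_0|f\}$ and using $\sum_n2^{n/2}\|F_ng\|_{L^2}<\infty$). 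Combining these two facts yields the continuity of $\lambda\mapsto\psi_\lambda^-[v]\in\vB$.

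The step I expect to be the main obstacle is precisely this last one: one must verify that applying $H_\alpha-\lambda$ to $\phi_\lambda^-[v]$ produces no term failing to lie in $\vB$ — concretely, that the potentially non-$\vB$ contribution $-\lambda\phi_0^-[v]$ is cancelled by the leading part of $-\lambda A\phi_0^-[v]$ — which is exactly where the eikonal structure of $\theta$ and the standing assumption $\alpha>2/3$ are used; the remaining ingredients are soft functional analysis.
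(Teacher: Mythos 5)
Your proof is correct. The paper offers no written proof of this proposition---only the remark that unitarity follows from Theorem~\ref{thm:existence-of-wave-matrix} (via $\|\scF^+(\lambda)\psi\|=\|\scF^-(\lambda)\psi\|$ and the density of $\mathop{\mathrm{Ran}}\scF^\pm(\lambda)$ guaranteed by \eqref{eq:2004021457}) together with the continuity of $\scF^\pm(\lambda)$ in $\lambda$---and your argument is exactly this route with the missing details supplied; in particular your computations that $(A+1)\phi_0^-[v]=i\tfrac{\alpha}{2}r^{-1-\alpha/2}\phi_0^-[v]$ on $\{|x|\ge2\}$ and that $r^{-\alpha}\phi_0^-[v]\in\vB$ precisely because $\alpha>2/3$ check out and are consistent with the paper's own formula \eqref{eq:2003191524}.
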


Finally, we obtain a characterization of the $\vB^*$-eigenfunctions 
in terms of $\phi_\lambda^\pm$ similar to \cite{IS2}. 
Let us introduce the set of minimal generalized eigenfunctions.
\begin{equation*}
\mathcal E_\lambda
:=\{\phi\in\vB^*\,|\, (H_\alpha-\lambda)\phi=0\ \, \text{in the distributional sense.}\,\}.
\end{equation*}

\begin{thm}\label{thm:characterize-B^*-ef}
For any fixed $\lambda\in\R$ the following assertions hold.
\begin{enumerate}
\item[(i)] For any one of $\xi_\pm\in L^2(\mathbb S^{d-1})$ 
or $\phi\in \mathcal E_\lambda$ 
the two other quantities in $\{\xi_+, \xi_-, \phi\}$ uniquely exist such that 
\begin{equation}\label{eq:2005232312}
\phi-\phi_\lambda^+[\xi_+]-\phi_\lambda^-[\xi_-]\in \vB_0^*.
\end{equation}
\item[(ii)] For the quantities $\{\xi_+, \xi_-, \phi\}$ satisfying \eqref{eq:2005232312}, 
the following relations hold.
\begin{align}\label{eq:2005241246}
\phi &= \scF^\pm(\lambda)^*\xi_\pm,\qquad 
\xi_+=S(\lambda)\xi_-, \\
\xi_\pm &= \pm\dfrac12c_\pm \lim_{R\to\infty}
\dfrac1R\int_R^{2R} r^{(d+\alpha/2-1)/2}e^{\mp i\theta}(A\pm a_0)\phi\,{\rm d}f, 
\label{eq:2005241247}
\end{align}
where $c_\pm
=\sqrt{2\pi}\exp\{\pm\tfrac{\pi i}4(\tfrac{d+\alpha/2-1}{1+\alpha/2})\}$ 
and $a_0=r^{-\alpha/2}\sqrt{2\lambda-2q_0+r^\alpha}$.
In particular the wave matrices 
$\scF^\pm(\lambda)^*$ give one-to-one correspondences between 
the spaces $L^2(\mathbb S^{d-1})$ and $\vE_\lambda$. 
\item[(iii)] The operators $\scF^\pm(\lambda):\vB\to L^2(\mathbb S^{d-1})$ 
are surjections.
\end{enumerate}
\end{thm}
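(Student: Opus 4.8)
The plan is to establish the three assertions of Theorem~\ref{thm:characterize-B^*-ef} by leveraging the stationary machinery already in place, in particular Rellich's theorem (Theorem~\ref{thm:rellich}), Sommerfeld's uniqueness theorem (Corollary~\ref{cor:uniqueness-result-refine}), the radiation condition bounds of Corollary~\ref{cor:rc-real}, and Proposition~\ref{prop:2004031943} characterizing the wave matrices. I would first reduce (i) to a uniqueness-plus-existence statement: \emph{uniqueness} of the decomposition \eqref{eq:2005232312} should follow from Rellich's theorem, since if two triples $\{\xi_+,\xi_-,\phi\}$ and $\{\xi_+',\xi_-',\phi\}$ both satisfy \eqref{eq:2005232312} with the \emph{same} $\phi$, then $\phi_\lambda^+[\xi_+-\xi_+']+\phi_\lambda^-[\xi_--\xi_-']\in\vB_0^*$, and an asymptotic-orthogonality computation of the two spherical-wave terms (they oscillate with opposite phases $e^{\pm i\theta}$, so their cross term in $\vB^*$-type averages vanishes in the limit — the computation underlying \eqref{eq:2004021457}) forces $\xi_+=\xi_+'$ and $\xi_-=\xi_-'$ in $L^2(\mathbb S^{d-1})$; the map from $\phi$ to $\xi_\pm$ is then given by \eqref{eq:2005241247}, which simultaneously proves the uniqueness clause and half of (ii). For the \emph{existence} direction, given $\phi\in\vE_\lambda$ I would apply Corollary~\ref{cor:rc-real} with $\beta=0$ to show that $(A\mp a_\pm)\phi$ controls an outgoing/incoming part; splitting $\phi=\phi^{\mathrm{out}}+\phi^{\mathrm{in}}$ via a partition adapted to the radiation conditions and extracting the leading boundary data through the averaged integral in \eqref{eq:2005241247} should produce $\xi_\pm$ with the desired property.

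Next I would treat (ii). The identity $\phi=\scF^\pm(\lambda)^*\xi_\pm$ is the crux: starting from a triple satisfying \eqref{eq:2005232312}, I would use Proposition~\ref{prop:2004031943}, which gives $\scF^\pm(\lambda)^*\xi_\pm=\phi_\lambda^\pm[\xi_\pm]-R(\lambda\mp i0)\psi_\lambda^\pm[\xi_\pm]$ for smooth $\xi_\pm$, together with a density argument in $v\in L^2(\mathbb S^{d-1})$ (legitimate since $\scF^\pm(\lambda)^*$ is bounded into $\vB^*$). The difference $\phi-\scF^+(\lambda)^*\xi_+$ lies in $\vB^*$, is annihilated by $H_\alpha-\lambda$, and — after subtracting the outgoing representation — has no outgoing spherical-wave component; by the radiation condition characterization (Corollary~\ref{cor:uniqueness-result-refine} applied in the limiting form, or directly Rellich once one shows the remainder is actually in $\vB_0^*$) it must vanish. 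The same argument with the opposite sign gives $\phi=\scF^-(\lambda)^*\xi_-$. The relation $\xi_+=S(\lambda)\xi_-$ then follows by combining $\scF^+(\lambda)^*\xi_+=\scF^-(\lambda)^*\xi_-$ with the defining relation \eqref{eq:2008111454} of $S(\lambda)$ and the fact that $\scF^\pm(\lambda)$ and $\scF^\pm(\lambda)^*$ are adjoint, using that $\scF^\pm(\lambda)$ are surjective (which is (iii), so I would prove (iii) first or in parallel). Formula \eqref{eq:2005241247} is read off from the asymptotic extraction already used in (i). The one-to-one correspondence between $L^2(\mathbb S^{d-1})$ and $\vE_\lambda$ is then immediate: injectivity of $\scF^\pm(\lambda)^*$ from the uniqueness in (i), and surjectivity onto $\vE_\lambda$ from the existence in (i) combined with $\phi=\scF^\pm(\lambda)^*\xi_\pm$.

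For (iii), surjectivity of $\scF^\pm(\lambda):\vB\to L^2(\mathbb S^{d-1})$, I would argue that $\overline{\mathrm{Ran}\,\scF^\pm(\lambda)}=L^2(\mathbb S^{d-1})$ is already known (it contains $C^\infty(\mathbb S^{d-1})$ by \eqref{eq:2004021457}), so it remains to show the range is closed, equivalently that $\scF^\pm(\lambda)^*$ is injective with closed range; injectivity of $\scF^\pm(\lambda)^*$ comes from (i)--(ii) as above, and a lower bound $\|\scF^\pm(\lambda)^*v\|_{\vB^*}\gtrsim\|v\|_{L^2(\mathbb S^{d-1})}$ — obtained by evaluating the leading asymptotics of $\phi_\lambda^\pm[v]$, whose $\vB^*$-norm is comparable to $\|v\|_{L^2(\mathbb S^{d-1})}$, and noting the resolvent correction $R(\lambda\mp i0)\psi_\lambda^\pm[v]$ contributes a strictly lower-order (in fact $\vB_0^*$) term — gives closedness.

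The main obstacle I anticipate is the existence half of (i) together with the vanishing argument in (ii): one must show that a general $\phi\in\vB^*$ with $(H_\alpha-\lambda)\phi=0$ genuinely decomposes into outgoing and incoming spherical waves up to a $\vB_0^*$ error, with the boundary data extracted \emph{canonically} by the averaged integral \eqref{eq:2005241247}. This requires carefully propagating the radiation condition bounds of Corollary~\ref{cor:rc-real} — which a priori only control $(A\mp a_\pm)\phi$ in $\vB^*$ when $\phi$ is a limiting resolvent applied to something in $\vB$ — to the purely spectral situation, presumably by first showing $(A-a_+)\phi$ and $(A+a_-)\phi$ cannot both fail to lie in $\vB_0^*$ unless the corresponding spherical-wave coefficient is nonzero, and then using the commutator/escape-function positivity of $[H_\alpha,iA]$ (as in \cite{Ita1}) to pin down the asymptotics. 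Reconciling the normalization constants $c_\pm$ and the phase factor in $\phi_\lambda^\pm[v]$ with those appearing in $\scF^\pm(\lambda)$ will also demand some bookkeeping, but that is routine once the structural statements are in hand.
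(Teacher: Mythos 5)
Your overall architecture matches the paper's: uniqueness via Rellich's theorem plus asymptotic orthogonality of the two oscillating spherical waves, the adjoint identity $\phi=\scF^\pm(\lambda)^*\xi_\pm$ via Proposition~\ref{prop:2004031943} and density, and surjectivity in (iii) via dense range plus the closed range theorem. The paper additionally records the identity $\|\xi_+\|=\|\xi_-\|$ by computing $\lim_n\langle i[H_\alpha,\chi_n]\rangle_\phi=0$, which you do not mention but which is needed to make the uniqueness bookkeeping close. However, there is a genuine gap at the step you yourself flag as ``the main obstacle'': producing $\xi_\pm$ from an arbitrary $\phi\in\vE_\lambda$. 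Your sketch --- apply Corollary~\ref{cor:rc-real} with $\beta=0$ and split $\phi$ into outgoing and incoming parts --- does not work as stated, because Corollary~\ref{cor:rc-real} only controls $(A\mp a_\pm)R(\lambda\pm i0)\psi$ for $\psi\in f^{-\beta}\vB$, and a general generalized eigenfunction $\phi\in\vB^*$ is not of that form. The paper's resolution (Lemma~\ref{lem:2007211540}, following Ito--Skibsted) is a nontrivial construction: set $\phi_\pm=\tfrac1{2a_0}\bar\chi_m(A\pm a_0)\phi$ and $\xi_n=2\pi i\scF^+(\lambda)\chi_n(\eta(H)-\lambda)\phi_+$, prove uniform boundedness of $\xi_n$ by expanding the commutator $[\chi_n,\eta(H)]$ with the Helffer--Sj\"ostrand formula and the structure of $[H,iA]_{\chi_n'}$, extract a weak limit $\xi$, and identify $\scF^+(\lambda)^*\xi=\phi$ using Sommerfeld's uniqueness. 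Nothing in your proposal supplies a substitute for this argument, and without it both the existence half of (i) and the surjectivity of $\scF^\pm(\lambda)^*$ onto $\vE_\lambda$ in (ii) are unproved.

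A secondary error: in (iii) you assert that the resolvent correction $R(\lambda\mp i0)\psi_\lambda^\pm[v]$ in \eqref{eq:2004032011} is ``strictly lower-order (in fact $\vB_0^*$)''. It is not: by Lemmas~\ref{lem:2007111232} and~\ref{lem:2007181542} it contains the opposite-sign spherical wave $\phi_\lambda^\mp[\xi_\mp]$ with $\|\xi_\mp\|=\|v\|$, which has $\vB^*$-norm comparable to $\|v\|$. The correct lower bound $\|v\|_{L^2(\mathbb S^{d-1})}^2\le\pi\|\scF^\pm(\lambda)^*v\|_{\vB^*}^2$ instead comes from the norm identity \eqref{eq:2005241344}, where the cross term between the two spherical waves vanishes after averaging; the conclusion survives, but for a different reason than the one you give.
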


There are many literature on scattering theory for the Laplacian with decaying potentials. 
We refer e.g. \cite{BM, GY, Ho, Ike, II, Iso, IS2, M, MZ, Sa}. 
However there seems to be no literature on stationary scattering problem 
for repulsive Hamiltonians even for short--range perturbation, 
although time-dependent scattering problem for that is well studied 
cf. e.g. \cite{BCHM, Ishi, N}. 
In this sense, our results are new. 
Moreover since we use Agmon--H\"ormander spaces, which are used only 
for the Laplacian with decaying potentials, so far, cf. \cite{AH, JP}, 
our results have sharp form. 

To prove our main results we use the schemes of \cite{GY, Iso, IS2} as mentioned above. 
Since considered Hamiltonians in this paper are different from theirs, 
we can not apply their schemes directly. 
Thus we improve that by using escape function \eqref{eq:escape-function} 
and an approximate solution \eqref{eq:theta-eikonal} to the eikonal equation. 

We have already obtained similar results to \cite{IS1, Ita1, Ita2} in \cite{AIIS} 
for the Stark Hamiltonians. 
Thus by considering the results of \cite{IS2} and this paper, 
we can expect that stationary scattering theory can be established 
for the Stark Hamiltonians.


\section{Wave operator}\label{sec:Wave operator}

In this section we discuss on the stationary wave operators $\vF^\pm$. 
In Section~\ref{sec:Stationary state} 
we prove Theorems~\ref{thm:existence-of-wave-matrix} 
by employing Isozaki's approach, cf. \cite{Iso, GY}.
Proposition~\ref{prop:2005302258} will be proved 
in Section~\ref{sec:Stationary wave operators}.


\subsection{Stationary state}\label{sec:Stationary state}

To prove Theorem~\ref{thm:existence-of-wave-matrix} 
let us introduce the following lemmas. 
\begin{lemma}\label{lem:2003031452}
Let $\psi\in C_0^\infty(\R^d)$ and $\phi=R(\lambda\pm i0)\psi$.
Then
\begin{equation}\label{eq:2003031457}
\langle R(\lambda+i0)\psi-R(\lambda-i0)\psi, \psi\rangle
= i\lim_{\tilde f\to\infty}\int_{f(x)=\tilde f}|\phi|^2 \,{\rm d}S_{f=\tilde f}.
\end{equation}
\end{lemma}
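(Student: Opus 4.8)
The plan is to read \eqref{eq:2003031457} as the energy--flux identity underlying Isozaki's construction: one rewrites the left-hand side as a boundary integral over the hypersurface $\{f=\tilde f\}$ via Green's formula, and then identifies the integrand by means of the radiation condition bounds of Corollary~\ref{cor:rc-real}. First I would carry out an algebraic reduction. Put $\phi=R(\lambda\pm i0)\psi$; by Corollary~\ref{cor:lap-refined} one has $R(\lambda-i0)=R(\lambda+i0)^*$ in $\vL(\vB,\vB^*)$, and since $\psi\in C_0^\infty(\R^d)\subseteq\vB$ this yields $\langle R(\lambda-i0)\psi,\psi\rangle=\overline{\langle R(\lambda+i0)\psi,\psi\rangle}$, whence
\begin{equation*}
\langle R(\lambda+i0)\psi-R(\lambda-i0)\psi,\psi\rangle=\pm2i\mathop{\mathrm{Im}}\langle\phi,\psi\rangle
\end{equation*}
with the sign that of $\phi=R(\lambda\pm i0)\psi$. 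By elliptic regularity and Theorem~\ref{thm:lap-bound} we have $\phi\in H^2_{\mathrm{loc}}(\R^d)$ and $(H_\alpha-\lambda)\phi=\psi$ in $L^2_{\mathrm{loc}}$; since $\mathop{\mathrm{supp}}\psi$ is compact, for every $\tilde f>\max_{\mathop{\mathrm{supp}}\psi}f$ one may write $\langle\phi,\psi\rangle=\langle\phi,(H_\alpha-\lambda)\phi\rangle_{L^2(\Omega_{\tilde f})}$ with $\Omega_{\tilde f}=\{x\in\R^d\mid f(x)<\tilde f\}$ bounded. Applying Green's formula on $\Omega_{\tilde f}$ and discarding the volume terms, which are real because the potential $-\tfrac12|x|^\alpha+q-\lambda$ is real-valued, leaves a pure boundary flux; writing $\partial_\nu=|\partial f|^{-1}\partial^f$ for the exterior normal derivative on $\{f=\tilde f\}$ and using the coarea identity $\mathrm dx=\mathrm dS_{f=t}\,\mathrm dt$, this becomes
\begin{equation*}
\mathop{\mathrm{Im}}\langle\phi,\psi\rangle=\tfrac12\int_{\{f=\tilde f\}}\mathop{\mathrm{Im}}\bigl(\bar\phi\,\partial^f\phi\bigr)\,\mathrm dS_{f=\tilde f}.
\end{equation*}

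Next I would bring in the radiation condition. By \eqref{eq:defA} we have $\partial^f=ip^f=iA-\tfrac12(\Delta f)$, so $\mathop{\mathrm{Im}}(\bar\phi\,\partial^f\phi)=\mathop{\mathrm{Re}}(\bar\phi\,A\phi)$. Fixing $\beta\in(0,\beta_c)$, Corollary~\ref{cor:uniqueness-result-refine} (with the quantitative bounds of Corollary~\ref{cor:rc-real}) gives $(A\mp a_\pm)\phi\in f^{-\beta}\vB_0^*$; write $A\phi=\pm a_\pm\phi+g$ with $g\in f^{-\beta}\vB_0^*$. On $\{\bar\chi_m=1\}$ one has $\mathop{\mathrm{Re}}a_\pm=a_0=r^{-\alpha/2}\sqrt{2\lambda-2q_0+r^\alpha}$, so for all large $\tilde f$
\begin{equation*}
\langle R(\lambda+i0)\psi-R(\lambda-i0)\psi,\psi\rangle
=i\int_{\{f=\tilde f\}}a_0|\phi|^2\,\mathrm dS_{f=\tilde f}
\pm i\int_{\{f=\tilde f\}}\mathop{\mathrm{Re}}(\bar\phi\,g)\,\mathrm dS_{f=\tilde f}.
\end{equation*}
The left-hand side does not depend on $\tilde f$, so it remains to pass to the limit $\tilde f\to\infty$ on the right: one has $a_0=1+\vO(r^{-\alpha})\to1$ uniformly on $\{f=\tilde f\}$, and the error flux is dominated by $\bigl(\int_{\{f=\tilde f\}}|\phi|^2\,\mathrm dS_{f=\tilde f}\bigr)^{1/2}\bigl(\int_{\{f=\tilde f\}}|g|^2\,\mathrm dS_{f=\tilde f}\bigr)^{1/2}$, the second factor tending to $0$ along a suitable sequence $\tilde f\to\infty$ because $g\in\vB_0^*$.

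The hard part is to promote this to a genuine limit, since $\vB_0^*$-membership controls the surface integrals of $g$ only in a dyadically averaged sense. For this I would show that $\tilde f\mapsto\int_{\{f=\tilde f\}}|\phi|^2\,\mathrm dS_{f=\tilde f}$ has a derivative that is absolutely integrable near infinity. Applying the divergence theorem to the vector field $|\phi|^2|\partial f|^{-2}\partial f$ on a shell $\{\tilde f_0<f<\tilde f\}$ and substituting $\partial^f(|\phi|^2)=-2\mathop{\mathrm{Im}}(\bar\phi\,A\phi)-(\Delta f)|\phi|^2$ together with $A\phi=\pm a_\pm\phi+g$, one finds that the terms proportional to $|\phi|^2$ cancel at leading order — this is precisely the cancellation for which the lower-order corrections in the phase \eqref{eq:improved-phase} are designed — leaving
\begin{equation*}
\tfrac{\mathrm d}{\mathrm d\tilde f}\int_{\{f=\tilde f\}}|\phi|^2\,\mathrm dS_{f=\tilde f}
=\int_{\{f=\tilde f\}}\bigl(\vO(f^{-\mu})|\phi|^2-2r^\alpha\mathop{\mathrm{Im}}(\bar\phi\,g)\bigr)\,\mathrm dS_{f=\tilde f},\qquad\mu=\tfrac{1+\alpha/2}{1-\alpha/2}>1.
\end{equation*}
The $\vO(f^{-\mu})$-term is integrable in $\tilde f$ by the limiting absorption principle bound $\phi\in\vB^*$ (Theorem~\ref{thm:lap-bound}, Corollary~\ref{cor:lap-refined}), since $\int_{\tilde f_0}^\infty\tilde f^{-\mu}\|F_n\phi\|_{L^2}^2$-type sums converge for $\mu>1$; and the $r^\alpha\mathop{\mathrm{Im}}(\bar\phi\,g)$-term is controlled by the radiation condition bounds of Corollary~\ref{cor:rc-real} — the bound on $(A\mp a_\pm)\phi$ in $f^{-\beta}\vB_0^*$ and, where needed, the bound on $\langle p_jf^{2\beta-1}\ell_{jk}p_k\rangle_\phi$, combined with a further integration by parts exploiting the self-adjointness of $A$. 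Hence $\lim_{\tilde f\to\infty}\int_{\{f=\tilde f\}}|\phi|^2\,\mathrm dS_{f=\tilde f}$ exists; this quantity being now bounded, the error flux in the displayed identity converges, and since it tends to $0$ along the sequence furnished by $g\in\vB_0^*$ its limit is $0$. Together with $a_0\to1$ this gives \eqref{eq:2003031457} for each of $\phi=R(\lambda+i0)\psi$ and $\phi=R(\lambda-i0)\psi$.
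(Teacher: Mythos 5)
Your overall architecture (Green's formula on $\{f<\tilde f\}$, then identifying the boundary flux via the radiation condition $A\phi=\pm a_\pm\phi+g$ with $g\in f^{-\beta}\vB_0^*$) is the same as the paper's, and your algebraic reductions, including the cancellation giving $\tfrac{\mathrm d}{\mathrm d\tilde f}\int_{f=\tilde f}|\phi|^2\,{\rm d}S=\int_{f=\tilde f}\bigl(\vO(f^{-\mu})|\phi|^2-2r^\alpha\mathop{\mathrm{Im}}(\bar\phi g)\bigr)\,{\rm d}S$, are correct. The gap is in the one step that carries all the difficulty: your claim that this derivative is \emph{absolutely} integrable near infinity. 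The term $\int_{f=\tilde f}r^\alpha\mathop{\mathrm{Im}}(\bar\phi g)\,{\rm d}S$ is not controlled in $L^1({\rm d}\tilde f)$ by the available bounds: by Cauchy--Schwarz its $L^1$-norm is dominated by $\bigl(\int f^{-1-\varepsilon}|\phi|^2\bigr)^{1/2}\bigl(\int f^{1+\varepsilon}r^{2\alpha}|g|^2\bigr)^{1/2}$, and finiteness of the second factor would require $g\in f^{-\beta}\vB^*$ with $\beta\ge 1+\alpha/(1-\alpha/2)=\beta_c$ (for large $\rho$), i.e.\ exactly the endpoint that Corollary~\ref{cor:rc-real} excludes. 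Your appeal to ``a further integration by parts exploiting the self-adjointness of $A$'' does not supply the missing power of $f$: $A$ is radial, so no integration by parts is available on the level sets $\{f=\tilde f\}$, and on shells it merely reproduces the surface terms you are trying to bound. In fact absolute integrability is most likely false: writing $w(\tilde f)=\int_{f=\tilde f}(\scD_f\phi)\bar\phi\,{\rm d}S$, one has $\int r^\alpha\mathop{\mathrm{Im}}(\bar\phi g)\,{\rm d}S\approx -r^\alpha\mathop{\mathrm{Re}}w$, and $w$ behaves like $e^{-2i\tilde r^{1+\alpha/2}/(1+\alpha/2)}c(\tilde f)$ with $c(\tilde f)\to 0$ at a rate that does not compensate the factor $r^\alpha$ pointwise.

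What actually closes this step (and what the paper's Lemma~\ref{lem:2003171406} does) is conditional, not absolute, convergence: one shows that the antisymmetrized flux $v(\tilde f)=\int_{f=\tilde f}\bigl((\scD_f\phi)\bar\phi-\phi(\overline{\scD_f\phi})\bigr)\,{\rm d}S$ satisfies an ODE $v'=-2i\tilde r^\alpha v+G$ with $G\in L^1((2,\infty))$ (the positive terms $\int r^\alpha|\scD_f\phi|^2$ and $\int|\scD_\omega\phi|^2$ being $L^1$ by Corollary~\ref{cor:rc-real} with a \emph{sub}critical $\beta$), then multiplies by the integrating factor $e^{2i\tilde r^{1+\alpha/2}/(1+\alpha/2)}$ to see that $v$ converges, and finally uses $v\in L^1$ to force $\liminf|v|=0$, hence $v\to 0$. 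The existence of $\lim_{\tilde f\to\infty}\int_{f=\tilde f}|\phi|^2\,{\rm d}S$ then falls out of the Green's-formula identity itself rather than being an input to it. Your proof needs this oscillation argument (or an equivalent) in place of the absolute-integrability claim; with that substitution the rest of your reasoning, including the subsequence argument from $g\in f^{-\beta}\vB_0^*$ and the replacement of $a_0$ by $1$, goes through.
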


\begin{lemma}\label{lem:2003051708}
Let $\psi\in C_0^\infty(\R^d)$. 
Then there exists a weak limit
\begin{equation}\label{eq:2003051710}
\mathop{\mathrm{w\text{-}lim}}_{f\to\infty}\scF^\pm(\lambda, f)\psi 
\equiv \scF^\pm(\lambda)\psi 
\quad \text{ in }\  L^2(\mathbb S^{d-1}).
\end{equation}
\end{lemma}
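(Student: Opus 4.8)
The plan is to establish weak convergence of $\scF^\pm(\lambda,f)\psi$ in $L^2(\mathbb S^{d-1})$ by combining three ingredients: a uniform bound in $f$, the identification of a candidate weak limit via testing against a dense class of functions, and the Cauchy-type argument that upgrades "every subsequence has a further subsequence converging weakly to the same limit" to genuine weak convergence. First I would record the uniform bound already contained in Theorem~\ref{thm:existence-of-wave-matrix}'s statement that $\scF^\pm(\lambda,f)$ is bounded from $C_0^\infty(\R^d)$ into $L^2(\mathbb S^{d-1})$ uniformly in $f$; this boundedness is itself an easy consequence of the definition of $\scF^\pm(\lambda,f)$ together with the $\vB$-$\vB^*$ bound on $R(\lambda\pm i0)$ from Corollary~\ref{cor:lap-refined}, since $\|r^{(d+\alpha/2-1)/2}F_n\phi\|_{L^2}^2$ is comparable (after the change of variables from $r$ to $f$, using $\partial f = r^{-\alpha/2}\partial r$) to $2^{-n}\|F_n\phi\|_{L^2}^2 \le C\|\phi\|_{\vB^*}^2$. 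By weak compactness of balls in the Hilbert space $L^2(\mathbb S^{d-1})$, along any sequence $f_j\to\infty$ there is a weakly convergent subsequence.

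Next I would pin down the limit. Fix $v\in C^\infty(\mathbb S^{d-1})$ and compute $\langle \scF^\pm(\lambda,f)\psi, v\rangle_{L^2(\mathbb S^{d-1})}$. Unwinding the definition, this pairing is, up to the explicit constant phase, an integral over $\mathbb S^{d-1}$ of $r^{(d+\alpha/2-1)/2}e^{\mp i\theta(\lambda,\cdot\omega)}(R(\lambda\pm i0)\psi)(\pm r\omega)\,\overline{v(\omega)}$, which I would rewrite as a spherical integral $\int_{\mathbb S^{d-1}}(\cdots)\,{\rm d}\omega$ evaluated on the sphere $f(x)=\tilde f$ (equivalently $r=$ const for $r$ large), i.e. as a flux-type integral $\int_{f=\tilde f}\overline{\phi_\lambda^\mp[v]}\,\phi\,{\rm d}S_{f=\tilde f}$-type expression with $\phi = R(\lambda\pm i0)\psi$ and $\phi_\lambda^\mp[v]$ the approximate eigenfunction of \eqref{eq:2004021344}. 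The key point is then that such radial limits exist: I would use $(H_\alpha-\lambda)\phi = \psi\in C_0^\infty$ so that $\psi$ is supported in a fixed ball, and integrate the Green-type identity $\int (\overline{\phi_\lambda^\mp[v]}(H_\alpha-\lambda)\phi - \overline{(H_\alpha-\lambda)\phi_\lambda^\mp[v]}\,\phi)$ over the region $\{f\le\tilde f\}$; the bulk terms are controlled by \eqref{eq:2004021416} ($\psi_\lambda^\mp[v]\in\vB$) together with $\phi\in\vB^*$ and the radiation condition bound of Corollary~\ref{cor:rc-real}, which forces the cross terms $(A\mp a_\pm)\phi$ into $\vB_0^*$, while the boundary term at $f=\tilde f$ converges as $\tilde f\to\infty$. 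This produces a limit that depends only on $\psi$ and $v$, linearly in $v$, and is bounded by $C\|\psi\|_\vB\|v\|_{L^2}$; hence it defines the pairing $\langle \scF^\pm(\lambda)\psi, v\rangle$ with a well-defined $\scF^\pm(\lambda)\psi\in L^2(\mathbb S^{d-1})$.

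Finally, since $C^\infty(\mathbb S^{d-1})$ is dense in $L^2(\mathbb S^{d-1})$ and $\{\scF^\pm(\lambda,f)\psi\}_{f}$ is bounded, convergence of $\langle \scF^\pm(\lambda,f)\psi, v\rangle$ for all $v$ in this dense class, plus the uniform bound, gives $\scF^\pm(\lambda,f)\psi\rightharpoonup \scF^\pm(\lambda)\psi$ weakly in $L^2(\mathbb S^{d-1})$ as $f\to\infty$ (not merely along subsequences). I expect the main obstacle to be the second step: justifying that the oscillatory boundary integrals $\int_{f=\tilde f}\overline{\phi_\lambda^\mp[v]}\,\phi\,{\rm d}S$ actually converge and computing their limit in closed form. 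This is where the precise form of the phase $\theta$ in \eqref{eq:theta-eikonal}, the eikonal approximation \eqref{eq:2003151426}, and the radiation condition bound Corollary~\ref{cor:rc-real} (to absorb the error terms coming from $\phi\notin L^2$ and from $\theta$ being only an approximate eikonal solution) must all be used carefully; controlling the non-stationary phase contributions and the lower-order amplitude corrections so that only the leading spherical-wave term survives is the technical heart of the argument. The strong convergence asserted in \eqref{eq:2003161503} and the Parseval-type identity \eqref{eq:2003161512} will then follow in the proof of Theorem~\ref{thm:existence-of-wave-matrix} by combining this lemma with Lemma~\ref{lem:2003031452}.
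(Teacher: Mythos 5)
Your overall route is the same as the paper's: test $\scF^\pm(\lambda,f)\psi$ against $v\in C^\infty(\mathbb S^{d-1})$, rewrite the pairing as a flux integral of $\phi=R(\lambda\pm i0)\psi$ against the approximate eigenfunction $u=r^{-(d+\alpha/2-1)/2}e^{\pm i\theta}v$, apply Green's formula on $\{f\le\tilde f\}$ so that the bulk term converges to $\langle u,\psi\rangle-\langle g,R(\lambda\pm i0)\psi\rangle$ with $g=(H_\alpha-\lambda)u\in\vB$, show the radial cross term on the boundary vanishes, and finish by density plus a uniform bound; this is precisely the paper's argument leading to \eqref{eq:2003101717} and \eqref{eq:2003301812}.

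There is, however, a genuine gap in your first step. You claim that $\sup_{f\ge2}\|\scF^\pm(\lambda,f)\psi\|_{L^2(\mathbb S^{d-1})}<\infty$ is an easy consequence of the definition together with the $\vB\to\vB^*$ bound on $R(\lambda\pm i0)$. It is not: after the change of variables $\mathrm dx=r^{d+\alpha/2-1}\,\mathrm df\,\mathrm d\omega$, the $\vB^*$ norm controls only the dyadic averages $2^{-n}\int_{2^n}^{2^{n+1}}\|\scF^\pm(\lambda,f)\psi\|^2_{L^2(\mathbb S^{d-1})}\,\mathrm df$, not the trace on an individual level set $\{f=\tilde f\}$, and an averaged bound on a continuous function of $f$ does not yield a supremum bound (the crude interpolation $\sup g\le R^{-1}\int g+\int|g'|$, using $p^fR(\lambda\pm i0)\psi\in\vB^*$ for the derivative, only yields a bound that grows with $\tilde f$). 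The paper instead obtains the uniform bound from \eqref{eq:2003101541}, i.e.\ from Lemma~\ref{lem:2003031452}, whose proof already requires the radiation-condition bounds together with the oscillatory ODE argument of Lemma~\ref{lem:2003171406}; so this ingredient is not elementary and must be imported from that lemma, which you invoke only at the very end and for a different purpose. Relatedly, in your second step the vanishing of the boundary cross term $\int_{f=\tilde f}u\,\overline{\scD_f\phi}\,\mathrm dS_{f=\tilde f}$ along the \emph{full} limit $\tilde f\to\infty$ (Corollary~\ref{cor:rc-real} directly gives decay only along a subsequence) again needs the Lemma~\ref{lem:2003171406}-type device of multiplying by $e^{2i\tilde r^{1+\alpha/2}/(1+\alpha/2)}$ and showing the resulting derivative is integrable; you correctly identify this as the technical heart but leave it unresolved. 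With the uniform bound taken from Lemma~\ref{lem:2003031452} and the cross-term argument supplied, your plan coincides with the paper's proof.
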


\begin{lemma}\label{lem:2023181624}
Let $\psi\in C_0^\infty(\R^d)$.
Then there exists a sequence $\{f_n\}_{n\in\N}$ 
satisfying $f_n\to\infty$ as $n\to\infty$ 
such that $\scF^\pm(\lambda, f_n)\psi$ tends to $\scF^\pm(\lambda)\psi$ 
in $L^2(\mathbb S^{d-1})$ as $n\to\infty$.
\end{lemma}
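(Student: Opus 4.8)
The plan is to deduce Lemma~\ref{lem:2023181624} from the weak convergence of Lemma~\ref{lem:2003051708} together with a quantitative control of the $f$-variation of $\scF^\pm(\lambda,f)\psi$. I would first record the elementary identity
\[
\|\scF^\pm(\lambda,\tilde f\,)\psi\|_{L^2(\mathbb S^{d-1})}^2=\tfrac1{2\pi}\int_{f(x)=\tilde f}|\phi|^2\,{\rm d}S_{f=\tilde f},\qquad \phi=R(\lambda\pm i0)\psi,
\]
which follows by a direct computation from the defining formula once one notes that the level sets of $f$ are spheres of radius $r$ for $\tilde f$ large and that ${\rm d}S_{f=\tilde f}$ differs from the Euclidean surface measure on such a sphere by the factor $|\partial f|^{-1}=r^{\alpha/2}$, matching the weight $r^{d+\alpha/2-1}$ built into $\scF^\pm(\lambda,f)$. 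By Lemma~\ref{lem:2003031452} the right-hand side has a limit, so $N^2:=\lim_{f\to\infty}\|\scF^\pm(\lambda,f)\psi\|^2$ exists and $\{\scF^\pm(\lambda,f)\psi\}_f$ is bounded in $L^2(\mathbb S^{d-1})$. Since, by Lemma~\ref{lem:2003051708} and weak lower semicontinuity of the norm,
\[
\|\scF^\pm(\lambda,f)\psi-\scF^\pm(\lambda)\psi\|^2\xrightarrow[f\to\infty]{}N^2-\|\scF^\pm(\lambda)\psi\|^2\ge0,
\]
it suffices to produce a sequence $f_n\to\infty$ along which $\{\scF^\pm(\lambda,f_n)\psi\}$ is Cauchy in $L^2(\mathbb S^{d-1})$; its limit is then forced by Lemma~\ref{lem:2003051708} to equal $\scF^\pm(\lambda)\psi$, and in fact $\|\scF^\pm(\lambda)\psi\|=N$.

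Next I would differentiate in $f$. Using $\partial_f=r^{\alpha/2}\partial_r$, $\partial_f\theta=r^\alpha+\lambda$, the expression $A=p^f-\tfrac i2(\Delta f)$, the explicit form of $\Delta f$, and the asymptotic phase \eqref{eq:improved-phase} together with the eikonal relation \eqref{eq:2003151426}, a computation in which all $\mathcal O(r^{\alpha/2-1})$- and $\mathcal O(1)$-contributions cancel yields
\[
\partial_f\bigl(\scF^\pm(\lambda,f)\psi\bigr)(\omega)=\kappa_\pm\, e^{\mp i\theta}r^{(d+\alpha/2-1)/2}\Bigl[i\,r^{\alpha}\bigl((A\mp a_\pm)\phi\bigr)(\pm r\omega)+E(r)\,\phi(\pm r\omega)\Bigr],
\]
where $\kappa_\pm$ is the normalizing constant of $\scF^\pm(\lambda,f)$ and $E=\mathcal O(f^{-1-\varepsilon})$ for some $\varepsilon>0$ (this exponent being exactly where the restriction $\alpha>2/3$ enters, through $\alpha/(1-\alpha/2)>1$). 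The contribution of $E\,\phi$ is harmless: its $L^2(\mathbb S^{d-1})$-norm, integrated over two consecutive dyadic blocks $\{2^k\le f<2^{k+2}\}$ and estimated by Cauchy--Schwarz against the bound $\phi\in\vB^*$ of Theorem~\ref{thm:lap-bound}, gives a summable series in $k$.

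The crucial term carries the large weight $r^{\alpha}$, and here I would integrate by parts in $f$, writing $e^{\mp i\theta}=(\mp i(r^{\alpha}+\lambda))^{-1}\partial_f e^{\mp i\theta}$: the amplitude $r^{\alpha}/(r^{\alpha}+\lambda)$ is bounded, so the dangerous factor $r^{\alpha}$ is cancelled at the cost of a boundary term proportional to $r^{(d+\alpha/2-1)/2}(A\mp a_\pm)\phi$ and a remainder with one further $f$-derivative, which brings in either a gain of a power $r^{\alpha/2-1}\sim f^{-1}$ or the quantity $p^f(A\mp a_\pm)\phi$ (the latter handled similarly, using the equation $(H_\alpha-\lambda)\phi=\psi$ to rewrite $p^2\phi$ and the LAP bounds of Theorem~\ref{thm:lap-bound}). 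Now I would invoke the sharp radiation condition bound of Corollary~\ref{cor:rc-real}: for $\psi\in C_0^\infty(\R^d)$ one has $f^\beta\psi\in\vB$ for every $\beta$, hence $f^\beta(A\mp a_\pm)\phi\in\vB^*$ for all $\beta\in[0,\beta_c)$; in dyadic terms this gives $2^{-n}\!\int_{\{2^n\le f<2^{n+1}\}}r^{d+\alpha/2-1}\|(A\mp a_\pm)\phi(r\,\cdot)\|_{L^2(\mathbb S^{d-1})}^2\,{\rm d}f\lesssim 2^{-2n\beta}$, and likewise $2^{-n}\!\int r^{d+\alpha/2-1}\|\phi(r\,\cdot)\|^2\,{\rm d}f\lesssim1$. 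By Chebyshev's inequality I would pick $f_n$ inside the block $\{2^n\le f<2^{n+1}\}$ in the majority set on which simultaneously $r^{(d+\alpha/2-1)/2}\|(A\mp a_\pm)\phi(r(f_n)\,\cdot)\|\lesssim 2^{-n\beta}$ and $r^{(d+\alpha/2-1)/2}\|\phi(r(f_n)\,\cdot)\|\lesssim1$. With this choice the boundary terms in the telescoped sum $\scF^\pm(\lambda,f_n)\psi-\scF^\pm(\lambda,f_m)\psi=\sum_{k=n}^{m-1}\bigl(\scF^\pm(\lambda,f_k)\psi-\scF^\pm(\lambda,f_{k+1})\psi\bigr)$ are $\mathcal O(2^{-k\beta})$, and all remaining contributions are summable as well, so $\sum_k\|\scF^\pm(\lambda,f_{k+1})\psi-\scF^\pm(\lambda,f_k)\psi\|<\infty$ and $\{\scF^\pm(\lambda,f_n)\psi\}$ converges strongly, completing the proof.

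I expect the main obstacle to be precisely the weight $r^{\alpha}$ attached to $(A\mp a_\pm)\phi$ in $\partial_f\scF^\pm(\lambda,f)\psi$: against the $\vB^*$-type bounds this factor is not globally integrable in $f$, so one cannot simply integrate $\|\partial_f\scF^\pm(\lambda,f)\psi\|$ to get convergence of the whole family. The resolution — trading $r^{\alpha}$ for a boundary term via the rapid oscillation of $e^{\mp i\theta}$, then choosing $f_n$ in the ``good'' part of each dyadic block using the extra decay $f^{-\beta}$ of Corollary~\ref{cor:rc-real} — is intrinsically a statement along a sequence, which is why Lemma~\ref{lem:2023181624} is phrased that way; the full limit \eqref{eq:2003161503} is recovered only afterwards, in Theorem~\ref{thm:existence-of-wave-matrix}, by combining this lemma with the norm convergence $\|\scF^\pm(\lambda,f)\psi\|\to N$ and the weak convergence of Lemma~\ref{lem:2003051708}. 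A secondary technical point is the appearance of $p^f(A\mp a_\pm)\phi$ in the integration-by-parts remainder, which must be tamed by bootstrapping the available resolvent bounds rather than quoted directly.
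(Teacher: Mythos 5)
Your overall architecture agrees with the paper's: weak convergence from Lemma~\ref{lem:2003051708}, existence of the norm limit from Lemma~\ref{lem:2003031452}, selection of $f_n$ inside dyadic blocks by Chebyshev so that the boundary data $r^{(d+\alpha/2-1)/2}(A\mp a_\pm)\phi$ is small (this is exactly \eqref{eq:2003191151}--\eqref{eq:2019031218}), and the identity $\partial_f\bigl(\scF^\pm(\lambda,f)\psi\bigr)=\kappa_\pm e^{\mp i\theta}r^{(d+\alpha/2-1)/2}\,r^\alpha\scD^\pm_f\phi$ with $-i\scD^\pm_f=A\mp a_\pm+\vO(r^{-\alpha}f^{-1-\varepsilon})$ for $\alpha>2/3$ is correct. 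The gap is in the remainder of your radial integration by parts. When $\partial_f$ falls on $\scD^\pm_f\phi$, the identities \eqref{eq:2003021706} and \eqref{eq:2004031644} (equivalently, the factorization \eqref{eq:1912161558} of $H_\alpha-\lambda$) show that, besides a term proportional to $r^\alpha\scD^\pm_f\phi$ which merely reproduces the original integral with a harmless coefficient, you pick up the term $\scD_\omega^*\scD_\omega\phi$ --- a \emph{second angular derivative} of $\phi$. No available estimate controls this in $L^2(\mathbb S^{d-1})$-norm with decay: Corollary~\ref{cor:rc-real} only bounds the quadratic form $\langle p_jf^{2\beta-1}\ell_{jk}p_k\rangle_\phi$, i.e.\ first angular derivatives, and the LAP bound $\|r^{-\alpha}p^2\phi\|_{\vB^*}<\infty$ carries no decay, so each dyadic block contributes $\vO(2^k)$ to your telescoped sum rather than a summable quantity. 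Inside a norm there is no pairing against which to integrate by parts on the sphere, so one of the two angular derivatives cannot be moved off $\phi$; this is not a "secondary technical point" that bootstrapping repairs, but the obstruction that forces a change of method.

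The paper's resolution (Proposition~\ref{prop:2003201558}) is to estimate the differences \emph{weakly}: Green's formula puts the angular Laplacian on the test function $v$, where $\int r^{-2}(\Delta_{\mathbb S^{d-1}}v)\bar\phi$ can be integrated by parts on $\mathbb S^{d-1}$ into $-\int(\scD_\omega v)\overline{\scD_\omega\phi}$, with only first angular derivatives on each factor, both controlled by Corollary~\ref{cor:rc-real}. Taking $v=\scF(\lambda,f_m)\psi$ and letting $n\to\infty$, $m\to\infty$ gives $\|\scF(\lambda,f_m)\psi\|^2\to\|\scF(\lambda)\psi\|^2$, and strong convergence along $\{f_n\}$ then follows from weak convergence plus convergence of norms --- the endgame you intended, but reached through the dual pairing rather than a direct norm bound on $\int\partial_f\scF\,{\rm d}f$. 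To repair your argument you would have to replace the norm estimate of the telescoped differences by precisely this weak estimate.
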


By using the function \eqref{eq:theta-eikonal}, 
we define the differential operators $\scD^\pm_j, \scD^\pm, \scD^\pm_f$ 
and $\scD^\pm_\omega$ by 
\begin{equation*}
\begin{split}
\scD^\pm_j &= \partial_j 
+\tfrac{d+\alpha/2-1}{2r}\tfrac{x_j}{r}\mp i\tfrac{\partial\theta}{\partial x_j}(\lambda, x), 
\quad j=1,\ldots,d, 
\\
\scD^\pm &= (\scD^\pm_1, \ldots, \scD^\pm_d), 
\\
\scD^\pm_f &= r^{-\alpha/2}\tfrac{x_j}r\!\cdot\!\scD^\pm_j, 
\\ 
\scD^\pm_\omega &= \partial_x-\tfrac{x}{r}\partial_{|x|} 
\mp i\bigl(\tfrac{\partial\theta}{\partial x}(\lambda, x)-\tfrac{x}{r}\tfrac{\partial\theta}{\partial |x|}(\lambda, x)\bigr),
\end{split}
\end{equation*}
respectively.
On the region $\{x\in \R^d\,|\,r(x)\ge2\}$, $\scD^\pm_f$ is expressed as  
\begin{equation*}
\scD^\pm_f = r^{-\alpha}\partial_f + \tfrac{d+\alpha/2-1}{2}r^{-\alpha/2-1}\mp ir^{-\alpha/2}\tfrac{\partial\theta}{\partial |x|}(\lambda, x),
\end{equation*}
and these operators satisfy the identity
\begin{equation}\label{eq:2003021706}
\sum_{j=1}^d(\scD^\pm_j)^*\scD^\pm_j 
= (r^{\alpha/2}\scD^\pm_f)^*r^{\alpha/2}\scD^\pm_f 
+ (\scD^\pm_\omega)^*\scD^\pm_\omega 
= (\scD^\pm_f)^*r^\alpha\scD^\pm_f + (\scD^\pm_\omega)^*\scD^\pm_\omega,
\end{equation}
where 
\begin{align*}
(\scD^\pm_j)^* &= -\partial_j + \tfrac{d+\alpha/2-1}{2r}\tfrac{x_j}{r} 
\pm i\tfrac{\partial\theta}{\partial x_j}(\lambda, x), 
\\ 
(\scD^\pm_f)^* &= -\partial_fr^{-\alpha} - \tfrac{d+\alpha/2-1}{2}r^{-\alpha/2-1} 
\pm ir^{-\alpha/2}\tfrac{\partial\theta}{\partial |x|}(\lambda, x),
\\ 
(\scD^\pm_\omega)^* &= -\partial_x+\partial_{|x|}\tfrac{x}{r}+(d-1)r^{-1}\tfrac{x}{r} 
\pm i\bigl(\tfrac{\partial\theta}{\partial x}(\lambda, x)-\tfrac{x}{r}\tfrac{\partial\theta}{\partial |x|}(\lambda, x)\bigr).
\end{align*}

We also note that 
$-i\scD^\pm_f=A\mp a_\pm+\vO(r^{-\alpha}f^{-\min\{1+\rho, \alpha/(1-\alpha/2)\}})$, 
and thus by noting $2/3<\alpha<2$ 
it follows from Corollary~\ref{cor:rc-real} 
that for any $\psi\in\vB$ and $\beta\in[0,\beta_c)$ 
\begin{equation}\label{eq:2003181148}
f^\beta\scD^\pm_fR(\lambda\pm i0)\psi\in \vB^*.
\end{equation}

We prove Lemmas~\ref{lem:2003031452}--\ref{lem:2023181624} 
and Theorem~\ref{thm:existence-of-wave-matrix} only for the upper sign. 
Thus in the following we consider only for 
$\scD^+_j, \scD^+, \scD^+_f, \scD^+_\omega, \scF^+(\lambda, f)$ and $\scF^+(\lambda)$, 
and then, for notational simplicity, we omit the superscript from these operators 
such as $\scD_j$.

First we see the following property of $\phi=R(\lambda\pm i0)\psi$ for 
$\psi\in C_0^\infty(\R^d)$.
\begin{lemma}\label{lem:2003171406}
Let $\psi\in C_0^\infty(\R^d)$ and $\phi=R(\lambda\pm i0)\psi$.
Then
\begin{equation*}
\lim_{\tilde f\to\infty} 
\int_{f(x)=\tilde f}(\scD_f\phi)\bar\phi - \phi(\overline{\scD_f\phi}) \,{\rm d}S_{f=\tilde f} 
= 0.
\end{equation*}
\end{lemma}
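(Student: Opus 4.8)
The plan is to relate the boundary integral over the hypersurface $\{f=\tilde f\}$ to a bulk integral over the region $\{f<\tilde f\}$ via a divergence-type identity, and then to show that the resulting bulk quantity is integrable in $f$ so that the boundary term must vanish along a sequence $\tilde f\to\infty$; since we already know (via Corollary~\ref{cor:rc-real} and \eqref{eq:2003181148}) that the boundary integrand is controlled in $\vB^*$-fashion, the limit exists and equals $0$. Concretely, write $W(x)=(\scD_f\phi)\bar\phi-\phi\,\overline{(\scD_f\phi)}=2i\,\mathop{\mathrm{Im}}\bigl(\overline{\phi}\,\scD_f\phi\bigr)$, which is a purely imaginary quantity (a ``current''), and observe that $\scD_f = r^{-\alpha}\partial_f+(\text{lower order, including }\mp i r^{-\alpha/2}\partial_{|x|}\theta)$ so that $W$ is, up to multiplicative factors of $r$, essentially the radial probability current of $\phi$ measured against the phase $\theta$. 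The coarea formula gives $\int_{\{f<\tilde f\}}(\text{suitable divergence of }W)\,\mathrm dx = \int_{f=\tilde f} W\cdot\frac{\partial f}{|\partial f|}\,\mathrm dS_{f=\tilde f}$ up to harmless Jacobian factors $|\partial f|=r^{-\alpha/2}|\partial r|$; the point is that the divergence we need is governed by the equation $(H_\alpha-\lambda)\phi=\psi$, and since $\psi\in C_0^\infty$ has compact support, the inhomogeneous contribution is a fixed constant once $\tilde f$ is large, while the homogeneous part produces precisely a term of the form handled by the radiation condition.

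First I would make the change of variables to $(f,\omega)$-coordinates (valid for $r\ge 2$, which is all that matters) and rewrite $\phi$ in the form $\phi = r^{-(d+\alpha/2-1)/2}e^{\pm i\theta}u$, so that $\scD_f\phi$ becomes, after the conjugation, $r^{-(d+\alpha/2-1)/2}e^{\pm i\theta}\cdot r^{-\alpha}\partial_f u$ up to an error that is $\vO(r^{-\alpha}f^{-\min\{1+\rho,\alpha/(1-\alpha/2)\}})$ times $\phi$-type quantities — exactly the remainder identified just before \eqref{eq:2003181148}. Then $W$, transported to the sphere, becomes $2i\,r^{-(d+\alpha/2-1)}r^{-\alpha}\mathop{\mathrm{Im}}(\bar u\,\partial_f u)$ plus a lower-order error, and the surface measure $\mathrm dS_{f=\tilde f}$ carries a compensating $r^{d-1}$, with $|\partial f|^{-1}=r^{\alpha/2}|\partial r|^{-1}$ providing another factor; tracking these weights one checks that the leading term of the boundary integral is $\mp\,2i\!\int_{\mathbb S^{d-1}}\mathop{\mathrm{Im}}(\bar u\,\partial_f u)\,\mathrm d\omega$ up to $O(1)$ corrections that are, crucially, integrable in $f$ by Theorem~\ref{thm:lap-bound} and Corollary~\ref{cor:rc-real}. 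Next, differentiating $\|\scF(\lambda,\tilde f)\psi\|^2$ in $\tilde f$ (or, equivalently, using the Green-type identity for $(H_\alpha-\lambda)\phi=\psi$ integrated over $\{f<\tilde f\}$) shows that the $f$-derivative of the boundary term equals a bulk integral of quantities all appearing in the LAP and radiation-condition bounds, hence in $L^1(\mathrm df)$ near infinity; therefore the boundary term has a finite limit, and by Lemma~\ref{lem:2003031452} (whose right-hand side is $i\lim\int_{f=\tilde f}|\phi|^2$, manifestly real and finite) together with the reality/imaginarity structure the limit of $W$'s integral must be a purely imaginary number whose real part is forced to be $0$; a short argument using $\phi\in\vB^*$ and $\scD_f\phi\in\vB^*$ (so $W\in L^1$ of the sphere uniformly) upgrades ``limit along a sequence'' to ``full limit,'' and the value is $0$.

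The main obstacle I anticipate is the bookkeeping of weights: one must verify that every correction term — arising from (a) the difference between $\scD_f$ and the model operator $r^{-\alpha}\partial_f$, (b) the difference between $A\mp a_\pm$ and $-i\scD_f$, (c) the approximate-eikonal error \eqref{eq:2003151426}, and (d) the Jacobian of the $(f,\omega)$ chart versus Euclidean measure — is integrable in $\mathrm df$ on $[1,\infty)$ after being tested against $\phi=R(\lambda\pm i0)\psi$, using only the norms bounded in Theorem~\ref{thm:lap-bound} and Corollary~\ref{cor:rc-real} (namely $\|\phi\|_{\vB^*}$, $\|p^f\phi\|_{\vB^*}$, $\langle p_jf^{-1}\ell_{jk}p_k\rangle_\phi^{1/2}$, and the $\beta>0$ gain $f^\beta\scD_f\phi\in\vB^*$). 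The gain of a strictly positive power $f^\beta$ from Corollary~\ref{cor:rc-real} is what converts the borderline $\vB^*$-type bound into genuine $L^1$-in-$f$ decay and kills the ``radial'' part of the current; the condition $2/3<\alpha<2$ enters exactly to guarantee $\beta_c>0$ and that the eikonal error is $o(f^{-1})$. Once the integrability is in hand, the vanishing of the limit is essentially formal: it is the statement that the ``total outgoing flux minus incoming flux'' of a pure outgoing (resp.\ incoming) resolvent state through the cross-section $\{f=\tilde f\}$ stabilizes, and the antisymmetric combination $W$ — being the imaginary part of $\bar\phi\,\scD_f\phi$ rather than $|\phi|^2$ — has no surviving leading term because $\scD_f$ already subtracts the dominant phase gradient $\pm i\,\partial_{|x|}\theta$.
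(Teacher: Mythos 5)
Your overall strategy --- differentiate the flux $v(\tilde f)=\int_{f=\tilde f}\bigl[(\scD_f\phi)\bar\phi-\phi(\overline{\scD_f\phi})\bigr]\,{\rm d}S_{f=\tilde f}$ in $\tilde f$, control the derivative, and combine existence of the limit with integrability in $\tilde f$ --- is the same as the paper's, but the proposal has a genuine gap at exactly the delicate step. When one actually computes $\tfrac{\rm d}{{\rm d}\tilde f}v(\tilde f)$ using $-\scD_f^*r^\alpha\scD_f\phi$ and the equation $(H_\alpha-\lambda)\phi=\psi$, the result is \emph{not} ``a bulk integral of quantities all appearing in the LAP and radiation-condition bounds, hence in $L^1({\rm d}f)$'': the cross term $-2ir^\alpha\scD_f\phi$ produces
\[
\tfrac{\rm d}{{\rm d}\tilde f}v(\tilde f)=-2i\,\tilde r^\alpha\,v(\tilde f)+G(\tilde f),
\]
where only $G\in L^1((2,\infty))$, while the first summand carries the unbounded weight $\tilde r^\alpha\sim\tilde f^{\alpha/(1-\alpha/2)}$ and is not absorbed by Theorem~\ref{thm:lap-bound} or Corollary~\ref{cor:rc-real} in any routine way. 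Your write-up never identifies this term, so the claim ``therefore the boundary term has a finite limit'' is unsupported. The paper's resolution is to observe that $-2i\tilde r^\alpha v$ is $i$ times a real multiple of $v$, so it only rotates the phase: setting $u=e^{2i\tilde r^{1+\alpha/2}/(1+\alpha/2)}v$ gives $u'=e^{2i(\cdot)}G\in L^1$, whence $\lim u$ exists. (One could alternatively try to show $\tilde r^\alpha v\in L^1$ directly, but that requires taking $\beta$ in Corollary~\ref{cor:rc-real} strictly larger than $1/(1-\alpha/2)$ and doing precisely the dyadic bookkeeping you defer; as written, nothing in your argument supplies this. A similar remark applies to the angular term $\int(\scD_\omega^*\scD_\omega\phi)\bar\phi\,{\rm d}S$, which is not $L^1$ either but cancels in the antisymmetrization because it equals $\int|\scD_\omega\phi|^2\,{\rm d}S$, which is real.)

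The second gap is the argument that the limit equals $0$. Invoking Lemma~\ref{lem:2003031452} here is circular: in the paper that lemma is \emph{proved from} the present one. And the observation that the limit ``must be a purely imaginary number whose real part is forced to be $0$'' is vacuous, since $W=2i\mathop{\mathrm{Im}}(\bar\phi\,\scD_f\phi)$ is purely imaginary by construction; it says nothing about the imaginary part. The correct mechanism is the one you only gesture at in passing: Cauchy--Schwarz together with $\phi\in\vB^*$ and $f^\beta\scD_f\phi\in\vB^*$ for some $\beta>1/2$ gives $\int_2^\infty|v(\tilde f)|\,{\rm d}\tilde f<\infty$ (integrability in the \emph{radial} variable, not merely uniform boundedness of $W$ on each sphere), hence $\liminf_{\tilde f\to\infty}|v(\tilde f)|=0$; combined with the existence of $\lim u$ (equivalently $\lim|v|$) from the integrating-factor step, this forces the limit to vanish. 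Without the ODE structure above, neither the existence of the limit nor its value is established by the proposal.
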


\begin{proof}
Let $\psi$ and $\phi$ be as in the assertion, and $\tilde f\ge2$.
Noting the expressions of $\scD_f$ and $\scD_f^*$ we can compute as 
\begin{align*}
&\quad\hspace{-3mm}
\tfrac{\rm d}{{\rm d}\tilde f}\int_{f(x)=\tilde f}(\scD_f\phi)\bar\phi \,{\rm d}S_{f=\tilde f} 
\\&= 
\int_{f(x)=\tilde f} r^\alpha|\scD_f\phi|^2\,{\rm d}S_{f=\tilde f} 
- (d+\alpha/2-1)\int_{f(x)=\tilde f}r^{\alpha/2-1}(\scD_f\phi)\bar\phi\,{\rm d}S_{f=\tilde f}  
\\&\phantom{{}={}}{}
+ \int_{f(x)=\tilde f}(-\scD_f^*r^\alpha\scD_f\phi)\bar\phi\,{\rm d}S_{f=\tilde f}. 
\end{align*}
We can see by Corollary~\ref{cor:rc-real} that 
the first and the second terms belong to $L^1((2,\infty))$.
As for the third term, 
let us further compute the factor $-\scD_f^*r^\alpha\scD_f\phi$.
By a straightforward calculation we have
\begin{equation}
\begin{split}\label{eq:2004031644}
&\quad\hspace{-3mm}
\sum_{j=1}^d\scD_j^*\scD_j\phi 
- 2i\sum_{j=1}^d\tfrac{\partial\theta}{\partial x_j}(\lambda, x)\scD_j\phi 
\\&= 
2\psi 
- \Bigl\{\bigl|\tfrac{\partial\theta}{\partial x}(\lambda, x)\bigr|^2-r^\alpha+2q-2\lambda 
+ \tfrac{(d+\alpha/2-1)(d-\alpha/2-3)}{4r^2} \Bigr\}\phi 
\\&\phantom{{}={}}{}
+ i\Bigl\{(\Delta\theta)(\lambda, x)-(d+\alpha/2-1)r^{-1}\tfrac{\partial\theta}{\partial |x|}(\lambda, x) \Bigr\}\phi. 
\end{split}
\end{equation}
Therefore by using \eqref{eq:2003021706}, \eqref{eq:2004031644} and 
the relation $\scD=\scD_\omega+\tfrac{x}rr^{\alpha/2}\scD_f$, we can obtain 
\begin{align*}
-\scD_f^*r^\alpha\scD_f\phi 
&= 
\scD_\omega^*\scD_\omega\phi 
- 2ir^\alpha\scD_f\phi 
- 2i\lambda\scD_f\phi 
- 2\psi 
\\&\phantom{{}={}}{}
+ \Bigl\{\bigl|\tfrac{\partial\theta}{\partial x}(\lambda, x)\bigr|^2 
- r^\alpha + 2q - 2\lambda+\tfrac{(d+\alpha/2-1)(d-\alpha/2-3)}{4r^2} \Bigr\}\phi 
\\&\phantom{{}={}}{}
- i\Bigl\{(\Delta\theta)(\lambda, x) 
- (d+\alpha/2-1)r^{-1}\tfrac{\partial\theta}{\partial |x|}(\lambda, x) \Bigr\}\phi. 
\end{align*}
Hence by Theorem~\ref{thm:lap-bound}, Corollary~\ref{cor:rc-real}, 
\eqref{eq:theta-eikonal}, \eqref{eq:2003151426} and \eqref{eq:2003181148} 
we have 
\begin{equation}
\begin{split}\label{eq:2003181202}
&\quad\hspace{-3mm}
\tfrac{\rm d}{{\rm d}\tilde f}\int_{f(x)=\tilde f}(\scD_f\phi)\bar\phi \,{\rm d}S_{f=\tilde f} 
\\&= 
\int_{f(x)=\tilde f}(\scD_\omega^*\scD_\omega\phi)\bar\phi \,{\rm d}S_{f=\tilde f} 
- 2i\tilde r^\alpha\int_{f(x)=\tilde f}(\scD_f\phi)\bar\phi \,{\rm d}S_{f=\tilde f} 
+ G_1(\tilde f),
\end{split}
\end{equation}
where $G_1$ is a certain function which satisfying $\int_2^\infty|G_1(s)|\,{\rm d}s<\infty$, 
and $\tilde r\ge2$ is a solution of 
\begin{equation*}
\tilde f = (\tilde r^{1-\alpha/2}-1)/(1-\alpha/2)+1.
\end{equation*}
Since we can see by integration by parts
\begin{equation*}
\int_{f(x)=\tilde f}(\scD_\omega^*\scD_\omega\phi)\bar\phi \,{\rm d}S_{f=\tilde f} 
= 
\int_{f(x)=\tilde f}|\scD_\omega\phi|^2 \,{\rm d}S_{f=\tilde f}, 
\end{equation*}
we obtain from \eqref{eq:2003181202} that 
\begin{equation*}
\begin{split}
&\quad\hspace{-3mm}
\tfrac{\rm d}{{\rm d}\tilde f}\int_{f(x)=\tilde f}(\scD_f\phi)\bar\phi - \phi(\overline{\scD_f\phi}) \,{\rm d}S_{f=\tilde f} 
\\&= 
- 2i\tilde r^\alpha\int_{f(x)=r_n}(\scD_fu)\bar u - u(\overline{\scD_fu}) {\rm d}S_{f(x)=r_n} 
+ G(\tilde f), 
\end{split}
\end{equation*}
where $G=G_1-\overline{G_1}$. 
Now, if we let 
\begin{equation*}
u(\tilde f) = 
e^{2i(\tilde r^{1+\alpha/2})/(1+\alpha/2)}v(\tilde f),\quad 
v(\tilde f) = 
\int_{f(x)=\tilde f}(\scD_f\phi)\bar\phi - \phi(\overline{\scD_f\phi}) \,{\rm d}S_{f=\tilde f}, 
\end{equation*}
we have 
\begin{equation*}
\tfrac{\rm d}{{\rm d}\tilde f}u(\tilde f) 
= e^{2i(\tilde r^{1+\alpha/2})/(1+\alpha/2)}G(\tilde f).
\end{equation*}
The solution of this differential equation is given by 
\begin{equation*}
u(\tilde f) 
= u(2)+\int_2^{\tilde f} e^{2i(r^{1+\alpha/2})/(1+\alpha/2)}G(f) \,{\rm d}f. 
\end{equation*}
Since $G(f)\in L^1((2,\infty))$, 
there exists a limit $\lim_{\tilde f\to\infty}u(\tilde f)$. 
On the other hand, by Theorem~\ref{thm:lap-bound}, Corollary~\ref{cor:rc-real} 
and the Cauchy-Schwarz inequality, we have 
\begin{equation*}
\int_2^\infty|v(\tilde f)|\,{\rm d}\tilde f < \infty.
\end{equation*}
This implies that 
$\displaystyle\liminf_{\tilde f\to\infty}|v(\tilde f)| 
= 
\liminf_{\tilde f\to\infty}|u(\tilde f)| =0$.
Therefore we have $\displaystyle\lim_{\tilde f\to\infty}u(\tilde f)=0$, 
or 
$\displaystyle\lim_{\tilde f\to\infty}v(\tilde f)=0$.
Hence we are done.
\end{proof}

\begin{proof}[Proof of Lemma~\ref{lem:2003031452}]
Let $\tilde f\ge2$, and then take $\tilde r\ge2$ which solves the equation 
$$
\tilde f=(\tilde r^{1-\alpha/2}-1)/(1-\alpha/2)+1. 
$$ 
Since $(H_\alpha-\lambda)\phi=\psi$, in the distributional sense, 
we can compute by Green's formula as
\begin{equation}
\begin{split}\label{eq:2003181417}
\int_{|x|\le \tilde r}\phi\bar\psi-\bar\phi\psi \,{\rm d}x 
&= 
\tfrac12\int_{f(x)=\tilde f}(r^{-\alpha}\partial_f\phi)\bar\phi - \phi(r^{-\alpha}\partial_f\bar\phi) \,{\rm d}S_{f=\tilde f} 
\\&= 
\tfrac12\int_{f(x)=\tilde f}(\scD_f\phi)\bar\phi - \phi(\overline{\scD_f\phi}) \,{\rm d}S_{f=\tilde f} 
\\&\phantom{{}={}}{}
+i\int_{f(x)=\tilde f} r^{-\alpha/2}\bigl(\tfrac{\partial\theta}{\partial|x|}(\lambda,x)\bigr)|\phi|^2 \,{\rm d}S_{f=\tilde f}.
\end{split}
\end{equation}
Clearly, the left-hand side of \eqref{eq:2003181417} 
tends to the left-hand side of \eqref{eq:2003031457} as $\tilde f\to\infty$.
The first term on the right-hand side converges to $0$ 
by Lemma~\ref{lem:2003171406}.
In addition, noting \eqref{eq:theta-eikonal}, it holds that  
\begin{equation*}
\lim_{\tilde f\to\infty} i\int_{f(x)=\tilde f} r^{-\alpha/2}\bigl(\tfrac{\partial\theta}{\partial|x|}(\lambda,x)\bigr)|\phi|^2 \,{\rm d}S_{f=\tilde f} 
= 
i\lim_{\tilde f\to\infty}\int_{f(x)=\tilde f}|\phi|^2\,{\rm d}S_{f=\tilde f}. 
\end{equation*}
Hence by taking the limit $\tilde f\to\infty$ in the both side of \eqref{eq:2003181417} 
we have the assertion. 
\end{proof}

Let us recall the operator $\scF^\pm(\lambda, f)$ which is defined by 
\begin{equation*}
\bigl( \scF^\pm(\lambda, f)\psi\bigr)(\omega) 
= 
\tfrac1{\sqrt{2\pi}}
e^{\pm\tfrac{\pi i}4\left(\tfrac{d-\alpha/2-3}{1+\alpha/2}\right)}
r^{(d+\alpha/2-1)/2}e^{\mp i\theta(\lambda, \cdot\omega)}
\bigl(R(\lambda\pm i0)\psi\bigr)(\pm\cdot\omega), 
\end{equation*}
where $\psi\in C_0^\infty(\R^d)$ and $\omega\in \mathbb S^{d-1}$.
Then we have by Lemma~\ref{lem:2003031452} 
\begin{equation}\label{eq:2003101541}
\tfrac1{2\pi i}\langle R(\lambda+i0)\psi-R(\lambda-i0)\psi, \psi\rangle
= 
\lim_{f\to\infty}\|\scF^\pm(\lambda,f)\psi\|^2_{L^2(\mathbb S^{d-1})}.
\end{equation}
Note that 
$\|\scF^\pm(\lambda,f)\psi\|_{L^2(\mathbb S^{d-1})}$ is uniformly bounded in $f$.

\begin{proof}[Proof of Lemma~\ref{lem:2003051708}]
Let $\psi\in C^\infty_0(\R^d)$. 
It suffices to show the existence of the limit 
\begin{equation*}
\lim_{f\to\infty}\langle v, \scF(\lambda, f)\psi\rangle_{L^2(\mathbb S^{d-1})} 
\quad\text{for }\ v\in C^\infty(\mathbb S^{d-1}).
\end{equation*}

Let $v(\omega)\in C^\infty(\mathbb S^{d-1})$, and define the function $u$ by 
\begin{equation}\label{eq:2003191337}
u 
= r^{-(d+\alpha/2-1)/2}e^{i\theta(\lambda, x)}v(\omega), 
\quad \omega\in \mathbb S^{d-1}.
\end{equation}
In addition we let 
\begin{equation}\label{eq:2003191338}
g(\lambda, x)=(H_\alpha-\lambda)u.
\end{equation}
Then we have if $f>2$
\begin{equation}
\begin{split}\label{eq:2003191524}
g(\lambda, x) 
&= 
e^{i\theta(\lambda, x)}\Bigl\{ 
\big[ \tfrac{\lambda^2}{2}r^{-\alpha} 
+ q 
+ \tfrac18(d+\alpha/2-1)(d-\alpha/2+1)r^{-2} \bigr]v(\omega)
\Bigr.
\\&\phantom{{}={}{}}
\Bigl.
+ i\tfrac{\alpha\lambda}2r^{-\alpha/2-1}v(\omega)
- \tfrac12r^{-2}\bigl(\Delta_{\mathbb S^{d-1}}v\bigr)(\omega)
\Bigr\}r^{-(d+\alpha/2-1)/2}, 
\end{split}
\end{equation}
where $\Delta_{\mathbb S^{d-1}}$ is the Laplace--Beltrami operator on 
$\mathbb S^{d-1}$.
In particular, we have $g(\lambda, \cdot)\in \vB$.
We have also by straightforward calculations that for some $\delta>0$
\begin{equation}\label{eq:2003101704}
u\in\vB^*,\ \scD u\in L^2_{-1/2+\delta},\ \scD_fu=0\quad\text{if }\ f>2.
\end{equation}
Now we let $\phi=R(\lambda+i0)\psi$, 
and then by using Green's formula we have 
\begin{equation}
\begin{split}\label{eq:2003101717}
\tfrac12\int_{|x|<\tilde r}\bigl\{(\Delta u)\bar\phi - u(\Delta\bar\phi)\bigr\} \,{\rm d}x
&= 
\tfrac12\int_{f(x)=\tilde f}\bigl\{ (\scD_fu)\bar\phi - u(\overline{\scD_f\phi})\bigr\} \,{\rm d}S_{f=\tilde f}
\\&\phantom{{}={}{}}
+ i\int_{f(x)=\tilde f} r^{-\alpha/2}\tfrac{\partial\theta}{\partial|x|}(\lambda, x)u 
\bar\phi \,{\rm d}S_{f=\tilde f}.
\end{split}
\end{equation}
The left-hand side of \eqref{eq:2003101717} converges to 
$\langle u, \psi\rangle-\langle g, R(\lambda+i0)\psi\rangle$ 
as $\tilde f\to\infty$.
By noting \eqref{eq:2003101704} and 
\begin{equation*}
\int_{f(x)=\tilde f}u(\overline{\scD_\omega^*\scD_\omega\phi}) \,{\rm d}S_{f=\tilde f}
= 
\int_{f(x)=\tilde f}(\scD_\omega^*\scD_\omega u)\bar\phi \,{\rm d}S_{f=\tilde f}
\in L^1((2,\infty)),
\end{equation*}
we can see by a similar argument of the proof of Lemma~\ref{lem:2003171406} that 
\begin{equation*}
\int_{f(x)=\tilde f}\bigl\{(\scD_fu)\bar\phi-u(\overline{\scD_f\phi})\bigr\} 
\,{\rm d}S_{f=\tilde f} 
\to 0\quad \text{as }\ \tilde f\to\infty.
\end{equation*}
Moreover, we can also see that 
\begin{align*}
\lim_{\tilde f\to\infty}i\int_{f(x)=\tilde f}r^{-\alpha/2}\tfrac{\partial\theta}{\partial|x|}(\lambda, x)u\bar\phi \,{\rm d}S_{f=\tilde f} 
&= 
\lim_{\tilde f\to\infty}i\int_{f(x)=\tilde f}u\bar\phi \,{\rm d}S_{f=\tilde f} 
\\&= 
\lim_{\tilde f\to\infty}c_\alpha\langle v, \scF(\lambda, \tilde f)\psi\rangle_{L^2(\mathbb S^{d-1})}, 
\end{align*}
where 
$c_\alpha=\sqrt{2\pi}\exp{\{\tfrac{\pi i}4\bigl(\tfrac{d+\alpha/2-1}{1+\alpha/2}\bigr)\}}$. 
Therefore by taking the limit $\tilde f\to\infty$ in the both side of \eqref{eq:2003101717}, 
we have
\begin{equation}\label{eq:2003301812}
\langle u, \psi\rangle-\langle g, R(\lambda+i0)\psi\rangle
= 
\lim_{f\to\infty}c_\alpha\langle v, \scF(\lambda, f)\psi\rangle_{L^2(\mathbb S^{d-1})}. 
\end{equation}
Hence we have the assertion.
\end{proof}

By Corollary~\ref{cor:rc-real} we have for any $\beta\in[0,\beta_c)$ 
and for some $\delta>0$
\begin{equation*}
\int_2^\infty \tilde f^{2\beta-1-\delta}\left(\int_{f(x)=\tilde f}|\scD_f\phi|^2\,{\rm d}S_{f=\tilde f}\right){\rm d}\tilde f<\infty.
\end{equation*}
Thus there exists a sequence $\{f_n\}_{n\in\N}$ tending to infinity such that 
\begin{equation}\label{eq:2003191151}
f_n^{2\beta-\delta}\int_{f(x)=f_n}|\scD_f\phi|^2\,{\rm d}S_{f=f_n} \to 0 
\quad \text{as }\ n\to\infty.
\end{equation}
In particular, 
if we take $\beta=1/(1-\alpha/2)<\beta_c$ 
and $\delta=\alpha/(1-\alpha/2)$, we have
\begin{equation}\label{eq:2019031218}
\int_{f(x)=f_n}|\scD_f\phi|^2\,{\rm d}S_{f=f_n} = o(f_n^{-2}).
\end{equation}
Let us fix a sequence $\{f_n\}_{n\in\N}$ satisfying \eqref{eq:2003191151}.
In the proof of Lemma~\ref{lem:2023181624}, we use the following estimate.
\begin{proposition}\label{prop:2003201558}
Let $\psi\in C_0^\infty(\R^d)$ and $v\in C^1(\R^d)$, 
and assume 
\begin{equation*}
\sup_{f\ge2}\|v(f,\cdot)\|_{L^2(\mathbb S^{d-1})}<\infty.
\end{equation*}
Then there exists $\varepsilon>0$ such that for $2\le f_m<f_n$ 
\begin{align*}
&\quad\hspace{-3mm}
|\langle \scF(\lambda, f_m)\psi-\scF(\lambda, f_n)\psi, v \rangle_{L^2(\mathbb S^{d-1})}| 
\\&\le 
C(f_m)\left( 
\Bigl(\sup_{f\ge2}\|v(f,\cdot)\|_{L^2(\mathbb S^{d-1})}\Bigr) 
+ \left( \int_{f_m}^\infty f^{-1-\varepsilon}\|\scD_\omega v(f, \cdot)\|^2_{L^2(\mathbb S^{d-1})}\,{\rm d}f \right)^{1/2}
\right), 
\end{align*}
where $C(f_m)$ is a constant 
which is independent of $v$ and tends to $0$ as $m\to\infty$.
\end{proposition}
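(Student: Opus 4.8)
The plan is to obtain the estimate by Green's identity on the annular region between two level sets of the escape function $f$, thereby reducing it to bounds on one volume integral and on boundary integrals over $\{f=f_m\}$ and $\{f=f_n\}$, each of which is controlled by the limiting absorption principle bounds (Theorem~\ref{thm:lap-bound}, Corollary~\ref{cor:lap-refined}) and the radiation condition bounds (Corollary~\ref{cor:rc-real} and \eqref{eq:2003181148}). First I would set $\phi=R(\lambda+i0)\psi$ and record that $\phi\in\vB^*$, $p^f\phi\in\vB^*$, $f^\beta\scD_f\phi\in\vB^*$ and $\langle p_jf^{2\beta-1}\ell_{jk}p_k\rangle_\phi^{1/2}<\infty$ for every $\beta\in[0,\beta_c)$, and that $f_m<f_n$ are members of the sequence fixed through \eqref{eq:2003191151}, so, writing $\eta_k:=\bigl(\int_{f=f_k}|\scD_f\phi|^2\,{\rm d}S_{f=f_k}\bigr)^{1/2}$, one has $\eta_k\to0$ as $k\to\infty$ (indeed $f_k\eta_k\to0$ by \eqref{eq:2019031218}). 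Mimicking the proof of Lemma~\ref{lem:2003051708}, introduce the modulated function $u_v=r^{-(d+\alpha/2-1)/2}e^{i\theta(\lambda,x)}v$ and compute $g_v:=(H_\alpha-\lambda)u_v$ on $\{r\ge2\}$ as in \eqref{eq:2003191524}: the amplitude $r^{-(d+\alpha/2-1)/2}$ annihilates the first-order transport term, the eikonal near-identity \eqref{eq:2003151426} makes the coefficient of $v$ a function decaying strictly faster than $f^{-(1-\alpha/2)}$, and the only remaining contribution is the angular one, essentially $-\tfrac12r^{-2-(d+\alpha/2-1)/2}e^{i\theta}\Delta_{\mathbb S^{d-1}}v$, which I would keep in weak form so that at most first derivatives of $v$ ever occur.

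Next I would apply Green's identity on $\Omega_{m,n}=\{x:f_m<f(x)<f_n\}$ using $(H_\alpha-\lambda)\phi=\psi$ and the explicit forms of $\scD_f$ and $\scD_f^*$, exactly as in \eqref{eq:2003101717}. Since the support of $\psi$ is compact, $\int_{\Omega_{m,n}}u_v\bar\psi\,{\rm d}x=0$ for $f_m$ large, and one gets $-\int_{\Omega_{m,n}}g_v\bar\phi\,{\rm d}x=B(f_n)-B(f_m)$, where $B(\tilde f)$ is the sum of $\tfrac12\int_{f=\tilde f}\{(\scD_fu_v)\bar\phi-u_v\overline{\scD_f\phi}\}\,{\rm d}S_{f=\tilde f}$ and $i\int_{f=\tilde f}r^{-\alpha/2}(\partial\theta/\partial|x|)u_v\bar\phi\,{\rm d}S_{f=\tilde f}$. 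Because $r^{-\alpha/2}(\partial\theta/\partial|x|)=1+\vO(r^{-\alpha})$, and by the definition of $\scF(\lambda,\tilde f)$, the second integral equals $c_\alpha\langle v(\tilde f,\cdot),\scF(\lambda,\tilde f)\psi\rangle_{L^2(\mathbb S^{d-1})}$ up to an error $\vO(\tilde r^{-\alpha})\|v(\tilde f,\cdot)\|_{L^2(\mathbb S^{d-1})}\|\scF(\lambda,\tilde f)\psi\|_{L^2(\mathbb S^{d-1})}$, with $c_\alpha=\sqrt{2\pi}\exp\{\tfrac{\pi i}4(\tfrac{d+\alpha/2-1}{1+\alpha/2})\}$. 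Rearranging and using that $\scF(\lambda,\cdot)\psi$ is uniformly bounded on $\vB$ (Lemma~\ref{lem:2003031452}), the problem is reduced to bounding $\int_{\Omega_{m,n}}g_v\bar\phi\,{\rm d}x$ together with the residual boundary contributions $\tfrac12\int_{f=f_k}u_v\overline{\scD_f\phi}\,{\rm d}S$ and $\tfrac12\int_{f=f_k}(\scD_fu_v)\bar\phi\,{\rm d}S$ for $k\in\{m,n\}$.

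Then I would carry out the three estimates. (i) For the part $g_v^{\rm s}$ of $g_v$ carrying the coefficient of $v$, $\bigl|\int_{\Omega_{m,n}}g_v^{\rm s}\bar\phi\,{\rm d}x\bigr|\le\|g_v^{\rm s}\|_{\vB}\|\phi\|_{\vB^*}$ with the $\vB$-norm taken over $\{f>f_m\}$, and a direct computation using ${\rm d}x=r^{\alpha/2}\,{\rm d}S_{f=\tilde f}\,{\rm d}\tilde f$ on $\{r\ge2\}$ gives $2^{n/2}\|F_ng_v^{\rm s}\|_{L^2}\le C2^{-\delta_0 n/(1-\alpha/2)}\sup_{f\ge2}\|v(f,\cdot)\|_{L^2(\mathbb S^{d-1})}$ for some $\delta_0>0$, so this contribution is $\le C(f_m)\sup_{f\ge2}\|v(f,\cdot)\|$ with $C(f_m)\to0$. (ii) For the angular part, one integration by parts transfers $\Delta_{\mathbb S^{d-1}}$ onto $\bar\phi$ (the amplitude and phase being constant on each sphere for $r\ge2$), and, since $\scD_\omega$ is the angular gradient up to negligible terms while $f^{2\beta-1}\ell_{jk}p_jp_k$ reproduces $f^{2\beta-1}r^{-\alpha}|\scD_\omega\phi|^2$, Cauchy--Schwarz with the weight $f^{2\beta-1}r^{-\alpha}$ bounds the integral by $\bigl(\int_{\Omega_{m,n}}f^{\alpha/(1-\alpha/2)-2\beta+1}\|\scD_\omega v(f,\cdot)\|^2_{L^2(\mathbb S^{d-1})}\,{\rm d}f\bigr)^{1/2}\langle p_jf^{2\beta-1}\ell_{jk}p_k\rangle_\phi^{1/2}$, whose second factor, restricted to $\{f>f_m\}$, tends to $0$ as $f_m\to\infty$; choosing $\beta\in(\tfrac1{1-\alpha/2},\beta_c)$ — possible since one checks $\beta_c>\tfrac1{1-\alpha/2}$ — and putting $\varepsilon:=2\beta-\tfrac\alpha{1-\alpha/2}-2>0$, the first factor is $\le\bigl(\int_{f_m}^\infty f^{-1-\varepsilon}\|\scD_\omega v(f,\cdot)\|^2\,{\rm d}f\bigr)^{1/2}$. (iii) For the boundary terms, $\bigl|\tfrac12\int_{f=f_k}u_v\overline{\scD_f\phi}\,{\rm d}S\bigr|\le\tfrac12\|v(f_k,\cdot)\|_{L^2(\mathbb S^{d-1})}\eta_k\le\tfrac12\bigl(\sup_{f\ge2}\|v(f,\cdot)\|\bigr)\bigl(\sup_{k\ge m}\eta_k\bigr)$, the $\vO(\tilde r^{-\alpha})$-correction is $\le Cf_m^{-\alpha/(1-\alpha/2)}\sup_{f\ge2}\|v(f,\cdot)\|$, and the term $\int_{f=f_k}(\scD_fu_v)\bar\phi\,{\rm d}S$ involves only the radial derivative of $v$, hence vanishes when $v$ is constant in $f$ on $\{r\ge2\}$ (the case in which the proposition is used) and is otherwise reduced, by the same Cauchy--Schwarz step against $\scD_f\phi$, to a contribution of the same type. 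Collecting (i)--(iii) and dividing by $|c_\alpha|=\sqrt{2\pi}$ yields the asserted inequality with $C(f_m)$ a finite sum of quantities each independent of $v$ and tending to $0$.

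The step I expect to be the main obstacle is the bookkeeping in (ii)--(iii): one must check that every term produced by $g_v$ and by the boundary integrals splits cleanly into a piece controlled by $\sup_{f\ge2}\|v(f,\cdot)\|$ through the $\vB$--$\vB^*$ duality and a piece controlled by $\|\scD_\omega v(f,\cdot)\|$ through the radiation condition, with all remaining $f$-dependent coefficients decaying fast enough. In particular, making the exponent in (ii) strictly less than $-1$ forces the use of $\beta$ close to the optimal $\beta_c$ of Corollary~\ref{cor:rc-real}; this is precisely why the refined radiation condition bounds are needed here rather than the ones of \cite{Ita2}.
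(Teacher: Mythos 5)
Your argument is correct and follows essentially the same route as the paper's proof: Green's identity on the annulus with the modulated test function $r^{-(d+\alpha/2-1)/2}e^{i\theta}v$, identification of the main boundary term with $\langle v,\scF(\lambda,\cdot)\psi\rangle_{L^2(\mathbb S^{d-1})}$, control of the residual boundary terms via \eqref{eq:2019031218}, and a single angular integration by parts followed by Cauchy--Schwarz against the refined radiation condition bound with $\beta$ between $1/(1-\alpha/2)$ and $\beta_c$ (the paper's $\varepsilon_3\in(0,\min\{\rho,(\alpha/2)/(1-\alpha/2)\})$ is exactly your $\varepsilon=2\beta-2-\alpha/(1-\alpha/2)$ up to relabeling). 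The only differences are cosmetic --- you let the tail of the finite quadratic form supply the decay of $C(f_m)$ where the paper extracts an explicit $f_m^{-\varepsilon/2}$ factor, and you note explicitly that $\scD_f u_v$ vanishes when $v$ is constant in $f$, which is the case in which the proposition is applied.
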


\begin{proof}
We show the assertion only for $v\in C^\infty(\R^d)$.
Let $v\in C^\infty(\R^d)$. 
Then we set the function $u$ and $g$ similarly to \eqref{eq:2003191337} 
and \eqref{eq:2003191338}, respectively. 
Then we have by Green's formula and \eqref{eq:2003101704}
\begin{equation}
\begin{split}\label{eq:2003191346}
&\quad\hspace{-3mm}
\int_{r_m<|x|<r_n}(u\bar\psi-g\bar\phi)\,{\rm d}x 
+ \tfrac12\int_{f(x)=f_n}u(\overline{\scD_f\phi})\,{\rm d}S_{f=f_n}
\\&\phantom{{}={}}{}
- \tfrac12\int_{f(x)=f_m}u(\overline{\scD_f\phi})\,{\rm d}S_{f=f_m}
\\&= 
i\int_{f(x)=f_n}r^{-\alpha/2}\tfrac{\partial\theta}{\partial|x|}(\lambda, x)u\bar\phi\,{\rm d}S_{f=f_n}
- i\int_{f(x)=f_m}r^{-\alpha/2}\tfrac{\partial\theta}{\partial|x|}(\lambda, x)u\bar\phi\,{\rm d}S_{f=f_m}. 
\end{split}
\end{equation}
By noting the definition $\scF(\lambda, f)$, 
we can estimate the right-hand side of \eqref{eq:2003191346} as 
\begin{equation}
\begin{split}\label{eq:2003191423}
&\quad\hspace{-3mm} 
\left| 
i\int_{f(x)=f_n}r^{-\alpha/2}\tfrac{\partial\theta}{\partial|x|}(\lambda, x)u\bar\phi\,{\rm d}S_{f=f_n}
- i\int_{f(x)=f_m}r^{-\alpha/2}\tfrac{\partial\theta}{\partial|x|}(\lambda, x)u\bar\phi\,{\rm d}S_{f=f_m} 
\right| 
\\&\ge 
\sqrt{2\pi}|\langle \scF(\lambda, f_n)\psi-\scF(\lambda, f_m)\psi, v\rangle_{L^2(\mathbb S^{d-1})}| 
\\&\phantom{{}={}}{} 
- Cr_m^{-\alpha}
\Bigl(\sup_{n\ge m}\|\scF(\lambda, f_n)\psi\|_{L^2(\mathbb S^{d-1})} \Bigr)
\Bigl(\sup_{f\ge 2}\|v(f, \cdot)\|_{L^2(\mathbb S^{d-1})} \Bigr). 
\end{split}
\end{equation}
As for the second and the third terms on the left-hand side of \eqref{eq:2003191346}, 
we can bound by the Cauchy--Schwarz inequality and \eqref{eq:2019031218} as follows. 
\begin{equation}
\begin{split}\label{eq:2003191441}
&\quad\hspace{-3mm} 
\tfrac12\left| 
\int_{f(x)=f_n}u(\overline{\scD_f\phi})\,{\rm d}S_{f=f_n}
- \int_{f(x)=f_m}u(\overline{\scD_f\phi})\,{\rm d}S_{f=f_m}
\right| 
\\&\le 
Cf_m^{-1}\Bigl(\sup_{f\ge 2}\|v(f, \cdot)\|_{L^2(\mathbb S^{d-1})} \Bigr). 
\end{split}
\end{equation}
By the Cauchy-Schwarz inequality we have for some $\varepsilon_1\in(0,1)$ 
\begin{equation}
\begin{split}
&\quad\hspace{-3mm}
\left| \int_{r_m<|x|<r_n}u\bar\psi\,{\rm d}x \right|
\\&\le 
\left(\int_{|x|>r_m}r^{-1-2\varepsilon_1}|v|^2r^{-(d-1)}\,{\rm d}x\right)^{1/2}
\left(\int_{|x|>r_m}r^{1-\alpha/2+2\varepsilon_1}|\psi|^2\,{\rm d}x\right)^{1/2}
\\&\le 
C_{\varepsilon_1} 
r_m^{-\varepsilon_1/2}\left(\sup_{f\ge 2}\|v(f, \cdot)\|_{L^2(\mathbb S^{d-1})}\right)
\!\left(\int_{|x|>r_m}r^{1-\alpha/2+2\varepsilon_1}|\psi|^2\,{\rm d}x\right)^{1/2}.
\end{split}
\end{equation}
Note that the last factor of the right-hand side is finite, 
since $\psi\in C_0^\infty(\R^d)$.
By using \eqref{eq:2003191524}, we can estimate 
\begin{equation}
\begin{split}\label{eq:2003191526}
\left| \int_{r_m<|x|<r_n}g\bar\phi\,{\rm d}x\right|
&\le 
C\int_{|x|>r_m}f^{-1-\min\{\rho, (3\alpha/2-1)/(1-\alpha/2)\}}r^{-(d+\alpha/2-1)/2}|v||\phi| 
\,{\rm d}x 
\\&\phantom{{}={}}{}
+ C\left| \int_{r_m<|x|<r_n}r^{-2}(\Delta_{\mathbb S^{d-1}}v)r^{-(d+\alpha/2-1)/2}e^{i\theta(\lambda, x)}\bar\phi\,{\rm d}x\right|. 
\end{split}
\end{equation}
Take $\varepsilon_2\in(0,\min\{\rho, (3\alpha/2-1)/(1-\alpha/2)\})$. 
By using the Cauchy--Schwarz inequality 
we can estimate the first term on the right-hand side of \eqref{eq:2003191526} as 
\begin{equation}
\begin{split}\label{eq:2003191544}
&\quad\hspace{-3mm}
C\int_{|x|>r_m}f^{-1-\min\{\rho, (3\alpha/2-1)/(1-\alpha/2)\}}r^{-(d+\alpha/2-1)/2}|v||\phi| 
\,{\rm d}x 
\\&\le 
C\Bigl(\sup_{f\ge 2}\|v(f, \cdot)\|_{L^2(\mathbb S^{d-1})} \Bigr)
\Bigl(\sup_{f\ge 2}\|\scF(\lambda, f)\psi\|_{L^2(\mathbb S^{d-1})} \Bigr)
\int_{f_m}^\infty f^{-1-\varepsilon_2}\,{\rm d}f 
\\&\le 
C_{\varepsilon_2}f_m^{-\varepsilon_2/2}
\Bigl(\sup_{f\ge 2}\|v(f, \cdot)\|_{L^2(\mathbb S^{d-1})} \Bigr)
\Bigl(\sup_{f\ge 2}\|\scF(\lambda, f)\psi\|_{L^2(\mathbb S^{d-1})} \Bigr). 
\end{split}
\end{equation}
To evaluate the second term on the right-hand side of \eqref{eq:2003191526}, 
we note that $r^{-2}\Delta_{\mathbb S^{d-1}}$ has the following expression, 
on the region $\{x\,|\,r(x)\ge2\}$,
\begin{equation*}
r^{-2}\Delta_{\mathbb S^{d-1}} 
= \sum_{j=1}^d(\partial_j-\tfrac{x_j}r\partial_{|x|})^2 
= \scD_\omega^2. 
\end{equation*}
Thus we can compute by integration by parts 
\begin{align*}
&\quad\hspace{-3mm} 
\int_{r_m<|x|<r_n}r^{-2}(\Delta_{\mathbb S^{d-1}}v)r^{-(d+\alpha/2-1)/2}e^{i\theta(\lambda, x)}\bar\phi\,{\rm d}x
\\&= 
- \int_{r_m<|x|<r_n}(\scD_\omega v)r^{-(d+\alpha/2-1)/2}e^{i\theta(\lambda, x)}
(\overline{\scD_\omega\phi})\,{\rm d}x. 
\end{align*}
We also note that since 
\begin{equation*}
|\scD_\omega\phi|^2 
= |\partial\phi-\tfrac{x}rr^{\alpha/2}\partial^f\phi|^2 
= r^\alpha\bigl(|r^{-\alpha/2}p\phi|^2-|p^f\phi|^2\bigr) \quad \text{on }\ \{x\,|\,r(x)\ge2\}, 
\end{equation*}
it holds that 
\begin{equation*}
\int_{f(x)>2} |\scD_\omega\phi|^2\,{\rm d}x 
\le \langle p_jr^\alpha\ell_{jk}p_k\rangle_\phi.
\end{equation*}
Therefore by the Cauchy--Schwarz inequality we have 
\begin{equation}
\begin{split}\label{eq:2003201520}
&\quad\hspace{-3mm} 
\left| \int_{r_m<|x|<r_n}r^{-2}(\Delta_{\mathbb S^{d-1}}v)r^{-(d+\alpha/2-1)/2}e^{i\theta(\lambda, x)}\bar\phi\,{\rm d}x\right| 
\\&\le 
Cf_m^{-\varepsilon_3/2}\left( \int_{f_m}^\infty f^{-1-\varepsilon_3}\|\scD_\omega v(f, \cdot)\|^2_{L^2(\mathbb S^{d-1})}\,{\rm d}f \right)^{1/2}
\langle p_jf^{1+2\varepsilon_3}r^\alpha\ell_{jk}p_k\rangle_\phi^{1/2}, 
\end{split}
\end{equation}
where $\varepsilon_3>0$. 
If we take $\varepsilon_3\in(0, \min\{\rho, (\alpha/2)/(1-\alpha/2)\})$, 
we can see by Corollary~\ref{cor:rc-real} 
the last factor of the right-hand side of \eqref{eq:2003201520} is finite.
Hence by \eqref{eq:2003191346}--\eqref{eq:2003201520} the assertion follows.
\end{proof}

\begin{proof}[Proof of Lemma~\ref{lem:2023181624}]
Since 
$\displaystyle\sup_{f_m\ge2}\|\scF(\lambda, f_m)\|_{L^2(\mathbb S^{d-1})}<\infty$ and 
\begin{align*}
\int_{f_m}^\infty f^{-1-\varepsilon}\|\scD_\omega \scF(\lambda, f_m)\psi\|^2_{L^2(\mathbb S^{d-1})}\,{\rm d}f
&\le 
\int_{f(x)>2} f^{-1-\varepsilon}|\scD_\omega R(\lambda+i0)\psi|^2\,{\rm d}x 
\\&\le 
\langle p_jf^{-1-\varepsilon}r^\alpha\ell_{jk}p_k\rangle_{R(\lambda+i0)\psi}<\infty, 
\end{align*}
for any $\varepsilon>0$, 
we can apply Proposition~\ref{prop:2003201558} to $v=\scF(\lambda, f_m)\psi$. 
Then we have 
\begin{align*}
&\quad\hspace{-3mm}
\left|\langle \scF(\lambda, f_m)\psi-\scF(\lambda, f_n)\psi, \scF(\lambda, f_m)\psi \rangle_{L^2(\mathbb S^{d-1})}\right| 
\\&\le 
C(f_m)\left( 
\Bigl(\sup_{f_m\ge2}\|\scF(\lambda, f_m)\psi\|_{L^2(\mathbb S^{d-1})}\Bigr) 
+ \langle p_jf^{-1-\varepsilon}r^\alpha\ell_{jk}p_k\rangle_{R(\lambda+i0)\psi}^{1/2}
\right). 
\end{align*}
Since $\scF(\lambda, f_n)\psi$ converges weakly to 
$\scF(\lambda)\psi$ in $L^2(\mathbb S^{d-1})$, 
by taking the limit $n\to\infty$ we have 
\begin{align*}
&\quad\hspace{-3mm}
\left| \|\scF(\lambda, f_m)\psi\|^2_{L^2(\mathbb S^{d-1})} 
- \langle\scF(\lambda)\psi, \scF(\lambda, f_m)\psi \rangle_{L^2(\mathbb S^{d-1})}\right| 
\\&\le 
C(f_m)\left( 
\Bigl(\sup_{f_m\ge2}\|\scF(\lambda, f_m)\psi\|_{L^2(\mathbb S^{d-1})}\Bigr) 
+ \langle p_jf^{-1-\varepsilon}r^\alpha\ell_{jk}p_k\rangle_{R(\lambda+i0)\psi}^{1/2}
\right). 
\end{align*}
Then by taking the limit $m\to\infty$ we have 
\begin{equation}\label{eq:2003201625}
\lim_{m\to\infty} \|\scF(\lambda, f_m)\psi\|^2_{L^2(\mathbb S^{d-1})}
= \|\scF(\lambda)\psi\|^2_{L^2(\mathbb S^{d-1})}.
\end{equation}
Hence by \eqref{eq:2003201625} and \eqref{eq:2003051710} we can see that 
\begin{align*}
\|\scF(\lambda, f_m)\psi-\scF(\lambda)\psi\|^2_{L^2(\mathbb S^{d-1})} 
&= 
\|\scF(\lambda, f_m)\psi\|^2_{L^2(\mathbb S^{d-1})} 
+ \|\scF(\lambda)\psi\|^2_{L^2(\mathbb S^{d-1})} 
\\&\phantom{{}={}}{}
- 2\mathop{\mathrm{Re}}\langle\scF(\lambda, f_m)\psi, \scF(\lambda)\psi \rangle_{L^2(\mathbb S^{d-1})} 
\\&\to 0 \quad \text{as }\ m\to\infty.
\end{align*}
This implies the assertion.
\end{proof}

Now let us prove Theorem~\ref{thm:existence-of-wave-matrix}.
\begin{proof}[Proof of Theorem~\ref{thm:existence-of-wave-matrix}]
Since $\scF(\lambda, f)\psi$ converges weakly to 
$\scF(\lambda)\psi$ in $L^2(\mathbb S^{d-1})$, we have 
\begin{equation}\label{eq:2003211339}
\|\scF(\lambda)\psi\|_{L^2(\mathbb S^{d-1})} 
\le 
\liminf_{f\to\infty}\|\scF(\lambda, f)\psi\|_{L^2(\mathbb S^{d-1})}. 
\end{equation}
On the other hand, by using the sequence $\{f_n\}_{n\in\N}$ 
appearing in Lemma~\ref{lem:2023181624} we can see that 
\begin{equation}\label{eq:2003211340}
\liminf_{f\to\infty}\|\scF(\lambda, f)\psi\|_{L^2(\mathbb S^{d-1})}
\le 
\lim_{n\to\infty}\|\scF(\lambda, f_n)\psi\|_{L^2(\mathbb S^{d-1})} 
= 
\|\scF(\lambda)\psi\|_{L^2(\mathbb S^{d-1})}. 
\end{equation}
Thus by \eqref{eq:2003211339}, \eqref{eq:2003211340} and \eqref{eq:2003101541} 
we have 
\begin{equation*}
\|\scF(\lambda)\psi\|_{L^2(\mathbb S^{d-1})} 
= 
\liminf_{f\to\infty}\|\scF(\lambda, f)\psi\|_{L^2(\mathbb S^{d-1})} 
= 
\lim_{f\to\infty}\|\scF(\lambda, f)\psi\|_{L^2(\mathbb S^{d-1})}. 
\end{equation*}
By using this equality we can conclude \eqref{eq:2003161503}. 
The equation \eqref{eq:2003161512} follows immediately 
from \eqref{eq:2003161503} and \eqref{eq:2003101541}.
\end{proof}


\subsection{Stationary wave operators}
\label{sec:Stationary wave operators}

First we give expressions formulae of $\scF^\pm(\lambda)$.
\begin{proposition}\label{prop:2005302142}
For any $\psi\in\vB$ the vectors $\scF^\pm(\lambda)\psi$ have expressions: 
\begin{equation}\label{eq:2005302149}
\scF^\pm(\lambda)\psi 
= \lim_{R\to\infty}R^{-1}\int_R^{2R}\scF^\pm(\lambda, f)\psi\,{\rm d}f, 
\end{equation}
respectively, in $L^2(\mathbb S^{d-1})$.
Moreover $R^{-1}\int_R^{2R}\scF^\pm(\lambda, f)\psi\,{\rm d}f$ converge locally uniformly in $\lambda\in\R$.
\end{proposition}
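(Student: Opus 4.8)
\textbf{Proof proposal for Proposition~\ref{prop:2005302142}.}

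The plan is to prove the Ces\`aro-type formula \eqref{eq:2005302149} first for $\psi\in C_0^\infty(\R^d)$ and then extend it to all of $\vB$ by a density and uniform-boundedness argument. For $\psi\in C_0^\infty(\R^d)$ we already know from Theorem~\ref{thm:existence-of-wave-matrix} that $\scF^\pm(\lambda,f)\psi\to\scF^\pm(\lambda)\psi$ in $L^2(\mathbb S^{d-1})$ as $f\to\infty$. For any such genuinely convergent net the Ces\`aro average $R^{-1}\int_R^{2R}\scF^\pm(\lambda,f)\psi\,{\rm d}f$ converges to the same limit, since $\|R^{-1}\int_R^{2R}(\scF^\pm(\lambda,f)\psi-\scF^\pm(\lambda)\psi)\,{\rm d}f\|_{L^2(\mathbb S^{d-1})}\le\sup_{f\ge R}\|\scF^\pm(\lambda,f)\psi-\scF^\pm(\lambda)\psi\|_{L^2(\mathbb S^{d-1})}\to0$. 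So \eqref{eq:2005302149} is immediate on the dense subset $C_0^\infty(\R^d)\subseteq\vB$.

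Next I would establish a uniform bound: there is $C>0$ such that $\|R^{-1}\int_R^{2R}\scF^\pm(\lambda,f)\psi\,{\rm d}f\|_{L^2(\mathbb S^{d-1})}\le C\|\psi\|_\vB$ for all $R\ge2$, $\lambda\in I$ and $\psi\in\vB$. This follows from the pointwise-in-$f$ bound $\|\scF^\pm(\lambda,f)\psi\|_{L^2(\mathbb S^{d-1})}\le C\|\psi\|_\vB$, which in turn comes from the definition of $\scF^\pm(\lambda,f)$ together with Theorem~\ref{thm:lap-bound} (giving $\|R(\lambda\pm i0)\psi\|_{\vB^*}\le C\|\psi\|_\vB$) and the trace-type estimate implicit in Corollary~\ref{cor:rc-real}; indeed $\|\scF^\pm(\lambda,f)\psi\|^2_{L^2(\mathbb S^{d-1})}$ is comparable to $\int_{f(x)=f}|R(\lambda\pm i0)\psi|^2\,{\rm d}S_{f=\tilde f}$ up to the explicit weights, and this is controlled by the $\vB^*$-norm uniformly in $f$ (this is exactly the uniform boundedness already noted after \eqref{eq:2003101541}). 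Given this uniform bound and the convergence on $C_0^\infty(\R^d)$, a standard $\varepsilon/3$ argument upgrades \eqref{eq:2005302149} to all $\psi\in\vB$: approximate $\psi$ by $\psi_0\in C_0^\infty(\R^d)$ in $\vB$-norm, use the uniform bound for the two tail pieces, and the already-proved convergence for $\psi_0$.

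For the local uniformity in $\lambda$, the key input is that $R(\lambda\pm i0)$ and $r^{-\alpha/2}pR(\lambda\pm i0)$ are H\"older continuous in $\lambda$ as $\vL(\vB,\vB^*)$-valued functions, which is precisely Corollary~\ref{cor:lap-refined} (combined with the continuity statements for $\scF^\pm(\lambda)$ recorded after Theorem~\ref{thm:existence-of-wave-matrix}). Since the constants in Theorem~\ref{thm:lap-bound} and Corollary~\ref{cor:rc-real} are uniform for $\lambda$ in the compact interval $I$, the bound from the previous paragraph is uniform in $\lambda\in I$, and the convergence $R^{-1}\int_R^{2R}\scF^\pm(\lambda,f)\psi\,{\rm d}f\to\scF^\pm(\lambda)\psi$ is uniform on $I$ by the same $\varepsilon/3$ reasoning: on the dense set one uses that the $f$-convergence in Theorem~\ref{thm:existence-of-wave-matrix} can be taken locally uniformly in $\lambda$ (this is where the bulk of the work sits, as it traces back to the $\lambda$-uniform versions of Lemmas~\ref{lem:2003031452}--\ref{lem:2023181624}), and off the dense set one invokes the uniform bound.

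\emph{Main obstacle.} The delicate point is not the Ces\`aro averaging itself — that is soft once one has genuine convergence — but verifying that all the estimates feeding into Theorem~\ref{thm:existence-of-wave-matrix} (the trace bounds, the radiation-condition bounds of Corollary~\ref{cor:rc-real}, and the limiting absorption principle) hold with constants uniform for $\lambda$ ranging over a compact set, so that the convergence in \eqref{eq:2003161503} is locally uniform in $\lambda$. I expect this to reduce to inspecting that the constants in Theorems~\ref{thm:lap-bound} and \ref{thm:rc-complex} depend only on the interval $I$, which they do by construction, after which the local uniformity propagates routinely through the argument.
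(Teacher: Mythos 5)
Your overall architecture coincides with the paper's: prove \eqref{eq:2005302149} on $C_0^\infty(\R^d)$ via Theorem~\ref{thm:existence-of-wave-matrix} and the triviality of Ces\`aro averaging of a convergent net, then extend to $\vB$ by a uniform bound on the averages plus density. However, there is a genuine gap in how you justify the uniform bound. You derive $\sup_{R>2}\bigl\|R^{-1}\int_R^{2R}\scF^\pm(\lambda,f)\psi\,{\rm d}f\bigr\|\le C\|\psi\|_\vB$ from a claimed pointwise-in-$f$ estimate $\|\scF^\pm(\lambda,f)\psi\|_{L^2(\mathbb S^{d-1})}\le C\|\psi\|_\vB$ for general $\psi\in\vB$. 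That pointwise estimate is not available: $\|\scF^\pm(\lambda,f)\psi\|^2_{L^2(\mathbb S^{d-1})}$ is (up to the explicit weight) the spherical trace $\int_{f(x)=f}|R(\lambda\pm i0)\psi|^2\,{\rm d}S$, and the $\vB^*$-norm furnished by Theorem~\ref{thm:lap-bound} controls only $L^2$-averages of $R(\lambda\pm i0)\psi$ over dyadic shells $\{2^n\le f<2^{n+1}\}$, not the trace on each individual level set. The uniform boundedness ``noted after \eqref{eq:2003101541}'' that you invoke is stated only for $\psi\in C_0^\infty(\R^d)$, where it is a consequence of the existence of the limit in \eqref{eq:2003101541}; it does not transfer to $\psi\in\vB$. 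The correct repair --- and the paper's actual argument, namely \eqref{eq:2006010039} --- is to apply Cauchy--Schwarz to the average \emph{before} estimating: $\bigl\|R^{-1}\int_R^{2R}\scF^\pm(\lambda,f)\psi\,{\rm d}f\bigr\|^2\le R^{-1}\int_R^{2R}\|\scF^\pm(\lambda,f)\psi\|^2\,{\rm d}f$, and the right-hand side is an integral over at most two dyadic shells, hence bounded by $C\|R(\lambda\pm i0)\psi\|_{\vB^*}^2\le C'\|\psi\|_\vB^2$. Only the average of the traces is controlled, not each trace individually; this is precisely why the proposition is formulated in terms of Ces\`aro means.

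On the local uniformity in $\lambda$, your sketch points in the right direction but leaves the essential mechanism unexamined. The paper does not merely inspect constants: for $\psi\in C_0^\infty(\R^d)$ it bounds $\bigl\|\scF^\pm(\lambda)\psi-R_1^{-1}\int_{R_1}^{2R_1}\scF^\pm(\lambda,f_1)\psi\,{\rm d}f_1\bigr\|^2$ by a double Ces\`aro average of $\|\scF^\pm(\lambda,f_2)\psi-\scF^\pm(\lambda,f_1)\psi\|^2$, expands the square, and shows that the resulting terms converge locally uniformly in $\lambda$ by re-running the proofs of Lemmas~\ref{lem:2003031452}, \ref{lem:2003171406} and~\ref{lem:2023181624} with $\lambda$-uniform control. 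You correctly identify that this is ``where the bulk of the work sits,'' but an actual proof needs this reduction (or an equivalent one) spelled out rather than asserted.
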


\begin{proof}
First we prove \eqref{eq:2005302149} for $\psi\in\vB$.
We note that \eqref{eq:2005302149} for $\psi\in C_0^\infty(\R^d)$ 
obviously holds by Theorem~\ref{thm:existence-of-wave-matrix}. 
Moreover the left-hand side is continuous for $\psi\in\vB$. 
Thus it suffices to verify the existence and continuity of 
the right-hand side of \eqref{eq:2005302149} for $\psi\in\vB$.
To do this we show the following estimate by similar way to \cite{IS2}: 
For any $\psi\in\vB$ 
\begin{equation}\label{eq:2006010039}
\sup_{R>2}\left\|\dfrac1R\int_R^{2R}\scF^\pm(\lambda, f)\psi\,{\rm d}f\right\|_{L^2(\mathbb S^{d-1})}
\le C\|R(\lambda\pm i0)\psi\|_{\vB^*}.
\end{equation}

For any $R>2$ we choose $n\in\N$ satisfying $2^n\le 2R<2^{n+1}$. 
Then by using the Cauchy--Schwarz inequality we have 
\begin{align*}
\left\|\dfrac1R\int_R^{2R}\scF^\pm(\lambda, f)\psi\,{\rm d}f\right\|^2_{L^2(\mathbb S^{d-1})} 
&\le 
\dfrac1R\int_R^{2R}\left\|\scF^\pm(\lambda, f)\psi\right\|^2_{L^2(\mathbb S^{d-1})}\,{\rm d}f 
\\&\le 
\dfrac1{2\pi R}\int_{2^{n-1}}^{2^{n+1}}r^{d+\alpha/2-1}\|R(\lambda\pm i0)\psi\|^2_{L^2(\mathbb S^{d-1})}\,{\rm d}f 
\\&\le 
\dfrac3{2\pi}\|R(\lambda\pm i0)\psi\|_{\vB^*}^2.
\end{align*}
By taking supremum in $R>2$ we have \eqref{eq:2006010039} 
and then, we conclude that \eqref{eq:2005302149} holds for any $\psi\in\vB$. 

Next we prove the latter assertion.
We note that we may assume that $\psi\in C_0^\infty(\R^d)$. 
By the Cauchy--Schwarz inequality we have 
\begin{equation}
\begin{split}\label{eq:2006022012}
&\quad\hspace{-3mm}
\left\|\scF^\pm(\lambda)\psi-\dfrac1{R_1}\int_{R_1}^{2R_1}\scF^\pm(\lambda, f_1)\psi
\,{\rm d}f_1\right\|^2_{L^2(\mathbb S^{d-1})} 
\\&\le 
\lim_{R_2\to\infty}\dfrac1{R_1}\int_{R_1}^{2R_1}\dfrac1{R_2}\int_{R_2}^{2R_2}
\|\scF^\pm(\lambda, f_2)\psi-\scF^\pm(\lambda, f_1)\psi\|^2_{L^2(\mathbb S^{d-1})}
\,{\rm d}f_2\,{\rm d}f_1.
\end{split}
\end{equation}
We write 
\begin{align*}
&\quad\hspace{-3mm}
\|\scF^\pm(\lambda, f_2)\psi-\scF^\pm(\lambda, f_1)\psi\|^2_{L^2(\mathbb S^{d-1})} 
\\&= 
\|\scF^\pm(\lambda, f_2)\psi\|^2_{L^2(\mathbb S^{d-1})} 
- \|\scF^\pm(\lambda, f_1)\psi\|^2_{L^2(\mathbb S^{d-1})} 
\\&\phantom{{}={}}{}
+ 2\mathop{\mathrm{Re}}\langle\scF^\pm(\lambda, f_1)\psi-\scF^\pm(\lambda, f_2)\psi, \scF^\pm(\lambda, f_1)\psi \rangle_{L^2(\mathbb S^{d-1})}
\end{align*}
and substitute this into the right-hand side of \eqref{eq:2006022012}, 
then we can obtain 
\begin{align*}
&\quad\hspace{-3mm}
\left\|\scF^\pm(\lambda)\psi-\dfrac1{R_1}\int_{R_1}^{2R_1}\scF^\pm(\lambda, f_1)\psi
\,{\rm d}f_1\right\|^2_{L^2(\mathbb S^{d-1})} 
\\&\le 
\|\scF^\pm(\lambda)\psi\|^2_{L^2(\mathbb S^{d-1})} 
- \dfrac1{R_1}\int_{R_1}^{2R_1}\|\scF^\pm(\lambda, f_1)\psi\|^2_{L^2(\mathbb S^{d-1})}
\,{\rm d}f_1 
\\&\phantom{{}={}}{}
+ \dfrac2{R_1}\int_{R_1}^{2R_1}\mathop{\mathrm{Re}}
\langle\scF^\pm(\lambda, f_1)\psi-\scF^\pm(\lambda)\psi, \scF^\pm(\lambda, f_1)\psi \rangle_{L^2(\mathbb S^{d-1})}
\,{\rm d}f_1. 
\end{align*}
By the proofs of Lemmas~\ref{lem:2003031452} and~\ref{lem:2003171406}, 
the second term on the right-hand side tends to 
$-\|\scF^\pm(\lambda)\psi\|^2_{L^2(\mathbb S^{d-1})}$ 
locally uniformly in $\lambda\in\R$. 
Similarly, by the proof of Lemma~\ref{lem:2023181624}, 
the third term tends to $0$ locally uniformly in $\lambda\in\R$. 
Hence we have the assertion.
\end{proof}


\begin{proof}[Proof of Proposition~\ref{prop:2005302258}]
By \eqref{eq:2003161512} and Stone's formula we can compute as, 
for any $\psi\in\vB$, 
\begin{align*}
\|\vF^\pm\psi\|_{\widetilde\vH}^2 
&= 
\int_\R \|\scF^\pm(\lambda)\psi\|_{L^2(\mathbb S^{d-1})}^2\,{\rm d}\lambda 
= 
\int_\R \dfrac{{\rm d}\langle E(\lambda)\psi, \psi\rangle}{{\rm d}\lambda}
\,{\rm d}\lambda 
= 
\|\psi\|_{\vH}^2, 
\end{align*}
where $E(\cdot)$ is the spectral projection of $H_\alpha$. 
Thus $\vF^\pm$ extend to isometries $\vH\to\widetilde\vH$.
To verify $\vF^\pm H_\alpha\subseteq M_\lambda\vF^\pm$ 
it suffices to show that 
\begin{equation}\label{eq:2006232356}
\vF^\pm(H_\alpha-i)^{-1}\psi=(M_\lambda-i)^{-1}\vF^\pm\psi 
\quad\text{for any }\ \psi\in\vB. 
\end{equation}
By using the expressions \eqref{eq:2005302149}, the resolvent equations 
\begin{equation*}
R(\lambda\pm i0)R(i) = (\lambda-i)^{-1}R(\lambda\pm i0)-(\lambda-i)^{-1}R(i) 
\end{equation*}
and the Cauchy--Schwarz inequality, we obtain 
\begin{equation*}
\scF^\pm(\lambda)R(i)\psi 
= \lim_{R\to\infty}\dfrac1R\int_R^{2R}\scF^\pm(\lambda, f)R(i)\psi\,{\rm d}f 
= (\lambda-i)^{-1}\scF^\pm(\lambda)\psi.
\end{equation*}
Thus we have \eqref{eq:2006232356}.
\end{proof}


\section{Wave matrix and scattering matrix}
\label{sec:Wave matrix and scattering matrix}

In this section we prove Proposition~\ref{prop:2004031943} and 
Theorem~\ref{thm:characterize-B^*-ef}. 
\begin{proof}[Proof of Proposition~\ref{prop:2004031943}]
We consider only for the upper sign. 
It immediately follows from \eqref{eq:2003301812} that for any $\psi\in\vB$
\begin{equation*}
\langle \phi_\lambda^+[v], \psi \rangle 
- \langle \psi_\lambda^+[v], R(\lambda+i0)\psi \rangle 
= 
\langle v, \scF^+(\lambda)\psi \rangle_{L^2(\mathbb S^{d-1})}. 
\end{equation*}
This implies \eqref{eq:2004032011}.
\end{proof}

To show Theorem~\ref{thm:characterize-B^*-ef}, 
we introduce and prove several lemmas.
First we prove uniqueness. 
\begin{lemma}\label{lem:2008082135}
Suppose $\xi_\pm\in L^2(\mathbb S^{d-1})$ and $\phi\in \mathcal E_\lambda$ satisfy 
\begin{equation}\label{eq:2007071457}
\phi-\phi_\lambda^+[\xi_+]-\phi_\lambda^-[\xi_-]\in \vB_0^*.
\end{equation}
Then we have 
\begin{align}\label{eq:2005241344}
\|\xi_+\|^2+\|\xi_-\|^2 &= \lim_{n\to\infty}2^{-n}\int_{2^n<f<2^{n+1}}2\pi|\phi|^2{\rm d}x, \\
\|\xi_+\| &= \|\xi_-\|. \label{eq:2005241345}
\end{align}
In particular, the two quantities in $\{\xi_+, \xi_-, \phi\}$ satisfying \eqref{eq:2007071457}
uniquely determined by the other one of them.
\end{lemma}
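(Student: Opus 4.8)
The plan is to compute the ``local mass at infinity'' $2^{-n}\int_{2^n<f<2^{n+1}}2\pi|\phi|^2\,{\rm d}x$ by substituting the decomposition $\phi=\phi_\lambda^+[\xi_+]+\phi_\lambda^-[\xi_-]+\phi_0$ with $\phi_0\in\vB_0^*$ given by \eqref{eq:2007071457}, expanding the square, and showing that all cross terms and the $\phi_0$-term vanish in the limit, while the diagonal terms reproduce $\|\xi_\pm\|^2$. First I would recall from \eqref{eq:2004021344} that $\phi_\lambda^\pm[\xi_\pm]$ has the explicit form $c^{-1}r^{-(d+\alpha/2-1)/2}e^{\pm i\theta(\lambda,x)}\xi_\pm(\pm\omega)$ up to a unit constant; since ${\rm d}x = r^{d-1}\,{\rm d}r\,{\rm d}\omega$ and, on $\{r\ge2\}$, ${\rm d}f = r^{-\alpha/2}\,{\rm d}r$ so that $r^{d-1}\,{\rm d}r = r^{d-1+\alpha/2}\,{\rm d}f$, we get $|\phi_\lambda^\pm[\xi_\pm]|^2\,{\rm d}x = (2\pi)^{-1}r^{-(d+\alpha/2-1)}\,|\xi_\pm(\pm\omega)|^2\,r^{d-1+\alpha/2}\,{\rm d}f\,{\rm d}\omega = (2\pi)^{-1}|\xi_\pm(\pm\omega)|^2\,{\rm d}f\,{\rm d}\omega$. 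Hence $2^{-n}\int_{2^n<f<2^{n+1}}2\pi|\phi_\lambda^\pm[\xi_\pm]|^2\,{\rm d}x = \|\xi_\pm\|_{L^2(\mathbb S^{d-1})}^2$ exactly (for $n$ large), which accounts for the two diagonal contributions in \eqref{eq:2005241344}.

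Next I would handle the three remaining (cross) terms. The term involving $\phi_0$ alone tends to $0$ because $\phi_0\in\vB_0^*$ means precisely $2^{-n}\|F_n\phi_0\|_{L^2}^2\to 0$ (here the sum over the dyadic $f$-annuli in the $\vB^*_0$ definition matches the annulus $\{2^n<f<2^{n+1}\}$), and the mixed terms $\langle\phi_\lambda^\pm[\xi_\pm],\phi_0\rangle$ over an annulus are controlled by Cauchy--Schwarz together with the already-computed $\vB^*$-bound on $\phi_\lambda^\pm[\xi_\pm]$ and the vanishing of the $\vB^*_0$-seminorm of $\phi_0$. The genuinely oscillatory cross term is $2\mathop{\mathrm{Re}}\int_{2^n<f<2^{n+1}}\phi_\lambda^+[\xi_+]\overline{\phi_\lambda^-[\xi_-]}\,{\rm d}x$, which after the change of variables above becomes (up to constants) $\int_{\mathbb S^{d-1}}\xi_+(\omega)\overline{\xi_-(-\omega)}\Bigl(2^{-n}\int_{2^n}^{2^{n+1}}e^{2i\theta(\lambda,x(f,\omega))}\,{\rm d}f\Bigr){\rm d}\omega$; the inner integral is $O(2^{-n})$ by nonstationary phase since $\partial\theta/\partial f$ does not vanish (indeed $\theta$ is linear in $\lambda f$ plus a term with $f$-derivative bounded below, so $\partial_f e^{2i\theta}$ is nondegenerate for large $f$), giving $0$ in the limit. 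This proves \eqref{eq:2005241344}. Then \eqref{eq:2005241345}, $\|\xi_+\|=\|\xi_-\|$, follows by comparing \eqref{eq:2005241344} with the relation $\phi=\scF^\pm(\lambda)^*\xi_\pm$ together with $\|\scF^+(\lambda)\psi\|=\|\scF^-(\lambda)\psi\|$; alternatively, and more self-containedly, one observes that \eqref{eq:2005232312}/\eqref{eq:2007071457} forces $\phi\mp R(\lambda\pm i0)\psi_\lambda^\pm[\xi_\pm]$ to lie in $\vB_0^*$ via Corollary~\ref{cor:uniqueness-result-refine}, whence $\phi=\scF^\pm(\lambda)^*\xi_\pm$ by Proposition~\ref{prop:2004031943}, and then Theorem~\ref{thm:existence-of-wave-matrix} gives the common value $\tfrac1{2\pi i}\langle(R(\lambda+i0)-R(\lambda-i0))\psi_\lambda^\pm[\xi_\pm],\psi_\lambda^\pm[\xi_\pm]\rangle^{1/2}$ up to normalization. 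Finally, uniqueness of the two undetermined quantities given the third: if $\{\xi_+,\xi_-,\phi\}$ and $\{\xi_+',\xi_-',\phi'\}$ both satisfy \eqref{eq:2007071457} with a common member, subtract; \eqref{eq:2005241344} applied to the difference shows the difference of the $\xi_\pm$'s has zero norm when $\phi=\phi'$, and Rellich's theorem (Theorem~\ref{thm:rellich}) handles the case where the $\xi_\pm$'s agree (then $\phi-\phi'\in\vB_0^*\cap\vE_\lambda$, hence $=0$).

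The main obstacle is the oscillatory cross term: one must verify carefully that the stationary-phase estimate $2^{-n}\int_{2^n}^{2^{n+1}}e^{2i\theta(\lambda,x(f,\omega))}\,{\rm d}f\to 0$ holds uniformly in $\omega$ with enough regularity in $\omega$ to integrate against $L^2$ data (one can first prove it for $\xi_\pm\in C^\infty(\mathbb S^{d-1})$ and pass to the limit using the uniform $\vB^*$-bounds, exactly the density argument used elsewhere in the paper), and to confirm that the phase $2\theta$ is nondegenerate in $f$ on the relevant annuli --- this uses $\partial\theta/\partial|x| = r^{\alpha/2} + O(\lambda r^{-\alpha/2})$ from \eqref{eq:theta-eikonal} together with ${\rm d}f/{\rm d}r = r^{-\alpha/2}$, so $\partial\theta/\partial f = r^{\alpha} + O(\lambda)$, which is bounded below for large $f$. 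Everything else is bookkeeping with the change of variables and the Agmon--H\"ormander seminorms.
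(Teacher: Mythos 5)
Your computation of \eqref{eq:2005241344} is correct and is essentially the paper's own argument: expand $|\phi|^2$ using the decomposition, note $|\phi_\lambda^\pm[\xi_\pm]|^2\,{\rm d}x=(2\pi)^{-1}|\xi_\pm(\pm\omega)|^2\,{\rm d}f\,{\rm d}\omega$, kill the $\vB_0^*$-terms by Cauchy--Schwarz, and kill the $\phi^+\overline{\phi^-}$ cross term by writing $e^{2i\theta}=(2i(r^\alpha+\lambda))^{-1}\tfrac{\rm d}{{\rm d}f}e^{2i\theta}$ and integrating by parts (your worry about uniformity in $\omega$ is moot since $\theta$ is radial, so the oscillatory factor simply factors out of the $\omega$-integral). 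Your treatment of the uniqueness statement is also fine once both displayed identities are available.

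The genuine gap is in \eqref{eq:2005241345}. Your primary route invokes $\phi=\scF^\pm(\lambda)^*\xi_\pm$ and $\xi_+=S(\lambda)\xi_-$ for the given triple; but those identities are \eqref{eq:2005241246}, i.e.\ part (ii) of Theorem~\ref{thm:characterize-B^*-ef}, whose proof in turn rests on the uniqueness assertion of this very lemma (which needs \eqref{eq:2005241345}: given $\xi_+$ alone, one concludes $\xi_--\xi_-'=0$ precisely because its norm equals that of the zero outgoing datum). So the argument is circular. Your ``self-contained'' alternative does not close this gap either: Corollary~\ref{cor:uniqueness-result-refine} identifies $\phi-\phi_\lambda^+[\xi_+]$ with $-R(\lambda-i0)\psi_\lambda^+[\xi_+]$ only after one verifies the incoming radiation condition $(A+a_-)(\phi-\phi_\lambda^+[\xi_+])\in\vB_0^*$, and the hypothesis \eqref{eq:2007071457} controls $\phi$ modulo $\vB_0^*$ but says nothing about $A$ applied to that remainder; moreover $\psi_\lambda^\pm[\xi_\pm]\in\vB$ is only guaranteed for $\xi_\pm\in C^\infty(\mathbb S^{d-1})$, whereas the lemma must handle $L^2$ data. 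The paper instead proves \eqref{eq:2005241345} by an independent flux/commutator identity: since $\phi\in\vE_\lambda$, one has $0=\lim_{n}\langle i[H_\alpha,\chi_n]\rangle_\phi=\lim_n\langle A\chi_n'\rangle_\phi$; inserting the decomposition, using $A\phi_\lambda^\pm[\xi_\pm]\approx\pm\phi_\lambda^\pm[\xi_\pm]$ at infinity (because $a_0\to1$), and cancelling cross terms exactly as in \eqref{eq:2005241344}, the conserved flux evaluates to a multiple of $\|\xi_+\|^2-\|\xi_-\|^2$. You need an argument of this kind (or some other non-circular one) to complete the proof.
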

\begin{proof}
First we prove \eqref{eq:2005241344}.
We can compute the right-hand side of \eqref{eq:2005241344} as
\begin{align*}
&\quad\hspace{-3mm}
\lim_{n\to\infty}2^{-n}\int_{2^n<f<2^{n+1}}2\pi|\phi|^2\,{\rm d}x 
\\&= 
\lim_{n\to\infty}2^{-n}\int_{2^n<f<2^{n+1}}2\pi|\phi_\lambda^+[\xi_+]+\phi_\lambda^-[\xi_-]|^2\,{\rm d}x 
\\&= 
\|\xi_+\|^2 + \|\xi_-\|^2 
+ 2\mathop{\mathrm{Re}}\lim_{n\to\infty}2^{-n}\int_{2^n}^{2^{n+1}}
e^{-\frac{\pi i}2\bigl(\frac{d+\alpha/2-1}{1+\alpha/2}\bigr)}e^{2i\theta}\,{\rm d}f 
\int_{\mathbb S^{d-1}}\xi_+(\omega)\overline{\xi_-(-\omega)}\,{\rm d}\omega. 
\end{align*}
Since $e^{2i\theta}=(2i(r^\alpha+\lambda))^{-1}\tfrac{d}{df}e^{2i\theta}$, 
by integrating by parts we can see that the last term vanishes.
Hence we have \eqref{eq:2005241344}. 

Next we prove \eqref{eq:2005241345}.
Noting that $AR(i)\in\vL(\vB^*)$ and $A\phi=(\lambda-i)AR(i)\phi\in\vB^*$, 
we obtain
\begin{align*}
0 &= 
\lim_{n\to\infty}\langle i[H_\alpha, \chi_n]\rangle_\phi 
\\&= 
\lim_{n\to\infty}\langle A\chi_n'\rangle_\phi 
\\&= 
\lim_{n\to\infty}\langle A\phi, 
\chi_n'\bigl(\phi_\lambda^+[\xi_+]+\phi_\lambda^-[\xi_-]\bigr)\rangle 
\\&= 
\lim_{n\to\infty}\langle \phi, 
\chi_n'\bigl(A\phi_\lambda^+[\xi_+]+A\phi_\lambda^-[\xi_-]\bigr)\rangle 
\\&= 
\lim_{n\to\infty}\langle \phi, 
\chi_n'\bigl(\phi_\lambda^+[\xi_+]-\phi_\lambda^-[\xi_-]\bigr)\rangle 
\\&= 
\tfrac1{2\pi}\left(\|\xi_+\|^2 - \|\xi_-\|^2\right).
\end{align*}
This implies \eqref{eq:2005241345}.

Finally, the uniqueness statement follows from \eqref{eq:2005241344}, 
\eqref{eq:2005241345}, linearity of $\phi_\lambda^\pm$ and 
Rellich's theorem, or the absence of $\vB_0^*$-eigenfunctions for $H_\alpha$.
\end{proof}

Next we construct $\phi\in\vE_\lambda$ from $\xi_\pm\in C^\infty(\mathbb S^{d-1})$. 
\begin{lemma}\label{lem:2007111232}
Let $\lambda\in\R$. 
For any $\xi_-\in C^\infty(\mathbb S^{d-1})$ we define 
$\phi\in \vE_\lambda$ and $\xi_+\in L^2(\mathbb S^{d-1})$ by 
\begin{align*}
\phi = \phi_\lambda^-[\xi_-]-R(\lambda+i0)\psi_\lambda^-[\xi_-], 
\quad 
\xi_+ = -2\pi i \scF^+(\lambda)\psi^-_\lambda[\xi_-].
\end{align*}
Then \eqref{eq:2005232312} and \eqref{eq:2005241246} 
hold for \{$\xi_+, \xi_-, \phi$\}.
\end{lemma}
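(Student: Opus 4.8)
The plan is to verify the two required displays \eqref{eq:2005232312} and \eqref{eq:2005241246} for the triple $\{\xi_+,\xi_-,\phi\}$ defined in the statement, reducing everything to Sommerfeld's uniqueness theorem (Corollary~\ref{cor:uniqueness-result-refine}), Proposition~\ref{prop:2004031943} and the identity \eqref{eq:2003301812}. First I would record that $\xi_-\in C^\infty(\mathbb S^{d-1})$ guarantees $\psi_\lambda^-[\xi_-]=(H_\alpha-\lambda)\phi_\lambda^-[\xi_-]\in\vB$ by \eqref{eq:2004021416}, so $R(\lambda+i0)\psi_\lambda^-[\xi_-]$ makes sense and lies in $\vB^*$ by Corollary~\ref{cor:lap-refined}; hence $\phi=\phi_\lambda^-[\xi_-]-R(\lambda+i0)\psi_\lambda^-[\xi_-]\in\vB^*$, and applying $(H_\alpha-\lambda)$ gives $(H_\alpha-\lambda)\phi=\psi_\lambda^-[\xi_-]-\psi_\lambda^-[\xi_-]=0$ in the distributional sense, so indeed $\phi\in\vE_\lambda$. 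Also by Proposition~\ref{prop:2004031943} (lower sign) we have $\scF^-(\lambda)^*\xi_-=\phi_\lambda^-[\xi_-]-R(\lambda+i0)\psi_\lambda^-[\xi_-]=\phi$, which is the half of \eqref{eq:2005241246} asserting $\phi=\scF^-(\lambda)^*\xi_-$.

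Next I would establish the decomposition \eqref{eq:2005232312}, namely $\phi-\phi_\lambda^+[\xi_+]-\phi_\lambda^-[\xi_-]\in\vB_0^*$; since $\phi-\phi_\lambda^-[\xi_-]=-R(\lambda+i0)\psi_\lambda^-[\xi_-]$, this amounts to showing
\begin{equation*}
-R(\lambda+i0)\psi_\lambda^-[\xi_-]-\phi_\lambda^+[\xi_+]\in\vB_0^*,\qquad \xi_+=-2\pi i\,\scF^+(\lambda)\psi_\lambda^-[\xi_-].
\end{equation*}
This is the step I expect to be the main obstacle: it is an asymptotic-behavior statement identifying the outgoing part of the limiting resolvent $R(\lambda+i0)\psi$ for $\psi\in\vB$ with the explicit spherical wave $\phi_\lambda^+[\scF^+(\lambda)\psi]$, modulo $\vB_0^*$. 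I would prove it by first doing the case $\psi\in C_0^\infty(\R^d)$: revisiting the computation behind \eqref{eq:2003181417} and Lemma~\ref{lem:2003171406}, one shows that on the surfaces $\{f=\tilde f\}$ the function $r^{(d+\alpha/2-1)/2}e^{-i\theta}\bigl(R(\lambda+i0)\psi\bigr)(\cdot\omega)$ converges in $L^2(\mathbb S^{d-1})$ (along the full limit $\tilde f\to\infty$, by Theorem~\ref{thm:existence-of-wave-matrix}) to $\sqrt{2\pi}\exp\{-\tfrac{\pi i}4(\tfrac{d-\alpha/2-3}{1+\alpha/2})\}\scF^+(\lambda)\psi$; comparing with the definition \eqref{eq:2004021344} of $\phi_\lambda^+[\,\cdot\,]$ and tracking the phase constants $c_\pm$ this yields that the difference has vanishing $2^{-n/2}\|F_n\cdot\|_{L^2}$, i.e.\ lies in $\vB_0^*$. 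For general $\psi=\psi_\lambda^-[\xi_-]\in\vB$ (with $\xi_-$ smooth) I would pass to the limit using a density argument: approximate $\psi$ in $\vB$ by $\psi_k\in C_0^\infty(\R^d)$, use the $\vL(\vB,\vB^*)$-boundedness of $R(\lambda+i0)$ from Corollary~\ref{cor:lap-refined} and the $\vL(\vB,L^2(\mathbb S^{d-1}))$-continuity of $\scF^+(\lambda)$ together with the closedness of $\vB_0^*$ in $\vB^*$, noting that $\phi_\lambda^+[\xi_+]$ depends continuously on $\xi_+\in L^2(\mathbb S^{d-1})$ as an element of $\vB^*$.

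Finally, with \eqref{eq:2005232312} in hand, the remaining relations in \eqref{eq:2005241246} follow quickly. Applying Proposition~\ref{prop:2004031943} (upper sign) to $v=\xi_+$ would in principle give $\scF^+(\lambda)^*\xi_+=\phi_\lambda^+[\xi_+]-R(\lambda-i0)\psi_\lambda^+[\xi_+]$, but $\xi_+$ need not be smooth; instead I would argue that the triple $\{\xi_+,\xi_-,\phi\}$ satisfies the hypotheses of Lemma~\ref{lem:2008082135}, which by \eqref{eq:2005241345} already forces consistency, and then invoke the uniqueness part of that lemma together with the just-proved identity $\phi=\scF^-(\lambda)^*\xi_-$: since both $\scF^+(\lambda)^*\xi_+$ (defined by continuity, using that $\scF^+(\lambda)^*\in\vL(L^2(\mathbb S^{d-1}),\vB^*)$) and $\phi$ differ from $\phi_\lambda^+[\xi_+]+\phi_\lambda^-[\tilde\xi_-]$ by a $\vB_0^*$ element for the appropriate $\tilde\xi_-$, Rellich's theorem (Theorem~\ref{thm:rellich}) gives $\phi=\scF^+(\lambda)^*\xi_+$. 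The relation $\xi_+=S(\lambda)\xi_-$ is then immediate from the defining identity \eqref{eq:2008111454} of the scattering matrix evaluated on $\psi_\lambda^-[\xi_-]$ together with \eqref{eq:2004021457}, which gives $\xi_-=-2\pi i\,\scF^-(\lambda)\psi_\lambda^-[\xi_-]$ and $\xi_+=-2\pi i\,\scF^+(\lambda)\psi_\lambda^-[\xi_-]=S(\lambda)\bigl(-2\pi i\,\scF^-(\lambda)\psi_\lambda^-[\xi_-]\bigr)=S(\lambda)\xi_-$. The only genuinely new work is the $\vB_0^*$-asymptotics of the outgoing resolvent described above; everything else is bookkeeping with the already-established spectral and radiation-condition results.
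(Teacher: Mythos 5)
Your proposal is correct and follows essentially the same route as the paper: the paper likewise reduces \eqref{eq:2005232312} to the algebraic identity
$\phi-\phi_\lambda^+[\xi_+]-\phi_\lambda^-[\xi_-]
=\sqrt{2\pi}\,e^{-\frac{\pi i}4\bigl(\frac{d-\alpha/2-3}{1+\alpha/2}\bigr)}
r^{-(d+\alpha/2-1)/2}e^{i\theta}
\bigl(\scF^+(\lambda)\psi_\lambda^-[\xi_-]-\scF^+(\lambda,f)\psi_\lambda^-[\xi_-]\bigr)$
combined with the $L^2(\mathbb S^{d-1})$-convergence of $\scF^+(\lambda,f)$ from Theorem~\ref{thm:existence-of-wave-matrix}, and obtains \eqref{eq:2005241246} directly from \eqref{eq:2004032011} and \eqref{eq:2004021457}. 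Your two refinements --- the explicit density step passing from $C_0^\infty(\R^d)$ to $\psi_\lambda^-[\xi_-]\in\vB$ (which the paper glosses over by applying the convergence to $\psi_\lambda^-[\xi_-]$ directly), and the derivation of $\phi=\scF^\pm(\lambda)^*\xi_\pm$ via Lemma~\ref{lem:2008082135} and Rellich's theorem rather than the paper's one-line appeal to \eqref{eq:2004032011}, \eqref{eq:2004021457} (in effect the polarized form of \eqref{eq:2003161512}) --- are valid but do not change the substance of the argument.
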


\begin{proof}
By \eqref{eq:2004032011} and \eqref{eq:2004021457}, 
we can see \eqref{eq:2005241246} holds.
Moreover we can obtain \eqref{eq:2005232312} by the following calculation.
\begin{align*}
&\quad\hspace{-3mm} 
\phi-\phi_\lambda^+[\xi_+]-\phi_\lambda^-[\xi_-] 
\\&= 
2\pi i \phi_\lambda^+\left[\scF^+(\lambda)\psi^-_\lambda[\xi_-]\right] 
- R(\lambda+i0)\psi_\lambda^-[\xi_-] 
\\&= 
\sqrt{2\pi}i e^{-\frac{\pi i}4\left(\frac{d+\alpha/2-1}{1+\alpha/2}\right)}
r^{-(d+\alpha-1)/2}e^{i\theta}\scF^+(\lambda)\psi^-_\lambda[\xi_-] 
- R(\lambda+i0)\psi_\lambda^-[\xi_-] 
\\&= 
\sqrt{2\pi}e^{-\frac{\pi i}4\left(\frac{d-\alpha/2-3}{1+\alpha/2}\right)}
r^{-(d+\alpha/2-1)/2}e^{i\theta}
\left( \scF^+(\lambda)\psi^-_\lambda[\xi_-]- \scF^+(\lambda, f)\psi_\lambda^-[\xi_-]
\right).
\end{align*}
\end{proof}

We can obtain a similar result for $\xi_+\in C^\infty(\mathbb S^{d-1})$ given first.
\begin{lemma}\label{lem:2007181542}
Let $\lambda\in\R$. 
For any $\xi_+\in C^\infty(\mathbb S^{d-1})$ we define 
$\phi\in \vE_\lambda$ and $\xi_-\in L^2(\mathbb S^{d-1})$ by 
\begin{align*}
\phi = \phi_\lambda^+[\xi_+]-R(\lambda-i0)\psi_\lambda^+[\xi_+], 
\quad 
\xi_- = 2\pi i \scF^-(\lambda)\psi^+_\lambda[\xi_+].
\end{align*}
Then \eqref{eq:2005232312} and \eqref{eq:2005241246} 
hold for \{$\xi_+, \xi_-, \phi$\}.
\end{lemma}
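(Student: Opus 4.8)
The plan is to mirror the proof of Lemma~\ref{lem:2007111232} with the roles of $+$ and $-$ interchanged, so that essentially no new work is needed. Given $\xi_+\in C^\infty(\mathbb S^{d-1})$, set $\psi_\lambda^+[\xi_+]=(H_\alpha-\lambda)\phi_\lambda^+[\xi_+]$, which lies in $\vB$ by \eqref{eq:2003191524}--\eqref{eq:2004021416}, and then define $\phi=\phi_\lambda^+[\xi_+]-R(\lambda-i0)\psi_\lambda^+[\xi_+]$ and $\xi_-=2\pi i\,\scF^-(\lambda)\psi_\lambda^+[\xi_+]$ as in the statement. The first task is to check $\phi\in\vE_\lambda$: this is immediate since $(H_\alpha-\lambda)\phi=\psi_\lambda^+[\xi_+]-(H_\alpha-\lambda)R(\lambda-i0)\psi_\lambda^+[\xi_+]=0$ and $\phi\in\vB^*$ because $\phi_\lambda^+[\xi_+]\in\vB^*$ by \eqref{eq:2003101704} and $R(\lambda-i0)\psi_\lambda^+[\xi_+]\in\vB^*$ by Corollary~\ref{cor:lap-refined}.

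Next I would establish \eqref{eq:2005241246}, i.e.\ $\phi=\scF^\mp(\lambda)^*\xi_\mp$ and $\xi_+=S(\lambda)\xi_-$. The identity $\phi=\scF^+(\lambda)^*\xi_+$ is exactly \eqref{eq:2004032011} (lower-sign version) applied to $v=\xi_+$, since $\scF^-(\lambda)^*\xi_+=\phi_\lambda^+[\xi_+]-R(\lambda-i0)\psi_\lambda^+[\xi_+]$; wait --- one should be careful which sign of \eqref{eq:2004032011} is in play, but reading off Proposition~\ref{prop:2004031943} with the upper sign gives precisely $\scF^+(\lambda)^*\xi_+=\phi_\lambda^+[\xi_+]-R(\lambda-i0)\psi_\lambda^+[\xi_+]=\phi$. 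The relation $\xi_-=2\pi i\,\scF^-(\lambda)\psi_\lambda^+[\xi_+]$ combined with the analogue of \eqref{eq:2004021457} for the lower sign (namely $v=-2\pi i\,\scF^-(\lambda)\psi_\lambda^-[v]$, hence for our setting $\xi_+=S(\lambda)\xi_-$ by the defining relation \eqref{eq:2008111454}) yields $\xi_+=S(\lambda)\xi_-$; then $\phi=\scF^-(\lambda)^*\xi_-$ follows from $\phi=\scF^+(\lambda)^*\xi_+=\scF^+(\lambda)^*S(\lambda)\xi_-=\scF^-(\lambda)^*\xi_-$, using $\scF^+(\lambda)^*S(\lambda)=\scF^-(\lambda)^*$ (adjoint of \eqref{eq:2008111454}).

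Finally I would verify \eqref{eq:2005232312}, the assertion $\phi-\phi_\lambda^+[\xi_+]-\phi_\lambda^-[\xi_-]\in\vB_0^*$, by the same algebraic manipulation as in Lemma~\ref{lem:2007111232}: substitute $\phi=\phi_\lambda^+[\xi_+]-R(\lambda-i0)\psi_\lambda^+[\xi_+]$ and $\xi_-=2\pi i\,\scF^-(\lambda)\psi_\lambda^+[\xi_+]$, expand $\phi_\lambda^-[\xi_-]$ using the definition \eqref{eq:2004021344}, and recognize the difference as a constant times $r^{-(d+\alpha/2-1)/2}e^{-i\theta}\bigl(\scF^-(\lambda)\psi_\lambda^+[\xi_+]-\scF^-(\lambda,f)\psi_\lambda^+[\xi_+]\bigr)$. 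Since $\scF^-(\lambda,f)\psi_\lambda^+[\xi_+]\to\scF^-(\lambda)\psi_\lambda^+[\xi_+]$ in $L^2(\mathbb S^{d-1})$ as $f\to\infty$ by Theorem~\ref{thm:existence-of-wave-matrix}, and multiplication by $r^{-(d+\alpha/2-1)/2}$ converts the $L^2(\mathbb S^{d-1})$-decay into $\vB_0^*$-membership (this is the computation implicit in the chain displayed at the end of Lemma~\ref{lem:2007111232}), the expression is in $\vB_0^*$. The only point requiring care --- and the place where one must keep the bookkeeping straight --- is getting all the signs in the phase factors $\exp\{\mp\frac{\pi i}4(\cdots)\}$ and the $\pm2\pi i$ coefficients consistent between the $+$ and $-$ conventions; otherwise the argument is a verbatim transcription of the previous lemma.
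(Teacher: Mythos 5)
Your proposal is correct and follows exactly the route the paper intends: the paper gives no separate proof of this lemma, stating only that it is obtained from Lemma~\ref{lem:2007111232} by the analogous argument with the signs interchanged, and your sign-swapped transcription (Proposition~\ref{prop:2004031943} with the upper sign for $\phi=\scF^+(\lambda)^*\xi_+$, \eqref{eq:2004021457} and \eqref{eq:2008111454} for $\xi_+=S(\lambda)\xi_-$, and the telescoping identity reducing \eqref{eq:2005232312} to the $L^2(\mathbb S^{d-1})$-convergence $\scF^-(\lambda,f)\psi_\lambda^+[\xi_+]\to\scF^-(\lambda)\psi_\lambda^+[\xi_+]$ from Theorem~\ref{thm:existence-of-wave-matrix}) is precisely that argument. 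The sign bookkeeping you flag does come out consistently, so no further work is needed.
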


Let us construct $\xi_\pm\in L^2(\mathbb S^{d-1})$ from $\phi\in\vE_\lambda$. 
\begin{lemma}\label{lem:2007211540}
Let $\lambda\in\R$. 
For any $\phi\in\vE_\lambda$ there exist 
$\xi_\pm\in L^2(\mathbb S^{d-1})$ 
such that \eqref{eq:2005241246} holds. 
\end{lemma}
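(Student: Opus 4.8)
The plan is to produce the pair $\xi_\pm$ by approximating a given $\phi\in\vE_\lambda$ in the $\vB^*$-topology, or rather along the averaged sequence appearing in \eqref{eq:2005241247}, and extracting the radiation patterns from the expressions $(A\pm a_0)\phi$. First I would fix $\phi\in\vE_\lambda$ and consider the functions
\[
\xi_\pm(f) := \pm\tfrac12 c_\pm\, r^{(d+\alpha/2-1)/2}e^{\mp i\theta}(A\pm a_0)\phi,
\]
viewed as elements of $L^2(\mathbb S^{d-1})$ for each $f$ large. The point is to show that the Ces\`aro-type averages $R^{-1}\int_R^{2R}\xi_\pm(f)\,{\rm d}f$ converge in $L^2(\mathbb S^{d-1})$ as $R\to\infty$; the limits are then the desired $\xi_\pm$. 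The mechanism for convergence is the same as in the proof of Proposition~\ref{prop:2005302142}: a uniform bound on the averages coming from the radiation condition bound Corollary~\ref{cor:rc-real} (which controls $(A\mp a_\pm)R(\lambda\pm i0)$ and hence, via Sommerfeld's uniqueness theorem Corollary~\ref{cor:uniqueness-result-refine}, the relevant combinations for $\phi$), together with an almost-orthogonality/integration-by-parts argument that kills the oscillatory cross terms involving $e^{\pm 2i\theta}$.

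The key steps, in order, are as follows. First, I would show that $(A\pm a_0)\phi\in\vB^*$ and more precisely lies in a slightly weighted space, using $A\phi=(\lambda-i)AR(i)\phi$ with $AR(i)\in\vL(\vB^*)$ as in the proof of Lemma~\ref{lem:2008082135}, together with the fact that $a_0$ is a bounded multiplication operator relative to the relevant scales. Second, I would establish the uniform bound
\[
\sup_{R>2}\Bigl\|\tfrac1R\int_R^{2R}\xi_\pm(f)\,{\rm d}f\Bigr\|_{L^2(\mathbb S^{d-1})}\le C\|\phi\|_{\vB^*}
\]
by Cauchy--Schwarz in $f$ followed by Corollary~\ref{cor:rc-real}, exactly paralleling \eqref{eq:2006010039}. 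Third, I would prove the existence of the limit: approximate $\phi$ by elements of the form $\phi_\lambda^-[\eta]-R(\lambda+i0)\psi_\lambda^-[\eta]$ with $\eta\in C^\infty(\mathbb S^{d-1})$, for which Lemmas~\ref{lem:2007111232}--\ref{lem:2007181542} already give the existence of $\xi_\pm$ and the identity \eqref{eq:2005241247}; the density of such $\phi$ in $\vE_\lambda$ (in the $\vB^*$-topology modulo $\vB_0^*$) follows from parts (i)--(ii) applied to $C^\infty$ data and the fact that $C^\infty(\mathbb S^{d-1})$ is dense in $L^2(\mathbb S^{d-1})$, combined with Lemma~\ref{lem:2008082135} to control the approximation error by $\|\xi_+^{(k)}-\xi_+^{(\ell)}\|^2+\|\xi_-^{(k)}-\xi_-^{(\ell)}\|^2$. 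Passing to the limit then yields $\xi_\pm$ for the general $\phi$, and \eqref{eq:2005241246}--\eqref{eq:2005241247} persist by continuity.

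The main obstacle I expect is the third step: controlling the approximation. One must check that an arbitrary $\phi\in\vE_\lambda$ can be approximated, modulo $\vB_0^*$, by the concrete eigenfunctions built from smooth spherical data, and that this approximation is ``stable'' — i.e. that a Cauchy sequence of smooth data $\eta^{(k)}$ produces a $\vB^*$-Cauchy sequence of eigenfunctions whose limit is the prescribed $\phi$. This is where Lemma~\ref{lem:2008082135} is essential: it gives the isometry-type identity $\|\xi_+\|^2+\|\xi_-\|^2=\lim_n 2^{-n}\int_{2^n<f<2^{n+1}}2\pi|\phi|^2\,{\rm d}x$ relating the $L^2(\mathbb S^{d-1})$-norms of the radiation patterns to a $\vB^*$-type norm of $\phi$, so that convergence of data forces convergence of eigenfunctions and vice versa. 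The oscillatory cross-term estimate (integrating $e^{\pm 2i\theta}$ by parts against the slowly varying weight $(2i(r^\alpha+\lambda))^{-1}$, as in the proof of \eqref{eq:2005241344}) is routine once the weighted membership of $(A\pm a_0)\phi$ is in hand, and I would not belabor it.
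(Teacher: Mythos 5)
There is a genuine gap, and it sits exactly where you flagged it: your third step is circular. To prove the existence of $\xi_\pm$ for an arbitrary $\phi\in\vE_\lambda$ you propose to approximate $\phi$, modulo $\vB_0^*$, by eigenfunctions of the form $\phi_\lambda^-[\eta]-R(\lambda+i0)\psi_\lambda^-[\eta]=\scF^-(\lambda)^*\eta$ with $\eta\in C^\infty(\mathbb S^{d-1})$. But the density of such eigenfunctions in $\vE_\lambda$ is equivalent to the surjectivity of the wave matrices onto $\vE_\lambda$, which is precisely the content of Lemma~\ref{lem:2007211540} (and of Theorem~\ref{thm:characterize-B^*-ef}~(i)). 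Lemmas~\ref{lem:2007111232} and~\ref{lem:2007181542} only go in the direction ``given smooth spherical data, construct $\phi$''; Lemma~\ref{lem:2008082135} gives uniqueness and the norm identity \emph{for triples that already satisfy} \eqref{eq:2005232312}, not the existence of a decomposition for a general $\phi$. Nothing established before this lemma tells you that an arbitrary minimal generalized eigenfunction lies in the $\vB^*/\vB_0^*$-closure of the smooth-data family, so the approximation cannot get started. A second, related problem is your first two steps: Corollary~\ref{cor:rc-real} controls $(A\mp a_\pm)R(\lambda\pm i0)\psi$ for $\psi\in\vB$, but a general $\phi\in\vE_\lambda$ is not presented as $R(\lambda\pm i0)\psi$, so the radiation condition bounds do not apply to $(A\pm a_0)\phi$ directly; all you get from $AR(i)\in\vL(\vB^*)$ is $A\phi\in\vB^*$, which is not enough to run the Ces\`aro-average convergence argument of Proposition~\ref{prop:2005302142}.

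The paper avoids both issues by a commutator construction rather than an approximation. It sets $\phi_+=\tfrac1{2a_0}\bar\chi_m(A+a_0)\phi$ and $\xi_n=2\pi i\,\scF^+(\lambda)\chi_n(\eta(H)-\lambda)\phi_+$ with $\eta\in C_0^\infty(\R)$, $\eta(t)=t$ near $\lambda$, proves a uniform bound on $\{\xi_n\}$ via the Helffer--Sj\"ostrand representation of $[\chi_n,\eta(H)]$ and the structure of $[H,iA]_{\chi_n'}$, extracts a weak limit $\xi$, and then verifies $\scF^+(\lambda)^*\xi=\phi$ by writing $2\pi i\,\scF^+(\lambda)^*\scF^+(\lambda)=R(\lambda+i0)-R(\lambda-i0)$ and showing the commutator remainders vanish weakly-$*$ in $\vB^*$ using Sommerfeld's uniqueness theorem. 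This only uses $A\phi\in\vB^*$ and requires no a priori density of smooth-data eigenfunctions. If you want to salvage your route, you would have to supply an independent proof of that density, which in this framework is not easier than the commutator argument itself.
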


\begin{proof}
We use similar scheme to \cite{IS2, Sk}.
By the definition of $S(\lambda)$, 
to verify \eqref{eq:2005241246} 
it suffices to show that there exists $\xi\in L^2(\mathbb S^{d-1})$ such that 
$\phi=\scF^+(\lambda)^*\xi$ holds.

We take and fix a function $\eta\in C_0^\infty(\R)$ which satisfies $\eta(t)=t$ 
in neighborhood of $t=\lambda$. 
We define $\phi_\pm\in\vB^*$ and 
$\xi_n\in L^2(\mathbb S^{d-1}), n\in\N$ for fixed large $m\in\N$ by 
\begin{equation*}
\phi_\pm=\dfrac1{2a_0}\bar\chi_m(A\pm a_0)\phi,
\quad 
\xi_n=2\pi i\scF^+(\lambda)\chi_n\left(\eta(H)-\lambda\right)\phi_+,
\end{equation*}
where $a_0=r^{-\alpha/2}\sqrt{2\lambda-2q_0+r^\alpha}$.
First we see $\xi_n$ is bounded uniformly in $n\in\N$. 
By commuting $\chi_n$ and $\eta(H)$ 
and noting $\scF^+(\lambda)(\eta(H)-\lambda)=0$, 
we have for $v\in C^\infty(\mathbb S^{d-1}), \|v\|_{L^2(\mathbb S^{d-1})}=1$, 
\begin{align*}
\langle v, \xi_n\rangle_{L^2(\mathbb S^{d-1})} 
= -2\pi i\langle \scF^+(\lambda)^*v, [\chi_n, \eta(H)]\phi_+\rangle_{\vB^*\times\vB}. 
\end{align*}
By the Helffer--Sj\"ostrand formula (cf. \cite{HS})
\begin{equation*}
\eta(H)=\int_\C R(z)\,{\rm d}\mu(z);\quad 
\,{\rm d}\mu(z)=-(2\pi i)^{-1}\bar\partial_{z}\tilde\eta(z)\,{\rm d}z\,{\rm d}\bar z,
\end{equation*}
where $\bar\partial_z=\tfrac12(\partial_x+i\partial_y)$ for $z=x+iy$ and 
$\tilde\eta$ is a almost analytic extension of $\eta$, 
we have the expression 
\begin{equation}\label{eq:2008071833}
[\chi_n, \eta(H)] 
= -i\int_\C R(z)(A\chi_n'+\tfrac{i}2|\partial f|^2\chi_n'')R(z)\,{\rm d}\mu(z).
\end{equation}
Then we can write 
\begin{align*}
[\chi_n, \eta(H)]\phi_+ 
&= 
i(A\chi_n'+i|\partial f|^2\chi_n'')\eta'(H)\phi_+ 
+\tfrac12\int_\C R(z)|\partial f|^2\chi_n''R(z)\phi_+\,{\rm d}\mu(z)
\\&\phantom{{}={}}{}
+ \int_\C R(z)\left([H, iA]_{\chi_n'}+z|\partial f|^2\chi_n''\right)R(z)^2\phi_+\,{\rm d}\mu(z).
\end{align*}
Since $[H, iA]_{\chi_n'}$ can be expressed as, cf. \cite[Lemma~2.2]{Ita2}, 
\begin{equation*}
[H, iA]_{\chi_n'} 
= \mathop{\mathrm{Re}}\left(\gamma_1H\right)+A\gamma_2A+\gamma_3, 
\end{equation*}
where $\gamma_j, j=1,2,3$ are certain real-valued functions which satisfying 
$\mathop{\mathrm{supp}}\gamma_j\subseteq\mathop{\mathrm{supp}}\chi_n'$ 
and $|\gamma_j|=\vO(f^{-2})$, 
we can see the second and the third terms are bounded in $\vB$ uniformly in $n\in\N$. 
Thus we can estimate as 
\begin{align*}
|\langle v, \xi_n\rangle_{L^2(\mathbb S^{d-1})}| 
\le 
C_1\left(\|A\scF^+(\lambda)^*v\|_{\vB^*}+\|\scF^+(\lambda)^*v\|_{\vB^*}\right)\le C_2.
\end{align*}
Therefore the sequence $\{\xi_n\}_{n\in\N}\subset L^2(\mathbb S^{d-1})$ is bounded. 
Let us choose a weakly convergent subsequence of $\{\xi_n\}_{n\in\N}$ and 
denote its weak limit by $\xi$. 
By changing notation, we may assume 
$\mathop{\mathrm{w\text{-}lim}}_{n\to\infty}\xi_n=\xi\in L^2(\mathbb S^{d-1})$. 
For this $\xi$, we show $\scF^+(\lambda)^*\xi=\phi$. 
We introduce the function $\check\eta(t):=(\eta(t)-\lambda)(t-\lambda)^{-1}$ 
and compute 
\begin{align*}
&\quad\hspace{-3mm}\scF^+(\lambda)^*\xi
\\&= 
\mathop{\mathrm{w^*\text{-}\vB^*\text{-}lim}}_{n\to\infty}
2\pi i \scF^+(\lambda)^*\scF^-(\lambda)\chi_n(\eta(H)-\lambda)\phi_+
\\&= 
\mathop{\mathrm{w^*\text{-}\vB^*\text{-}lim}}_{n\to\infty}
\left(R(\lambda+i0)-R(\lambda-i0)\right)\chi_n(\eta(H)-\lambda)\phi_+
\\&= 
\mathop{\mathrm{w^*\text{-}\vB^*\text{-}lim}}_{n\to\infty}
\bigl(\check\eta(H)\chi_n\phi 
+R(\lambda+i0)[\chi_n, \eta(H)]\phi_+
+R(\lambda-i0)[\chi_n, \eta(H)](\phi-\phi_+)\bigr)
\\&= 
\phi + \mathop{\mathrm{w^*\text{-}\vB^*\text{-}lim}}_{n\to\infty}
\bigl(R(\lambda+i0)[\chi_n, \eta(H)]\phi_+
+R(\lambda-i0)[\chi_n, \eta(H)](\phi-\phi_+)\bigr).
\end{align*}
By Corollary~\ref{cor:uniqueness-result-refine} and \eqref{eq:2008071833}, we have 
\begin{align*}
&\quad\hspace{-3mm}
\mathop{\mathrm{w^*\text{-}\vB^*\text{-}lim}}_{n\to\infty}
R(\lambda+i0)[\chi_n, \eta(H)]\phi_+
\\&= 
-i\mathop{\mathrm{w^*\text{-}\vB^*\text{-}lim}}_{n\to\infty}
\int_\C R(z)R(\lambda+i0)A\chi_n'R(z)\,{\rm d}\mu(z)\phi_+
\\&= 
i\mathop{\mathrm{w^*\text{-}\vB^*\text{-}lim}}_{n\to\infty}
\int_\C R(z)R(\lambda+i0)a_0\chi_n'R(z)\,{\rm d}\mu(z)\phi_+
\\&= 
-\tfrac{i}2\mathop{\mathrm{w^*\text{-}\vB^*\text{-}lim}}_{n\to\infty}
\eta'(H)R(\lambda+i0)\chi_n'(A+a_0)\phi
\\&= 
-\tfrac{i}2\mathop{\mathrm{w^*\text{-}\vB^*\text{-}lim}}_{n\to\infty}
\eta'(H)R(\lambda+i0)(A+a_0)\chi_n'\phi
=0.
\end{align*}
By noting $\phi-\phi_+=\chi_m\phi-\phi_-$, we can see by similar argument that 
\begin{equation*}
\mathop{\mathrm{w^*\text{-}\vB^*\text{-}lim}}_{n\to\infty}
R(\lambda-i0)[\chi_n, \eta(H)](\phi-\phi_+)=0.
\end{equation*}
Hence we are done.
\end{proof}

\begin{proof}[Proof of Theorem~\ref{thm:characterize-B^*-ef}]
First we let $\xi_-\in L^2(\mathbb S^{d-1})$ and choose 
a sequence $\{\xi_{-, n}\}\subset C^\infty(\mathbb S^{d-1})$ such that 
$\xi_{-, n}$ converges to $\xi_-$ in $L^2(\mathbb S^{d-1})$ as $n\to\infty$. 
Then by Lemma~\ref{lem:2007111232} we have 
\begin{equation*}
\scF^-(\lambda)^*\xi_{-, n}
-\phi_\lambda^+[S(\lambda)\xi_{-, n}]
-\phi_\lambda^-[\xi_{-, n}]\in \vB_0^*.
\end{equation*}
Since $\scF^-(\lambda)^*, \phi_\lambda^\pm[\,\cdot\,]$ and $S(\lambda)$ 
are continuous, we can obtain by taking the limit $n\to\infty$ 
\begin{equation*}
\scF^-(\lambda)^*\xi_-
-\phi_\lambda^+[S(\lambda)\xi_-]
-\phi_\lambda^-[\xi_-]\in \vB_0^*.
\end{equation*}
Thus it is shown that \eqref{eq:2005232312} and \eqref{eq:2005241246} 
hold when $\xi_-$ is given first. 
By Lemma~\ref{lem:2007181542} and similar argument, 
we can see that these hold when $\xi_+$ is given first. 
By Lemmas~\ref{lem:2008082135} and~\ref{lem:2007211540} 
we can conclude (i). 

To verify (ii), let us show \eqref{eq:2005241247}. 
We have by straightforward calculations 
\begin{equation*}
(A\mp a_\pm)\phi_\lambda^\pm[\xi]\in\vB_0^*
\end{equation*}
for any $\xi\in C^\infty(\mathbb S^{d-1})$, 
and then we can deduce from \eqref{eq:2004032011} that 
\begin{equation*}
(A\pm a_\pm)\scF^\pm(\lambda)^*\xi \mp 2a_\pm\phi_\lambda^\pm[\xi] 
= 
(A\mp a_\pm)\phi_\lambda^\pm[\xi] 
-(A\pm a_\pm)R(\lambda \mp i0)\psi_\lambda^\pm[\xi]
\in\vB_0^*.
\end{equation*}
Therefore by the definition of $\phi_\lambda^\pm[\,\cdot\,]$ we obtain, 
for any $\xi\in C^\infty(\mathbb S^{d-1})$,
\begin{equation}
\begin{split}\label{eq:2008082208}
\xi 
&= \pm\frac12c_\pm\lim_{R\to\infty}\int_R^{2R}r^{(d+\alpha/2-1)/2}e^{\mp i\theta}
\left[\frac1{a_\pm}(A\pm a_\pm)\scF^\pm(\lambda)^*\xi\right]\!(f, \pm\cdot)\,{\rm d}f 
\\&= 
\pm\frac12c_\pm\lim_{R\to\infty}\int_R^{2R}r^{(d+\alpha/2-1)/2}e^{\mp i\theta}
\left[(A\pm a_0)\scF^\pm(\lambda)^*\xi\right]\!(f, \pm\cdot)\,{\rm d}f, 
\end{split}
\end{equation}
where $c_\pm=\sqrt{2\pi}\exp\{\pm\tfrac{\pi i}4(\tfrac{d+\alpha/2-1}{1+\alpha/2})\}$. 
Since 
\begin{equation*}
(A\pm a_\pm)\scF^\pm(\lambda)^*
=(\lambda-i)\{(A\pm a_\pm)R(i)\}\scF^\pm(\lambda)^*
\in \vL\bigl(L^2(\mathbb S^{d-1}), \vB^*\bigr),
\end{equation*}
we can see that \eqref{eq:2008082208} holds for all $\xi\in L^2(\mathbb S^{d-1})$. 
Hence we have shown (ii). 

By \eqref{eq:2005241246} and Lemma~\ref{lem:2008082135} we have 
\begin{equation*}
\|\xi_\pm\|_{L^2(\mathbb S^{d-1})}^2 
\le \pi\|\scF^\pm(\lambda)^*\xi_\pm\|_{\vB^*}^2.
\end{equation*}
In particular, $\mathop{\mathrm{Ker}}\scF^\pm(\lambda)^*=\{0\}$ and 
$\mathop{\mathrm{Ran}}\scF^\pm(\lambda)^* (=\vE_\lambda)$ 
are closed in $\vB^*$. 
Then by closed range theorem (cf.~e.g.~\cite{Y}), 
the range of $\scF^\pm(\lambda): \vB\to L^2(\mathbb S^{d-1})$ 
coincides with $L^2(\mathbb S^{d-1})$, respectively. 
Hence we are done. 
\end{proof}


\appendix
\section{Proof of Theorem~\ref{thm:rc-complex}}
\label{Appen:proof-of-thm}

First we note that the following lemma holds.
\begin{lemma}\label{lem:1912161547}
Let $z\in I_\pm$.
\begin{itemize}
\item[(1)] 
There exists $C>0$ such that for any $z\in I_\pm$ 
and $x\in\mathop{\mathrm{supp}\bar\chi_{m+1}}$
\begin{equation}\label{eq:1912171445}
|a|\le C, \quad \pm\mathop{\mathrm{Im}}a \ge \tfrac\alpha2r^{-\alpha/2-1}-Cr^{-3\alpha/2-1},
\end{equation}
respectively.
\item[(2)]
One can rewrite $H_\alpha-z$ on $\mathop{\mathrm{supp}}\bar\chi_{m+1}$ as
\begin{equation}\label{eq:1912161558}
H_\alpha-z 
= \tfrac12(A\pm a)r^\alpha(A\mp a) 
+ \tfrac12p_jr^\alpha\ell_{jk}p_k + q_2, 
\end{equation} 
where $q_2$ is a certain complex-valued function which satisfies 
\begin{equation}\label{eq:1912171524}
\bar\chi_{m+1}q_2=\vO(r^{-1}f^{-1-\min\{\rho,(3\alpha/2)/(1-\alpha/2)\}}).
\end{equation}
\end{itemize}
\end{lemma}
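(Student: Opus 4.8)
The plan is to treat (1) by a direct pointwise estimate and (2) by expanding the right--hand side of \eqref{eq:1912161558} and matching terms, the essential point being that the definition of $q_0$ is chosen exactly so that the $\Delta f$--contributions cancel. For (1): on $\mathop{\mathrm{supp}}\bar\chi_{m+1}$ we have $\bar\chi_m=1$, and by the choice of $m$ the number $w:=2(z-q_0)+r^\alpha$ satisfies $\mathop{\mathrm{Re}}w=2\mathop{\mathrm{Re}}z-2q_0+r^\alpha>1$; after enlarging $m$ we may also assume $\tfrac12 r^\alpha\le|w|\le 2r^\alpha$ there. Hence $w\notin(-\infty,0]$, so $\sqrt w$ is well defined with $\mathop{\mathrm{Re}}\sqrt w>0$ and $|r^{-\alpha/2}\sqrt w|\le\sqrt2$, while the last two summands of $a$ in \eqref{eq:improved-phase} are $\vO(r^{-\alpha/2-1})$ since $|(z-q_0)/w|\le Cr^{-\alpha}$; this gives $|a|\le C$. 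For the imaginary part one writes, for $z\in I_\pm$,
\begin{equation*}
\pm\mathop{\mathrm{Im}}a = \pm r^{-\alpha/2}\mathop{\mathrm{Im}}\sqrt w + \tfrac\alpha2 r^{-\alpha/2-1} - \tfrac\alpha2 r^{-\alpha/2-1}\mathop{\mathrm{Re}}\tfrac{z-q_0}{w},
\end{equation*}
and observes that the first term is $\ge0$ (because $w$ lies in the open right half--plane and $\mathop{\mathrm{Im}}w=\pm2\Gamma$) while the last term is $\ge-Cr^{-3\alpha/2-1}$ by $|(z-q_0)/w|\le Cr^{-\alpha}$. This is \eqref{eq:1912171445}.

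For (2) I would use $A=p^f-\tfrac i2(\Delta f)$ from \eqref{eq:defA} together with $[p^f,g]=-i(\partial^f g)$ for functions $g$. Expanding $(A\pm a)r^\alpha(A\mp a)=Ar^\alpha A\pm(ar^\alpha A-Ar^\alpha a)-r^\alpha a^2$, the cross term collapses to the \emph{multiplication} operator $\pm i\,\partial^f(r^\alpha a)=\pm i\bigl(\alpha r^{\alpha/2-1}a+r^\alpha\partial^f a\bigr)$ (this is the crucial point: a surviving $p^f$ here could not be absorbed into $q_2$). Next, using $r=|x|$ on $\mathop{\mathrm{supp}}\bar\chi_{m+1}$, so that $r^\alpha\ell_{jk}=\delta_{jk}-x_jx_kr^{-2}$ and $\Delta f=(d-\alpha/2-1)r^{-\alpha/2-1}$, one checks that $\tfrac12 Ar^\alpha A+\tfrac12 p_jr^\alpha\ell_{jk}p_k$ equals $\tfrac12 p^2$ plus the multiplication operator $-\tfrac14\partial^f(r^\alpha\Delta f)-\tfrac18 r^\alpha(\Delta f)^2$: the radial kinetic part and the angular part $-\tfrac1{2r^2}\Delta_{\mathbb S^{d-1}}=\tfrac12 p_jr^\alpha\ell_{jk}p_k$ reassemble into $\tfrac12 p^2$, the first--order radial terms cancelling precisely because of the $-\tfrac i2(\Delta f)$ in $A$. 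Subtracting the resulting identity from $H_\alpha-z=\tfrac12 p^2-\tfrac12 r^\alpha+q-z$ and inserting $r^\alpha a^2=w+(\text{explicit lower--order terms read off from \eqref{eq:improved-phase}})$, the main contribution reduces to $q-q_0+\tfrac14\partial^f(r^\alpha\Delta f)+\tfrac18 r^\alpha(\Delta f)^2$, in which — by the very definition of $q_0$ — all $\Delta f$--terms cancel. This exhibits $q_2$ as an explicit function built from $r^{-2}$, the lower--order corrections to $r^\alpha a^2$, and $r^\alpha\partial^f a$.

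Finally one estimates $q_2$ term by term. The contributions of $\alpha r^{\alpha/2-1}a$ and of the corrections to $r^\alpha a^2$ are controlled by part (1), the explicit form of $a$ and $2/3<\alpha<2$, and (using $r^{1-\alpha/2}\le Cf$) are of the order carried by the $f^{-1-(3\alpha/2)/(1-\alpha/2)}$ factor in \eqref{eq:1912171524}; differentiating \eqref{eq:improved-phase} one finds $r^\alpha\partial^f a=-r^{-\alpha/2}(\partial^f q_0)$ up to still smaller terms, whose only slowly--decaying part is $-r^{-\alpha/2}(\partial^f q)$, and this is $\vO(r^{-1}f^{-1-\rho})$ by Condition~\ref{cond:short-range}. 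Together these give \eqref{eq:1912171524}. The real obstacle is the bookkeeping in the expansion: one must verify that every contribution of order larger than the target decay genuinely cancels, and it is this requirement that dictates the particular shape of $q_0$ and of the two imaginary correction terms in \eqref{eq:improved-phase}.
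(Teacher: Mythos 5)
Your part (1) is fine (and more detailed than the paper, which simply reads the bounds off \eqref{eq:improved-phase}), and your algebraic setup for part (2) is exactly the paper's: write $p^2=(p^f)^*r^\alpha p^f+p_jr^\alpha\ell_{jk}p_k$, convert $(p^f)^*r^\alpha p^f$ into $Ar^\alpha A$ plus the $\Delta f$--corrections that are absorbed into $q_0$, and complete the square in $A$. The gap is in the final estimation of $q_2$. After the $\Delta f$--cancellation the quantity $q-q_0+\tfrac14\partial^f(r^\alpha\Delta f)+\tfrac18 r^\alpha(\Delta f)^2$ equals $\tfrac{\alpha}{4}r^{-2}$, not zero, and a bare $r^{-2}$ does \emph{not} satisfy \eqref{eq:1912171524} once $\rho>(\alpha/2)/(1-\alpha/2)$ (the right-hand side is then $\vO(r^{-2-\alpha})$). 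Likewise the term $\pm\tfrac{i\alpha}{2}r^{\alpha/2-1}a$ is $\vO(r^{\alpha/2-1})$ --- for $\alpha$ near $2$ it barely decays --- and the ``corrections to $r^\alpha a^2$'' contain the cross term $\pm i\alpha r^{-1}\sqrt{2(z-q_0)+r^\alpha}=\vO(r^{\alpha/2-1})$; neither is anywhere near the target order, contrary to your claim that they are ``controlled by part (1).'' Your identity $r^\alpha\partial^f a=-r^{-\alpha/2}(\partial^f q_0)$ up to smaller terms is also off: differentiating the leading term of $a$ leaves a residue $-\alpha(z-q_0)/\bigl(r\sqrt{2(z-q_0)+r^\alpha}\bigr)=\vO(r^{-1-\alpha/2})$, again larger than the target.

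The lemma holds only because all of these oversized contributions cancel exactly: writing $w=2(z-q_0)+r^\alpha$, the $\vO(r^{\alpha/2-1})$ and $\vO(r^{-1-\alpha/2})$ imaginary terms sum to $\pm\tfrac{i\alpha}{4}r^{-1}\bigl(w-r^\alpha-2(z-q_0)\bigr)/\sqrt{w}=0$, which is precisely what the two imaginary correction terms in \eqref{eq:improved-phase} are designed to achieve, and the pure $r^{-2}$ terms cancel against the $-\tfrac{\alpha}{4}r^{-2}$ built into $q_0$. You correctly observe at the end that ``every contribution of order larger than the target decay'' must cancel, but you never verify this, and the individual bounds you do assert are false. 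The paper's proof consists exactly of carrying out this explicit computation, arriving at $q_2=\pm\tfrac{i}{2}(\partial^rq_0)/\sqrt{w}-\tfrac{\alpha}{4}\bigl(\partial^r\tfrac{z-q_0}{w}\bigr)r^{-1}+\tfrac{\alpha}{4}\tfrac{z-q_0}{w}r^{-2}-\tfrac{\alpha^2}{8}r^{-2}\bigl(\tfrac{z-q_0}{w}\bigr)^2$, every term of which visibly has the decay \eqref{eq:1912171524}. Until you perform that bookkeeping, your proof of \eqref{eq:1912171524} is incomplete.
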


\begin{proof}
The bounds in \eqref{eq:1912171445} clearly hold 
by the definition of \eqref{eq:improved-phase}.
Thus we discuss on (2). 
By noting the expressions \eqref{eq:defA} and \eqref{eq:f-ell}, 
we can compute, on $\mathop{\mathrm{supp}}\bar\chi_{m+1}$, as
\begin{align*}
H_\alpha-z 
&= \tfrac12(p^f)^*r^\alpha p^f 
+ \tfrac12p_jr^\alpha\ell_{jk}p_k 
- \tfrac12r^\alpha + q - z 
\\&= 
\tfrac12Ar^\alpha A 
+ \tfrac12p_jr^\alpha\ell_{jk}p_k 
- \tfrac12r^\alpha + q 
+ \tfrac18r^\alpha(\Delta f)^2 
+ \tfrac{\alpha}4r^{\alpha/2-1}(\Delta f) 
\\&\phantom{{}={}}{}
+ \tfrac14r^\alpha(\partial^f\Delta f) 
- z 
\\&= 
\tfrac12(A\pm a)r^\alpha(A\mp a) 
+ \tfrac12p_jr^\alpha\ell_{jk}p_k 
\pm \tfrac12(p^fr^\alpha a) 
+ \tfrac12r^\alpha a^2
\\&\phantom{{}={}}{}
- \tfrac12r^\alpha + q_0 
+ \tfrac\alpha4r^{-2}
- z 
\end{align*}
Therefore by letting 
\begin{equation*}
q_2 
= 
\pm \tfrac12(p^fr^\alpha a) 
+ \tfrac12r^\alpha a^2
- \tfrac12r^\alpha + q_0 
+ \tfrac\alpha4r^{-2}
- z,
\end{equation*}
we have the expression \eqref{eq:1912161558}.
In addition, we can compute $q_2$ on $\mathop{\mathrm{supp}}\bar\chi_{m+1}$ as
\begin{align*}
q_2 
&= 
\mp \tfrac{i}2\partial^f\left\{ r^{\alpha/2}\sqrt{2z-2q_0+r^\alpha} \pm \tfrac{i\alpha}2r^{\alpha/2-1} \mp \tfrac{i\alpha}2\tfrac{z-q_0}{2z-2q_0+r^\alpha}r^{\alpha/2-1} \right\}
\\&\phantom{{}={}}{}
+ \tfrac12\left\{ \sqrt{2z-2q_0+r^\alpha} \pm \tfrac{i\alpha}2r^{-1} \mp \tfrac{i\alpha}2\tfrac{z-q_0}{2z-2q_0+r^\alpha}r^{-1} \right\}^2
- \tfrac12r^\alpha + q_0 
- z 
+ \tfrac\alpha4r^{-2}
\\&= 
\pm \tfrac{i\alpha}4r^{-1}\sqrt{2z-2q_0+r^\alpha} 
\mp \tfrac{i}4\bigl(\alpha r^{\alpha-1}-2(\partial^rq_0) \bigr)/\sqrt{2z-2q_0+r^\alpha} 
\\&\phantom{{}={}}{}
\mp \tfrac{i\alpha}2r^{-1}(z-q_0)/\sqrt{2z-2q_0+r^\alpha} 
- \tfrac\alpha4\bigl(\partial^r \tfrac{z-q_0}{2z-2q_0+r^\alpha}\bigr)r^{-1}
\\&\phantom{{}={}}{}
+ \tfrac\alpha4\tfrac{z-q_0}{2z-2q_0+r^\alpha}r^{-2} 
- \tfrac{\alpha^2}8r^{-2}\bigl(\tfrac{z-q_0}{2z-2q_0+r^\alpha}\bigr)^2
\\&= 
\pm \tfrac{i}2(\partial^rq_0)/\sqrt{2z-2q_0+r^\alpha} 
- \tfrac\alpha4\bigl(\partial^r \tfrac{z-q_0}{2z-2q_0+r^\alpha}\bigr)r^{-1}
\\&\phantom{{}={}}{}
+ \tfrac\alpha4\tfrac{z-q_0}{2z-2q_0+r^\alpha}r^{-2} 
- \tfrac{\alpha^2}8r^{-2}\bigl(\tfrac{z-q_0}{2z-2q_0+r^\alpha}\bigr)^2.
\end{align*}
The last expression combined with Condition~\ref{cond:short-range} 
leads us to the estimate \eqref{eq:1912171524}.
\end{proof}

We introduce a \emph{weight function} 
\begin{equation}\label{eq:weight-func}
\Theta = \bar\chi_{m+2}\theta^{2\beta};  \quad 
\theta = \int_0^{f/2^\nu}(1+s)^{-1-\delta}{\rm d}s=[1-(1+f/2^\nu)^{-\delta}]/\delta,
\end{equation}
where $\beta, \delta>0$ and $\nu\in\N_0$.
If we denote the derivatives of $\theta$ in $f$ by primes, we have 
\begin{equation*}
\theta'=(1+f/2^\nu)^{-1-\delta}/2^\nu,\quad 
\theta''=-(1+\delta)(1+f/2^\nu)^{-2-\delta}/2^{2\nu}.
\end{equation*}
We note that on $\mathop{\mathrm{supp}}\Theta$ it holds that 
\begin{equation}\label{eq:2001062031}
\partial_i\partial_j f 
= r^{-\alpha/2-1}\delta_{ij}-(1+\alpha/2)r^{\alpha/2-1}(\partial_if)(\partial_jf), 
\end{equation}
and we also note the function $\theta$ has the following properties.
\begin{lemma}\label{lem:theta-inequality}
Fix any $\delta>0$ in \eqref{eq:weight-func}.
Then there exist $c, C, C_k>0,\ k=2, 3, \ldots$, such that for any $k=2,3,\ldots$ 
and uniformly in $\nu\in\N_0$
\begin{align*}
&
c/2^\nu \le \theta \le \min\{C, f/2^\nu\},
\\&
c(\min\{2^\nu, f\})^\delta f^{-1-\delta}\theta \le \theta' \le f^{-1}\theta,
\\&
0\le(-1)^{k-1}\theta^{(k)} \le C_kf^{-k}\theta.
\end{align*}
\end{lemma}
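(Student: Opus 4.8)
The statement to prove is Lemma~\ref{lem:theta-inequality}, the elementary bounds for the weight function $\theta$ from \eqref{eq:weight-func}. The idea is that everything reduces to monotonicity and scaling properties of the single-variable function $t\mapsto [1-(1+t)^{-\delta}]/\delta$ with $t=f/2^\nu$, together with the simple observation that $f\geq 1$ on all of $\R^d$. First I would record the explicit formula $\theta(t)=[1-(1+t)^{-\delta}]/\delta$ and its derivatives in $t$, namely $\theta^{(k)}$ in $t$ is $(-1)^{k-1}$ times a positive constant times $(1+t)^{-k-\delta}$; then convert $t$-derivatives to $f$-derivatives using $\partial_f = 2^{-\nu}\partial_t$, which produces the extra factor $2^{-k\nu}$ that is harmless since it is absorbed into making the bound uniform in $\nu$.

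\textbf{Step 1 (two-sided bound on $\theta$).} Since $\theta(t)$ is increasing, concave, with $\theta(0)=0$, $\theta'(0)=1$ and $\theta(t)\to 1/\delta$ as $t\to\infty$, concavity gives $\theta(t)\leq \min\{t,1/\delta\}$ and also $\theta(t)\geq c_0\min\{t,1\}$ for some $c_0>0$ depending on $\delta$. Plugging $t=f/2^\nu$ and using $f\geq 1$, the upper bound becomes $\theta\leq\min\{C, f/2^\nu\}$. For the lower bound, if $f/2^\nu\geq 1$ then $\theta\geq c_0$; if $f/2^\nu<1$ then $\theta\geq c_0 f/2^\nu\geq c_0/2^\nu$ (again using $f\geq1$); in either case $\theta\geq c/2^\nu$.

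\textbf{Step 2 (bounds on $\theta'$).} We have $\theta'=2^{-\nu}(1+f/2^\nu)^{-1-\delta}$. For the upper bound $\theta'\leq f^{-1}\theta$: write $\theta'/\theta = 2^{-\nu}(1+f/2^\nu)^{-1-\delta}\big/\theta$ and use $\theta(t)\geq \tfrac{t}{1+t}\cdot$(something), or more directly the elementary inequality $[1-(1+t)^{-\delta}]/\delta\geq t(1+t)^{-1-\delta}$, which holds for all $t\geq0,\ \delta>0$ (it is $1-(1+t)^{-\delta}\geq \delta t(1+t)^{-1-\delta}$, provable by noting both sides vanish at $t=0$ and comparing derivatives). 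Dividing gives $\theta'\leq \tfrac1t\theta = \tfrac{2^\nu}{f}\cdot\tfrac{2^{-\nu}}{1}\cdots$, i.e. $\theta'\leq f^{-1}\theta$ after the substitution. For the lower bound $\theta'\geq c(\min\{2^\nu,f\})^\delta f^{-1-\delta}\theta$: since $\theta\leq C$ it suffices to show $\theta'\geq c'(\min\{2^\nu,f\})^\delta f^{-1-\delta}$; compute $\theta' = 2^{-\nu}(1+f/2^\nu)^{-1-\delta} = 2^{(1+\delta)\nu-\nu} \cdot 2^{-(1+\delta)\nu}(2^\nu + f)^{-1-\delta}\cdot 2^{(1+\delta)\nu}$, more cleanly $\theta' = 2^{\delta\nu}(2^\nu+f)^{-1-\delta}$, and then bound $2^\nu+f\leq 2\max\{2^\nu,f\}$ to get $\theta'\geq 2^{-1-\delta}2^{\delta\nu}\max\{2^\nu,f\}^{-1-\delta}$; a short case distinction according to whether $2^\nu\leq f$ or not turns this into the claimed form with $\min\{2^\nu,f\}^\delta f^{-1-\delta}$.

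\textbf{Step 3 (higher derivatives).} From $\theta^{(k)}$ in $f$ equal to $(-1)^{k-1}c_k 2^{-k\nu}(1+f/2^\nu)^{-k-\delta}$ with $c_k>0$, sign and positivity of $(-1)^{k-1}\theta^{(k)}$ are immediate. For the bound $(-1)^{k-1}\theta^{(k)}\leq C_k f^{-k}\theta$: since $\theta\geq c/2^\nu$ we have $2^{-k\nu}\leq c^{-k}\theta^k\cdot 2^{-(k-1)\nu}\cdot\cdots$ — actually cleaner is to bound $2^{-k\nu}(1+f/2^\nu)^{-k-\delta}$ directly by $f^{-k}\theta$: write $2^{-k\nu}(1+f/2^\nu)^{-k} = (2^\nu+f)^{-k}\leq f^{-k}$ (when $k\geq1$), and then $(1+f/2^\nu)^{-\delta} = 1-\delta\theta\leq 1$, which gives $(-1)^{k-1}\theta^{(k)}\leq c_k f^{-k}$; to upgrade $1$ to $\theta$ on the right, use instead $(2^\nu+f)^{-k}\leq (2^\nu+f)^{-1}f^{-(k-1)}$ and $\theta\geq c(2^\nu+f)^{-1}\cdot$(const) from Step 1's lower bound $\theta\geq c/2^\nu$ combined with $\theta\geq c_0 f/(2^\nu+f)\cdot$const — i.e. $(2^\nu+f)^{-1}\leq C\theta f^{-1}$ — so that $(2^\nu+f)^{-k}\leq C\theta f^{-k}$, as wanted. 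All constants $c,C,C_k$ so produced depend only on $\delta$ (and $k$), never on $\nu$, which is the required uniformity.

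\textbf{Main obstacle.} None of this is deep; the only place requiring a little care is getting the $\min\{2^\nu,f\}^\delta$ factor in the lower bound for $\theta'$ exactly right and confirming it is genuinely uniform in $\nu$ — the danger is an off-by-a-power-of-$2^\nu$ slip when passing between $t$- and $f$-variables. The clean bookkeeping device is to always rewrite $1+f/2^\nu = 2^{-\nu}(2^\nu+f)$ so that every quantity becomes a manifest function of $2^\nu+f$ and $2^\nu$ separately, after which the $f\geq1$ hypothesis and a two-case split on the sign of $2^\nu-f$ finish each inequality mechanically.
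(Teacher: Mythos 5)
The paper itself omits the proof of this lemma (it only refers to \cite{IS1,Ita2}), so there is no in-paper argument to compare against; your overall strategy --- reduce everything to the one-variable function $\theta(t)=[1-(1+t)^{-\delta}]/\delta$ at $t=f/2^\nu$, exploit concavity and $f\ge 1$, and rewrite $1+f/2^\nu=2^{-\nu}(2^\nu+f)$ --- is the natural one, and Steps 1 and 3 go through as you describe.

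There is, however, a genuine gap in Step 2, at exactly the spot you flagged as the delicate one. To prove $\theta'\ge c\,(\min\{2^\nu,f\})^\delta f^{-1-\delta}\theta$ you discard the factor $\theta$ via $\theta\le C$ and reduce to the stronger claim $\theta'\ge c'(\min\{2^\nu,f\})^\delta f^{-1-\delta}$. That stronger claim is false when $f<2^\nu$: there it reads $\theta'\ge c'f^{-1}$, while $\theta'=2^{\delta\nu}(2^\nu+f)^{-1-\delta}\le 2^{-\nu}$, so for $f=1$ and $\nu$ large the left-hand side tends to $0$ while the right-hand side stays bounded below by $c'$. No case distinction can rescue the reduced inequality, because the reduction itself has thrown away a factor of order $f/2^\nu$. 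The repair is immediate with tools you already have: in the regime $f<2^\nu$ keep the factor $\theta$ and use the upper bound $\theta\le f/2^\nu$ from Step 1, which gives $f^{-1}\theta\le 2^{-\nu}$, while $t=f/2^\nu<1$ gives $\theta'=2^{-\nu}(1+t)^{-1-\delta}\ge 2^{-1-\delta}2^{-\nu}$; hence $\theta'\ge 2^{-1-\delta}f^{-1}\theta$, which is the claimed bound there since $\min\{2^\nu,f\}=f$. In the regime $f\ge 2^\nu$ your reduction via $\theta\le C$ is fine. With that one substitution the argument is complete and uniform in $\nu$.
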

We omit the proof.
We only refer \cite{IS1, Ita2} for the details.

The following lemma is a key to prove Theorem~\ref{thm:rc-complex}.
\begin{lemma}\label{lem:rc-bound-key-est}
Let $\beta\in(0,1+\alpha/(1-\alpha/2))$. 
Fix any $\delta\in(0,\alpha/(1-\alpha/2))$ in \eqref{eq:weight-func}.
Then there exist $c,C>0$ such that uniformly in $z\in I_\pm$ and $\nu\in\N_0$, 
as quadratic forms on $\vD(H)$,
\begin{equation*}
\begin{split}
&\mathop{\mathrm{Im}}\bigl((A\mp a)^*\Theta(H_\alpha-z)\bigr) 
\\&\ge 
c(A\mp a)^*\bar\chi_{m+2}\theta'\theta^{2\beta-1}(A\mp a) 
+ cp_jf^{-1}\Theta\ell_{jk}p_k 
\\&\phantom{{}={}}{}
- Cf^{-1-2\min\{\rho+1/(1-\alpha/2), 1+2\alpha/(1-\alpha/2)\}+\delta}\theta^{2\beta} 
+\mathop{\mathrm{Re}}\bigl(\gamma\theta^{2\beta}(H_\alpha-z)\bigr),
\end{split}
\end{equation*}
where $\gamma=\gamma_{z,\nu}$ is a certain function 
satisfying 
$|\gamma|\le Cr^{-2-\alpha}f^{-1-2\rho+\delta}$.
\end{lemma}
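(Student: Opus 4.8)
The plan is to prove Lemma~\ref{lem:rc-bound-key-est} by evaluating its left-hand side as a quadratic form on $\vD(H)$ after inserting the factorization \eqref{eq:1912161558} of $H_\alpha-z$ and its adjoint, and then sorting the commutator terms that appear into the three advertised positive contributions plus remainders that are either negligible or of the form $\mathrm{Re}(\gamma\theta^{2\beta}(H_\alpha-z))$. This follows the scheme of \cite{Ita2} (see also \cite{AIIS,IS1}), now carried out with the weights \eqref{eq:weight-func}. Throughout I write $B_\pm=A\pm a$, so that $B_\pm^*=A\pm\bar a$, $B_\pm=(A\mp a)\pm2a$ and $A\mp a=(A\mp a)^*\mp2i\,\mathrm{Im}\,a$; all computations take place on $\mathop{\mathrm{supp}}\Theta\subseteq\mathop{\mathrm{supp}}\bar\chi_{m+2}\subseteq\mathop{\mathrm{supp}}\bar\chi_{m+1}$, where \eqref{eq:1912161558} and \eqref{eq:2001062031} are available.

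First I would set $X=(A\mp a)^*\Theta(H_\alpha-z)$, observe $X^*=(H_\alpha-\bar z)\Theta(A\mp a)$, and substitute into $X$ and $X^*$ the factorization \eqref{eq:1912161558} and its adjoint $H_\alpha-\bar z=\tfrac12(A\mp a)^*r^\alpha(A\pm a)^*+\tfrac12p_jr^\alpha\ell_{jk}p_k+\bar q_2$. Expanding $A\pm a=(A\mp a)\pm2a$ and $A\mp a=(A\mp a)^*\mp2i\,\mathrm{Im}\,a$ to align the outer $(A\mp a)$-factors, the difference $X-X^*$ splits into four groups of terms. (i)~The commutator of $A$ with the scalar weight $\Theta=\bar\chi_{m+2}\theta^{2\beta}$: here the $r^\alpha$ from \eqref{eq:1912161558} cancels the $|\partial f|^2=r^{-\alpha}$ produced by $\partial^f$, leaving $2\beta\,\theta'\theta^{2\beta-1}\ge0$ and hence the positive term $c(A\mp a)^*\bar\chi_{m+2}\theta'\theta^{2\beta-1}(A\mp a)$ up to lower order. (ii)~A term proportional to $(A\mp a)^*\Theta\,r^\alpha(\pm\mathrm{Im}\,a)(A\mp a)$, coming from the imaginary part of $a$ inside $B_\pm$; by Lemma~\ref{lem:1912161547}(1), $\pm\mathrm{Im}\,a\ge\tfrac\alpha2r^{-\alpha/2-1}-Cr^{-3\alpha/2-1}$, and since $r^\alpha r^{-\alpha/2-1}=r^{\alpha/2-1}\le Cf^{-1}$ this term reinforces (i). (iii)~The commutator of $A$ with $\tfrac12p_jr^\alpha\ell_{jk}p_k$; using $[p^f,p_k]=ir^{-\alpha/2-1}(\delta_{jk}-(1+\tfrac\alpha2)\hat x_j\hat x_k)p_j$, the fact that $r^{-\alpha/2-1}$ is comparable to $f^{-1}r^{-\alpha}$, and the positive semidefiniteness of $\ell_{jk}$, its leading part is the positive term $c\,p_jf^{-1}\Theta\ell_{jk}p_k$, the remaining pieces being of lower order. (iv)~The $q_2$-terms $\tfrac1{2i}\bigl((A\mp a)^*\Theta q_2-\bar q_2\Theta(A\mp a)\bigr)$, together with all commutators acting only on $\bar\chi_{m+2}$, the latter being supported in the fixed compact shell $\{2^{m+2}\le f\le2^{m+3}\}$ and therefore harmless.

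Next I would estimate all remaining contributions. Using Lemma~\ref{lem:theta-inequality} (in particular $\theta\le C$, $\theta'\le f^{-1}\theta$, $|\theta^{(k)}|\le C_kf^{-k}\theta$, and $\theta'\ge cf^{-1-\delta}\theta$ since $\min\{2^\nu,f\}\ge1$, all uniformly in $\nu$), the bounds \eqref{eq:1912171445} on $a$, the estimate \eqref{eq:1912171524} on $q_2$, and Condition~\ref{cond:short-range}, each cross term that pairs a single $(A\mp a)$-factor with a lower-order coefficient is absorbed by an $\varepsilon$-Cauchy--Schwarz inequality into $\varepsilon\,(A\mp a)^*\bar\chi_{m+2}\theta'\theta^{2\beta-1}(A\mp a)+C_\varepsilon f^{-1-2\min\{\rho+1/(1-\alpha/2),\,1+2\alpha/(1-\alpha/2)\}+\delta}\theta^{2\beta}$; the exponent here is exactly what is forced by the decay of $q_2$ (recall that $r^{-1}$ is comparable to $f^{-1/(1-\alpha/2)}$) and of the subleading part $r^{-3\alpha/2-1}$ of $\mathrm{Im}\,a$. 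Finally, any leftover term still carrying an unexpanded factor $(A\pm a)r^\alpha(A\mp a)$ is rewritten, via \eqref{eq:1912161558} read backwards, as $2(H_\alpha-z)-p_jr^\alpha\ell_{jk}p_k-2q_2$; the $(H_\alpha-z)$-piece goes into $\mathrm{Re}(\gamma\theta^{2\beta}(H_\alpha-z))$, one checks from the accumulated scalar coefficients that $|\gamma|\le Cr^{-2-\alpha}f^{-1-2\rho+\delta}$, and the $p_jr^\alpha\ell_{jk}p_k$ and $q_2$ pieces are absorbed as above. Assembling these estimates gives the stated inequality.

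The main obstacle is the bookkeeping in the second step: every commutator must be tracked precisely enough to be matched against one of the three positive terms, or shown to fall below the threshold $f^{-1-2\min\{\rho+1/(1-\alpha/2),\,1+2\alpha/(1-\alpha/2)\}+\delta}\theta^{2\beta}$, or identified inside the $\gamma\theta^{2\beta}(H_\alpha-z)$ remainder with $|\gamma|\le Cr^{-2-\alpha}f^{-1-2\rho+\delta}$ --- all of this uniformly in $z\in I_\pm$ and $\nu\in\N_0$. The single most delicate point is the $q_2$-contribution~(iv): since $q_2$ is complex and only barely decaying, one must extract an additional $\partial^f$ from the commutator $[A,\Theta q_2]$ and use the explicit expression for $\mathrm{Im}\,q_2$ obtained in the proof of Lemma~\ref{lem:1912161547}, and the slowest of the resulting pieces is precisely the one that has to be moved into the $\mathrm{Re}(\gamma\theta^{2\beta}(H_\alpha-z))$ remainder rather than estimated directly.
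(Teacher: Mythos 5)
Your skeleton coincides with the paper's: insert the factorization \eqref{eq:1912161558} into $(A\mp a)^*\Theta(H_\alpha-z)$, take imaginary parts, sort the commutators into the two positive quadratic forms, and dispose of the remainders by $\varepsilon$-Cauchy--Schwarz and by conversion into $\mathop{\mathrm{Re}}(\gamma\theta^{2\beta}(H_\alpha-z))$. But two of your ``up to lower order'' claims are false, and they are exactly the points where the hypotheses of the lemma are consumed. First, in your group (i) the commutator of $A$ with $\Theta r^\alpha$ contributes, besides the desired $\beta(A\mp a)^*\bar\chi_{m+2}\theta'\theta^{2\beta-1}(A\mp a)$, the negative piece $-\tfrac{\alpha}{2}(A\mp a)^*\Theta r^{\alpha/2-1}(A\mp a)$ coming from $[A,r^\alpha]=-i\alpha r^{\alpha/2-1}$. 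This is of size $f^{-1}\Theta$ and is \emph{not} dominated by $\theta'\theta^{2\beta-1}$ (for $f\gg 2^\nu$ one has $\theta'\sim 2^{\nu\delta}f^{-1-\delta}\theta\ll f^{-1}\theta$); it is cancelled \emph{exactly} by the contribution $\pm(A\mp a)^*\Theta(\mathop{\mathrm{Im}}a)r^\alpha(A\mp a)$ via the lower bound in \eqref{eq:1912171445}. Describing group (ii) as merely ``reinforcing'' (i) misses that without this precise cancellation --- the very reason $a$ carries the imaginary part $\pm\tfrac{i\alpha}{2}r^{-\alpha/2-1}$ --- the argument collapses. Second, in your group (iii) the derivative of the weight produces $-\beta p_j\bar\chi_{m+2}\theta'\theta^{2\beta-1}\ell_{jk}p_k$, with the sign \emph{opposite} to the corresponding $(A\mp a)$-sandwiched term; it is of the same order as the positive term $p_jr^{\alpha/2-1}\Theta\ell_{jk}p_k$ and must be absorbed into it. Since $r^{\alpha/2-1}\sim f^{-1}/(1-\alpha/2)$ on $\mathop{\mathrm{supp}}\bar\chi_{m+2}$ and the total positive coefficient there is $1+\alpha/2-\varepsilon_1$, absorption requires $(1-\alpha/2)\beta<1+\alpha/2$, i.e.\ $\beta<1+\alpha/(1-\alpha/2)$. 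Your sketch never locates where this hypothesis is used, which it could not do after declaring the term lower order.

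A further misdiagnosis: you single out the $q_2$-term as the most delicate step and propose to integrate by parts in $[A,\Theta q_2]$ using the explicit form of $\mathop{\mathrm{Im}}q_2$. In the paper it is the easiest term: a single Cauchy--Schwarz gives $2\mathop{\mathrm{Im}}\bigl((A\mp a)^*\Theta q_2\bigr)\ge-\varepsilon_2(A\mp a)^*f^{-1-\delta}\Theta(A\mp a)-C\varepsilon_2^{-1}f^{1+\delta}|q_2|^2\Theta$, the first term being absorbed via $\theta'\ge cf^{-1-\delta}\theta$ (this is where $\delta<\alpha/(1-\alpha/2)$ and the modulus bound \eqref{eq:1912171524} enter), the second falling below the stated scalar threshold. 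Likewise the $\mathop{\mathrm{Re}}(\gamma\theta^{2\beta}(H_\alpha-z))$ remainder does not arise from re-reading \eqref{eq:1912161558} backwards on a leftover $(A\pm a)r^\alpha(A\mp a)$ factor (there is none); it comes from the gradient-type remainder $p_jf^{-1-2\rho+\delta}r^{-2-\alpha}\theta^{2\beta}\delta_{jk}p_k$ (with $\delta_{jk}$, not $\ell_{jk}$) generated by the Cauchy--Schwarz estimates in the $p_jr^\alpha\ell_{jk}p_k$ group, which is rewritten through $p^2=2(H_\alpha-z)+r^\alpha-2q+2z$; this is what fixes $\gamma$ and the bound $|\gamma|\le Cr^{-2-\alpha}f^{-1-2\rho+\delta}$. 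So while the overall route is the right one, the proposal as written would not close.
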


\begin{proof}
Fix $\beta\in(0,1+\alpha/(1-\alpha/2))$ 
and $\delta\in(0,\alpha/(1-\alpha/2))$ as in the assertion.
By the expression \eqref{eq:1912161558} we can write
\begin{equation}\label{eq:2001061716}
\begin{split}
&\quad\hspace{-3mm}
2\mathop{\mathrm{Im}}\bigl((A\mp a)^*\Theta(H_\alpha-z)\bigr) 
\\&= 
\mathop{\mathrm{Im}}\bigl((A\mp a)^*\Theta(A\pm a)r^\alpha(A\mp a) \bigr) 
+ \mathop{\mathrm{Im}}\bigl((A\mp a)^*\Theta p_jr^\alpha\ell_{jk}p_k \bigr) 
\\&\phantom{{}={}}{}
+ 2\mathop{\mathrm{Im}}\bigl((A\mp a)^*\Theta q_2\bigr).
\end{split}
\end{equation}
Let us estimate each term.
By the bounds \eqref{eq:1912171445}, 
we can estimate the first term of \eqref{eq:2001061716} as 
\begin{equation}\label{eq:2001061803}
\begin{split}
&\quad\hspace{-3mm}
\mathop{\mathrm{Im}}\bigl((A\mp a)^*\Theta(A\pm a)r^\alpha(A\mp a) \bigr) 
\\&= 
\mathop{\mathrm{Im}}\bigl((A\mp a)^*\Theta Ar^\alpha(A\mp a) \bigr) 
\pm (A\mp a)^*\Theta\bigl(\mathop{\mathrm{Im}}a \bigr) r^\alpha(A\mp a)
\\&\ge 
\beta(A\mp a)^*\bar\chi_{m+2}\theta'\theta^{2\beta-1}(A\mp a)  
- \tfrac\alpha2(A\mp a)^*\Theta r^{\alpha/2-1}(A\mp a)
\\&\phantom{{}={}}{}
+ \tfrac\alpha2(A\mp a)^*\Theta r^{\alpha/2-1}(A\mp a)
- C_1(A\mp a)^*\Theta r^{-\alpha/2-1}(A\mp a).
\\&= 
\beta(A\mp a)^*\bar\chi_{m+2}\theta'\theta^{2\beta-1}(A\mp a)  
- C_1(A\mp a)^*\Theta r^{-\alpha/2-1}(A\mp a).
\end{split}
\end{equation}
As for the second term of \eqref{eq:2001061716} 
we first decompose into four terms as follows.
\begin{equation}\label{eq:2001061825}
\begin{split}
\mathop{\mathrm{Im}}\bigl((A\mp a)^*\Theta p_jr^\alpha\ell_{jk}p_k \bigr) 
&= 
\mathop{\mathrm{Im}}\bigl((A\mp a)^*p_j\Theta r^\alpha\ell_{jk}p_k \bigr) 
\\&= 
\mathop{\mathrm{Im}}\bigl(p_jA\Theta r^\alpha\ell_{jk}p_k \bigr) 
+ \mathop{\mathrm{Im}}\bigl([A, p_j]\Theta r^\alpha\ell_{jk}p_k \bigr) 
\\&\phantom{{}={}}{}
\mp p_j\bigl(\mathop{\mathrm{Im}} a^*\bigr)\Theta r^\alpha\ell_{jk}p_k 
\mp \mathop{\mathrm{Im}}\bigl([a^*, p_j]\Theta r^\alpha\ell_{jk}p_k \bigr). 
\end{split}
\end{equation}
We further compute and estimate each term of \eqref{eq:2001061825}.
Noting the equality \eqref{eq:2001062031}, 
we can compute the first and second terms as 
\begin{equation}\label{eq:2001062033}
\begin{split}
&\quad\hspace{-3mm}
\mathop{\mathrm{Im}}\bigl(p_jA\Theta r^\alpha\ell_{jk}p_k \bigr) 
+ \mathop{\mathrm{Im}}\bigl([A,p_j]\Theta r^\alpha\ell_{jk}p_k \bigr) 
\\&= 
\tfrac1{2i}p_j\left\{A\Theta r^\alpha\ell_{jk}-\ell_{jk}r^\alpha\Theta A\right\}p_k 
+ \mathop{\mathrm{Re}}\bigl(p_i(\partial_i\partial_jf)\Theta r^\alpha\ell_{jk}p_k \bigr) 
\\&= 
- \beta p_j\bar\chi_{m+2}\theta'\theta^{2\beta-1}\ell_{jk}p_k 
- \tfrac12p_j\bar\chi_{m+2}'\theta^{2\beta}\ell_{jk}p_k 
+ p_jr^{\alpha/2-1}\Theta\ell_{jk}p_k.
\end{split}
\end{equation}
By using the bounds \eqref{eq:1912171445}, 
we estimate the third term of \eqref{eq:2001061825} as 
\begin{equation}\label{eq:2001062204}
\mp p_j\bigl(\mathop{\mathrm{Im}} a^*\bigr)\Theta r^\alpha\ell_{jk}p_k 
\ge 
\tfrac\alpha2p_jr^{\alpha/2-1}\Theta\ell_{jk}p_k 
- C_2p_jr^{-\alpha/2-1}\Theta\ell_{jk}p_k.
\end{equation}
To estimate the fourth term of \eqref{eq:2001061825}, 
we use Condition~\ref{cond:short-range} and the Cauchy-Schwarz inequality. 
For any small $\varepsilon_1>0$ we have 
\begin{equation}\label{eq:2001071331}
\begin{split}
&\quad\hspace{-3mm}
\mp \mathop{\mathrm{Im}}\bigl([a^*, p_j]\Theta r^\alpha\ell_{jk}p_k \bigr) 
\\&= 
\mp \mathop{\mathrm{Re}}\bigl((\partial_j a)^*\Theta r^\alpha\ell_{jk}p_k \bigr) 
\\&\ge 
- \varepsilon_1 p_jf^{-1-\delta}\Theta\ell_{jk}p_k 
- C_3\varepsilon_1^{-1}(\partial_j a)^*\Theta f^{1+\delta}r^{2\alpha}\ell_{jk}(\partial_k a)
\\&\ge 
- \varepsilon_1 p_jr^{\alpha/2-1}\Theta\ell_{jk}p_k 
- C_4\varepsilon_1^{-1}f^{-1-2\rho}r^{-2}\Theta.
\end{split}
\end{equation}
We substitute the bounds \eqref{eq:2001062033}, \eqref{eq:2001062204} 
and \eqref{eq:2001071331} into \eqref{eq:2001061825}, and then we obtain 
\begin{equation*}
\begin{split}
&\quad\hspace{-3mm}
\mathop{\mathrm{Im}}\bigl((A\mp a)^*\Theta p_jr^\alpha\ell_{jk}p_k \bigr) 
\\&\ge 
(1+\alpha/2-\varepsilon_1)p_jr^{\alpha/2-1}\Theta\ell_{jk}p_k 
- \beta p_j\bar\chi_{m+2}\theta'\theta^{2\beta-1}\ell_{jk}p_k 
\\&\phantom{{}={}}{}
- C_2p_jr^{-\alpha/2-1}\Theta\ell_{jk}p_k 
- \tfrac12p_j\bar\chi_{m+2}'\theta^{2\beta}\ell_{jk}p_k 
- C_4\varepsilon_1^{-1}f^{-1-2\rho}r^{-2}\Theta.
\end{split}
\end{equation*}
Let us fix $\varepsilon_1\in(0,(1+\alpha/2)-(1-\alpha/2)\beta)$.
Then we have 
\begin{equation}\label{eq:2001071550}
\mathop{\mathrm{Im}}\bigl((A\mp a)^*\Theta p_jr^\alpha\ell_{jk}p_k \bigr) 
\ge 
c_1p_jf^{-1}\Theta\ell_{jk}p_k 
- C_5Q,
\end{equation}
where 
\begin{equation*}
Q = r^{-2}f^{-1-2\min\{\rho,(3\alpha/2)/(1-\alpha/2)\}+\delta}\theta^{2\beta} 
+ p_j f^{-1-2\rho+\delta}r^{-2-\alpha}\theta^{2\beta}\delta_{jk}p_k.
\end{equation*}
Let $\varepsilon_2>0$. 
Using the Cauchy-Schwarz inequality and \eqref{eq:1912171524}, 
we can estimate the third term of \eqref{eq:2001061716} by 
\begin{equation}\label{eq:2001072321}
\begin{split}
2\mathop{\mathrm{Im}}\bigl((A\mp a)^*\Theta q_2\bigr) 
&\ge 
- \varepsilon_2(A\mp a)^*f^{-1-\delta}\Theta(A\mp a) 
- C_6\varepsilon_2^{-1}f^{1+\delta}|q_2|^2\Theta
\\&\ge 
- \varepsilon_2(A\mp a)^*f^{-1-\delta}\Theta(A\mp a) 
- C_7\varepsilon_2^{-1}Q.
\end{split}
\end{equation}
By combining \eqref{eq:2001061716}, \eqref{eq:2001061803}, 
\eqref{eq:2001071550} and \eqref{eq:2001072321} we have 
\begin{equation*}
\begin{split}
&\quad\hspace{-3mm}
2\mathop{\mathrm{Im}}\bigl((A\mp a)^*\Theta(H_\alpha-z)\bigr) 
\\&\ge 
\beta(A\mp a)^*\bar\chi_{m+2}\theta'\theta^{2\beta-1}(A\mp a)  
- \varepsilon_2(A\mp a)^*f^{-1-\delta}\Theta(A\mp a) 
\\&\phantom{{}={}}{}
- C_1(A\mp a)^*\Theta r^{-\alpha/2-1}(A\mp a) 
+ c_1p_jf^{-1}\Theta\ell_{jk}p_k 
- (C_5+C_7\varepsilon_2^{-1})Q.
\end{split}
\end{equation*}
By taking $\varepsilon_2>0$ small enough, we can obtain 
\begin{equation}\label{eq:2001080023}
\begin{split}
&\quad\hspace{-3mm}
2\mathop{\mathrm{Im}}\bigl((A\mp a)^*\Theta(H_\alpha-z)\bigr) 
\\&\ge 
c_2(A\mp a)^*\bar\chi_{m+2}\theta'\theta^{2\beta-1}(A\mp a)  
+ c_1p_jf^{-1}\Theta\ell_{jk}p_k 
- C_8Q.
\end{split}
\end{equation}
Finally we estimate the remainder term $Q$ as 
\begin{equation}\label{eq:2001091001}
Q 
\le 
C_9r^{-2}f^{-1-2\min\{\rho,(3\alpha/2)/(1-\alpha/2)\}+\delta}\theta^{2\beta} 
+ \mathop{\mathrm{Re}}\left( f^{-1-2\rho+\delta}r^{-2-\alpha}\theta^{2\beta}(H_\alpha-z)\right).
\end{equation}
By substituting \eqref{eq:2001091001} into \eqref{eq:2001080023} 
we can obtain the desired bounds.
\end{proof}

\begin{proof}[Proof of Theorem~\ref{thm:rc-complex}]
We consider only for the upper sign for simplicity.
If $\beta=0$, the bounds \eqref{eq:2001091149} follow immediately 
from Theorem~\ref{thm:lap-bound}.
We let $\beta\in(0,\beta_c)$ and 
take any $\delta\in(0,2\min\{\rho+1/(1-\alpha/2),1+2\alpha/(1-\alpha/2)\}-2\beta)\cap(0,\alpha/(1-\alpha/2)), \psi\in f^{-\beta}\vB$ and $z\in I_+$.
Then by Lemma~\ref{lem:rc-bound-key-est}, the Cauchy-Schwarz inequality, 
Theorem~\ref{thm:lap-bound} and Lemma~\ref{lem:theta-inequality}
\begin{equation}\label{eq:2001141532}
\begin{split}
&
\bigl\|\bar\chi_{m+2}^{1/2}\theta'^{1/2}\theta^{\beta-1/2}(A-a)R(z)\psi\bigr\|^2 
+ \bigl\langle p_jf^{-1}\Theta\ell_{jk}p_k \bigr\rangle_{R(z)\psi} 
\\&\le 
C_1\Bigl[ 
\bigl\|\Theta^{1/2}(A-a)R(z)\psi\bigr\|_{\vB^*}\|\theta^\beta\psi\|_\vB 
\\&\phantom{{}={}C_1\Bigl[ }{}
+ \bigl\|f^{-(1+2\min\{\rho+1/(1-\alpha/2),2\alpha/(1-\alpha/2)\}-\delta)/2}\theta^\beta R(z)\psi\bigr\|^2
\\&\phantom{{}={}C_1\Bigl[ }{}
+
\bigl\|f^{-(1+2\rho+1/(1-\alpha/2)-\delta)/2}\theta^\beta R(z)\psi\bigr\| 
\bigl\|f^{-(1+2\rho+1/(1-\alpha/2)-\delta)/2}\theta^\beta\psi\bigr\|
\Bigr]
\\&\le 
C_22^{-2\beta\nu}\Bigl[ 
\bigl\|\bar\chi_m^{1/2}f^\beta(A-a)R(z)\psi\bigr\|_{\vB^*}\|f^\beta\psi\|_\vB 
+ \|f^\beta\psi\|_\vB^2\Bigr].
\end{split}
\end{equation}
By commuting $R(z)$ and powers of $f$, we can see 
$f^\beta(A-a)R(z)\psi\in\vB^*$ for each $z\in I_+$.
Thus the right-hand side of \eqref{eq:2001141532} is finite, and then it follows that 
\begin{equation}\label{eq:2001151412}
\begin{split}
&
2^{2\beta\nu}\bigl\|\bar\chi_{m+2}^{1/2}\theta'^{1/2}\theta^{\beta-1/2}(A-a)R(z)\psi\bigr\|^2 
+ 2^{2\beta\nu}\bigl\langle p_jf^{-1}\Theta\ell_{jk}p_k \bigr\rangle_{R(z)\psi} 
\\&\le 
C_2\Bigl[ 
\bigl\|\bar\chi_m^{1/2}f^\beta(A-a)R(z)\psi\bigr\|_{\vB^*}\|f^\beta\psi\|_\vB 
+ \|f^\beta\psi\|_\vB^2\Bigr].
\end{split}
\end{equation}
In the first term on the left-hand side of \eqref{eq:2001151412}, 
we restrict the integral region to $\{2^\nu\le f<2^{\nu+1}\}$ 
and take supremum in $\nu\in\N$.
Then we obtain 
\begin{equation*}
\begin{split}
&\quad\hspace{-3mm}
c_1\bigl\|\bar\chi_{m+2}^{1/2}f^\beta(A-a)R(z)\psi\bigr\|^2 
\\&\le 
C_2\Bigl[ 
\bigl\|\bar\chi_m^{1/2}f^\beta(A-a)R(z)\psi\bigr\|_{\vB^*}\|f^\beta\psi\|_\vB 
+ \|f^\beta\psi\|_\vB^2\Bigr].
\end{split}
\end{equation*}
By the Cauchy-Schwarz inequality we can deduce 
\begin{equation}\label{eq:2001151455}
\bigl\|\bar\chi_{m+2}^{1/2}f^\beta(A-a)R(z)\psi\bigr\| 
\le 
C_3\|f^\beta\psi\|_\vB^2.
\end{equation}
As for the second term on the left-hand side of \eqref{eq:2001151412}, 
we apply the bound \eqref{eq:2001151455} and take limit  $\nu\to\infty$, 
and then we obtain by the Lebesgue's monotone convergence theorem 
and the concavity of $\theta$
\begin{equation}\label{eq:2001151510}
\bigl\langle p_j\bar\chi_{m+2}f^{2\beta-1}\ell_{jk}p_k \bigr\rangle_{R(z)\psi}^{1/2} 
\le 
C_4\|f^\beta\psi\|_\vB.
\end{equation}
Taking into account Theorem~\ref{thm:lap-bound}, 
we can replace the cut-off $\bar\chi_{m+2}$ of the bounds \eqref{eq:2001151455} 
and \eqref{eq:2001151510} to $1$.
Hence we are done.
\end{proof}

\section*{Acknowledgement}

The author would like to express his gratitude to 
Kenichi~Ito and Erik~Skibsted 
for constructive comments on the draft of the paper 
and warm encouragement.



\begin{thebibliography}{99}
\bibitem{ACH} S.~Agmon, J.~Cruz, I.~Herbst, 
Generalized Fourier transform for Sch\"odinger operators with potantials of order zero,\ 
J.\ Funct.\ Anal. 167 (1999) 345--369.

\bibitem{AH} S.~Agmon, L.~H\"ormander, 
Asymptotic properties of solutions of differential equations with simple characteristics,\ 
J.\ d'Anal.\ Math.\ 30 (1976) 1--38.

\bibitem{AIIS} T.~Adachi, K.~Itakura, K.~Ito, E.~Skibsted, 
Spectral theory for 1-body Stark operators,\ 
J.\ Differential Equations 268 (2020) 5179--5206.

\bibitem{BCHM} J.~F.~Bony, R.~Carles, D.~H\"afner, L.~Michel, 
Scattering theory for the Schr\"odinger equation with repulsive potential,\ 
J.\ Math.\ Pures Appl.\ 84 (2005) 509--579.

\bibitem{BM} V.~S.~Buslaev, V.~B.~Matveev, 
Wave operators for the Schr\"odinger equation with a slowly decreasing potential,\ 
Theor.\ Math.\ Phys.\ 2 (1970) 266--274, (English trans.\ from Russian). 

\bibitem{GY} Y.~G\^atel, D.~Yafaev, 
On solutions of the Schr\"odinger equation with radiation conditions at infinity: the long-range case,\ 
Ann.\ Inst.\ Fourier, Grenoble.\ 49, 5 (1999) 1581--1602.

\bibitem{Ho} L.~H{\"o}rmander, 
The existence of wave operators in scattering theory,\ 
Math.\ Z.\ 146 (1976) 69--91. 

\bibitem{HS} B.~Helffer, J.~Sj\"ostrand, 
Equation de Schr\"odinger avec champ magn\'etique et \'equation de Harper,\ 
Lecture notes in Phys.\ 345, Schr\"odinger operators, 118--197, eds. H.~Holden, A~Jensen, Springer, Berlin--Heidelberg--New York (1989).

\bibitem{Ike} T.~Ikebe, 
Spectral representations for Schr\"odinger operators with long--range potentials,\ 
J.\ Functional\ Analysis.\ 20 (1975) 158--177.

\bibitem{II} T.~Ikebe, H.~Isozaki, 
A stationary approach to the existence and completeness of long-range  wave operators,\ 
Integral Equations and Operator Theory\ 5 (1982) 18--49. 

\bibitem{Ishi} A.~Ishida, 
On inverse scattering problem for the Schr\"odinger equation with repulsive potentials,\ 
J.\ Math.\ Phys.\ 55 (2014) no. 8, 082101, 12 pp. 

\bibitem{Iso} H.~Isozaki, 
Eikonal equations and spectral representations for long--range Schr\"odinger Hamiltonians,\ 
J.\ Math.\ Kyoto Univ.\ 20 (1980) 243--261.

\bibitem{Ita1} K.~Itakura,
Rellich's theorem for spherically symmetric repulsive Hamiltonians,\ 
Math.\ Z.\ 291 (2019) no.\ 3, 1435--1449.

\bibitem{Ita2} K. Itakura,
Limiting absorption principle and radiation condition for repulsive Hamiltonians,\ 
To appear in Funkcial.\ Ekvac.

\bibitem{IS1} K.~Ito, E.~Skibsted, 
Radiation condition bounds on manifolds with ends,\ 
J.\ Funct.\ Anal.\ 278 (2020) no.\ 9, 108449

\bibitem{IS2} K.~Ito, E.~Skibsted, 
Stationary scattering theory on manifolds,\ 
To appear on Ann.\ Inst.\ Fourier\ (Grenoble) 

\bibitem{JP} A. Jensen, P. Perry, 
Commutator methods and Besov space estimates for {S}chr\"odinger operators,\ 
J.\ Operator Theory 14 (1985) 181--188.

\bibitem{M} R.~B.~Melrose, 
Spectral and scattering theory for the Laplacian on asymptotically Euclidian spaces,\ 
Marcel Dekker (1994) pp.\ 85--130.

\bibitem{MZ} R.~B.~Melrose, M.~Zworski, 
Scattering metrics and geodesic flow at infinity,\ 
Invent.\ Math.\ 124 (1996) 389--436. 

\bibitem{Mo} E.~Mourre, 
Absence of singular continuous spectrum for certain selfadjoint operators,\ 
Comm.\ Math.\ Phys.\ 78 (1980/81) no.~3, 391--408.

\bibitem{N} F.~Nicoleau, 
Inverse scattering for a Scr\"odinger operator with a repulsive potential,\ 
Acta Math.\ Sin.\ (Engl. Ser.) 22 (2006) no. 5, 1485--1492.

\bibitem{RS} M.~Reed, B.~Simon, 
Methods of modern mathematical physics. II. Fourier analysis, self-adjointness,\ 
Academic Press, New York-London, 1975. 

\bibitem{Sa} Y.~Saito, 
On the S-matrix for Schrodinger operators with long--range potentials,\ 
J.\ reine Angew.\ Math.\ 314 (1980) 99--116. 

\bibitem{Sk} E. Skibsted, 
Renormalized two-body low-energy scattering,\ 
Journal d'Analyse Math\'e-matique 122 (2014) 25--68.

\bibitem{Y} K.~Yosida, 
Functional analysis,\ 
Springer--Verlag, 1966.
\end{thebibliography}
\end{document}